\newcommand{\e}{\varepsilon}
\newcommand{\N}{\mathbb{N}}
\newcommand{\Z}{\mathbb{Z}}
\newcommand{\R}{\mathbb{R}}
\DeclareMathOperator{\supp}{supp}
\DeclareMathOperator*{\argmax}{arg\,max}
\DeclareMathOperator*{\argmin}{arg\,min}
\DeclareMathOperator*{\E}{\mathbb{E}}
\DeclareMathOperator{\convhull}{conv}
\DeclareMathOperator{\prefix}{\mathcal{S}}
\DeclarePairedDelimiter{\iop}{(}{)}
\DeclarePairedDelimiter{\ios}{\{}{\}}
\DeclarePairedDelimiter{\absolute}{|}{|}
\DeclarePairedDelimiter{\normm}{\lVert}{\rVert}
\newcommand{\norm}{\normm*}
\newcommand{\p}{\iop*}
\newcommand{\s}{\ios*}
\newcommand{\abs}{\absolute*}
\newcommand{\eqdef}{\stackrel{\text{def}}{=}}
\newcommand{\emn}[2]{\mathcal{E}_{#1,#2}}
\newcommand{\pmax}{p_{max}}
\newcommand{\pmin}{p_{min}}
\begin{document}
\title{Correlated vs. Uncorrelated Randomness in Adversarial Congestion Team Games}
\titlerunning{Correlated vs. Uncorrelated Randomness in Congestion Games}
%
\author{Edan Orzech \and Martin Rinard}
%
\authorrunning{E. Orzech and M. Rinard}
%
\institute{MIT, Cambridge MA 02139, USA\\
\email{\{iorzech,rinard\}@csail.mit.edu}}
%
\maketitle              

\begin{abstract}
We consider team zero-sum network congestion games with $n$ agents playing against $k$ interceptors over a graph $G$. The agents aim to minimize their collective cost of sending traffic over paths in $G$, which is an aggregation of edge costs, while the interceptors aim to maximize the collective cost by increasing some of these edge costs. To evade the interceptors, the agents will usually use randomized strategies. We consider two cases, the correlated case when agents have access to a shared source of randomness, and the uncorrelated case, when each agent has access to only its own source of randomness. We study 
the additional cost that uncorrelated agents have to bear, specifically by comparing the costs incurred by agents in cost-minimal Nash equilibria when agents can and cannot share randomness.

We consider two natural cost functions on the agents, which measure the invested energy and time, respectively. We prove that for both of these cost functions, the ratio of uncorrelated cost to correlated cost at equilibrium is $O(\min(m_c(G),n))$, where $m_c(G)$ is the mincut size of $G$. This bound is much smaller than the most general case, where a tight, exponential bound of $\Theta((m_c(G))^{n-1})$ on the ratio is known.
We also introduce a set of simple agent strategies which are approximately optimal agent strategies. We then establish conditions for when these strategies are optimal agent strategies for each cost function, showing an inherent difference between the two cost functions we study.

\keywords{Congestion game \and Bounded capabilities \and Randomness}
\end{abstract}
\section{Introduction}
Network congestion games are central to the field of game theory \cite{koutsoupias1999worst,rosenthal1973class,roughgarden2002bad}. Recently, there has been work on team network congestion games, where a team of agents cooperate in a network congestion game against an adversary \cite{harks2022multi,bilo2023uniform}. Team network congestion games are motivated by the problem of fare evasion \cite{correa2017fare,harks2022multi}.
There, agents attempt to reach their destination while minimizing the amount of fare they pay, while some central authority attempts to counter this phenomenon with a policy that minimizes fare evasion on the network.
These games are also motivated by security applications (see \cite{sinha2018stackelberg} for a survey).
There, the agents attempt to attack a set of targets (for example entering a train station via a set of possible entrances), while a team of interceptors attempts to defend those targets with limited resources (for example by conducting inspections).

We consider a team zero-sum network congestion game in which a team of agents sends traffic
on a network of roads between two locations. 
Traversing each road segment incurs a cost, with the team attempting to minimize an aggregation of the segment costs. We consider two versions: the SUM version (Section~\ref{sec:sum}) and the MAX version (Section~\ref{sec:max}). In the SUM version, the agents minimize the total traversal cost, called the SUM cost. In the MAX version, the agents minimize the total traversal cost, called the MAX cost. Intuitively, the SUM cost measures the amount of total expended resources (such as transportation energy or tolls), while the MAX cost measures the amount of time required to transport the agents between the two locations.
On the other side, a team of interceptors acts against the agents by imposing costs on selected road segments, causing the agents to expend additional resources to traverse these segments. 

We model the road network with a directed acyclic graph $G$, possibly with multiple edges between a pair of vertices, with a source $s$ and a sink $t$. The edges model roads, the vertices $s,t$ model the origin and destination, while the remaining vertices are intersections. There is a team of $n$ agents and a team of $k$ interceptors playing on $G$. Each agent selects an $s-t$ path in $G$; each interceptor selects an edge of the graph $G$. Traversing an edge imposes a one time unit cost of 1, with an additional cost of $\alpha$ on each edge that an interceptor selects. Selected edge costs are additive; i.e., if two interceptors select the same edge, the additional cost is $2\alpha$.
By this we model settings where the interceptors can increase the cost at road segments, either by collecting additional tolls or by increasing congestion.

As usual, the agents and interceptors can use randomization to mix their strategies. We consider two simple communication assumptions between the agents. In the first case, agents are unable to communicate any random bits to each other (i.e., they are uncorrelated). In the second case, they can (i.e., they can be correlated). These two cases model scenarios where the interceptors may or may not be able to eavesdrop on the agents' communication. These cases also model scenarios where the agents' ability to communicate during the actual play is either disabled or not. We allow the interceptors to communicate freely and securely during the game. Therefore, they can be seen as one interceptor. However, for the sake of presentation, we choose to treat them as separate interceptors who make single and coordinated actions, rather than once interceptor who makes multiple actions.

To understand the impact of different (team) randomization capabilities, we measure the {\it price of uncorrelation} (PoU) of the game, an inefficiency index that was first introduced by \citet{basilico2017team}. There, the authors define PoU to be the ratio of the cost at a team-maxmin equilibrium (TME) of the game, to the cost at a correlated TME (TMECor) of the game. A TME (\cite{von1997team}) is a maxmin equilibrium between a team of uncorrelated agents and a single adversary, while in a TMECor the team agents can be correlated (see Section~\ref{sec:lit} for more details). The PoU of the model characterizes the inapproximability (in terms of cost) of correlated agent strategies using uncorrelated strategies: If the PoU is $r$, then the best approximation factor of any algorithm, regardless of time or space complexity, is exactly $\frac{1}{r}$ (this follows from the definition).

\citet{basilico2017team} show that in general, normal-form games, when the $n$ team agents have $m$ available strategies each,
they can incur an exponential PoU of $m^{n-1}$. As we discuss next, in our model the PoU is linear. This means that in the settings that we model, correlated randomness and communication in general between the agents is less valuable than in the games considered by \citet{basilico2017team}. Intuitively, the reason is as follows: We model scenarios where interceptors can increase the cost at road segments by only a finite amount (but not completely deny agents the ability to use the road segment).
On the other hand, \citet{basilico2017team} model scenarios where interceptors can completely prevent agents from traversing the graph (if an agent attempts to traverse a road segment that was essentially removed by an interceptor).
\\\\
{\bf Illustrating example\ }
We illustrate the PoU in our game in the following example.

\begin{example}
Consider the graph in Figure \ref{fig1}, with $m$ disjoint paths between $s$ and $t$ of the same length.
\begin{figure}[H]
\centering
\begin{tikzpicture}
    \tikzset{vertex/.style = {shape=circle,draw,minimum size=2.5em}}
    \tikzset{edge/.style = {->,> = latex'}}
    \node[vertex] (s) at (0,0) {$s$};
    \node[vertex,draw=none] (a) at (4,0) {$\vdots$};
    \node[vertex] (t) at (8,0) {$t$};
    \draw[edge] [bend left=35] (s) to node{} (t);
    \draw[edge] [bend left=15] (s) to node{} (t);
    \draw[edge] [bend right=20] (s) to node{} (t);
\end{tikzpicture}
\caption{Illustrating example}
\label{fig1}
\end{figure}
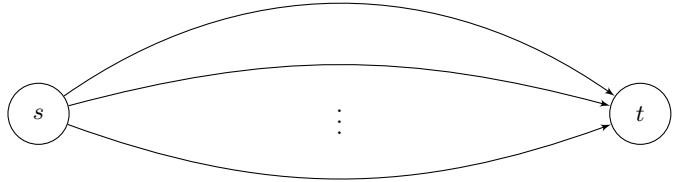
Suppose there are $n$ agents and one interceptor, playing with respect to the SUM cost function. Fix $\alpha>0$. When agents are correlated, the resulting equilibrium is one where agents pick together a path that is sampled from the $n$ paths of the graph uniformly at random.
The interceptor also picks a path uniformly at random. Then, the expected cost to the agents is $1+\frac{\alpha}{m}$. Now suppose that the agents are uncorrelated. Then one can show that the optimal strategy for the agents is having each agent independently pick a path uniformly at random, and the interceptor's strategy is unchanged. Then, the expected cost to the agents is $m\p{1-\p{1-\frac{1}{m}}^n}\p{1+\frac{\alpha}{m}}$. So the PoU is $m\p{1-\p{1-\frac{1}{m}}^n}=\Theta(\min(m,n))$. We note that in \cite{basilico2017team}, a game is considered which is played on the graph in Figure \ref{fig1}, but there the PoU is $m^{n-1}$.
\end{example}

\subsection{Our results}\label{subsec:results}

For both SUM and MAX costs, we establish bounds on the PoU (Sections \ref{sec:sum} and \ref{sec:max}), summarized as follows:

\begin{theorem}[Informal version of Theorem \ref{thm:general costs r bounds} and Corollary \ref{cor:general costs r bounds}]\label{thm:informal pou bounds}
The PoU, as a function of $\alpha$, denoted $r(\alpha)$, satisfies $r(\alpha)=\Theta(\min(m_c(G),n))$, where $m_c(G)$ is the size of a minimum cut of $G$.
\end{theorem}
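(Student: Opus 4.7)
The plan is to prove matching upper and lower bounds of $\Theta(\min(m_c(G), n))$ on the PoU $r(\alpha)$, separately for each of the SUM and MAX cost functions. The argument splits into two halves: a worst-case lower bound exhibited by an explicit family of instances, and a universal upper bound obtained by constructing a specific uncorrelated strategy and bounding its cost against any interceptor response.

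For the lower bound, I would generalize the calculation in the illustrating example to both cost functions and to arbitrary $k$. The graph $G$ with $m$ parallel $s$--$t$ paths has $m_c(G) = m$, and its symmetry makes explicit computation feasible. In the correlated case, the agents coordinate to concentrate on a single random path; in the uncorrelated case, independent sampling produces path collisions whose expected count is $m(1 - (1 - 1/m)^n) = \Theta(\min(m, n))$, which is precisely the inflation factor in both SUM and MAX costs.

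For the upper bound, I would construct an explicit uncorrelated agent strategy and bound its worst-case cost against the interceptors. By Menger's theorem, $G$ contains $m_c(G)$ edge-disjoint $s$--$t$ paths, and a natural candidate strategy has each agent independently sample a path, either uniformly among these disjoint paths or from a distribution matching the marginals of a TMECor strategy. The comparison against the TMECor cost uses the observation that any correlated strategy has cost at least $\Omega(\alpha k / m_c(G))$ by a pigeonhole argument across the mincut, while the uncorrelated interception cost is at most $\alpha k$ times the maximum expected number of agents on an attacked edge, which is $O(\max(1, n/m_c(G)))$ for the constructed strategy. Combining the two gives a ratio $O(\min(m_c(G), n))$, with the minimum arising because once $n \geq m_c(G)$ the agents already saturate the mincut and further coordination cannot help beyond that.

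The main obstacle is the MAX cost case, which is inherently nonlinear: we must bound the expected maximum over edges of the agent load, not just an edge-wise sum. I would handle this via a union bound combined with standard concentration, using that with the constructed strategy each edge's load is a sum of independent Bernoullis of bounded mean, so the maximum load is $O(\max(1, n/m_c(G)))$ with high probability. A secondary subtlety is that $r(\alpha)$ must be controlled uniformly in $\alpha$; this follows because both the correlated lower bound and the uncorrelated upper bound scale linearly with $\alpha$ in the interception term, so $\alpha$ cancels in the leading-order ratio, while the base (non-interception) contribution is identical on both sides because marginals agree.
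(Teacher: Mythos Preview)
There are two genuine gaps in your plan.

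\textbf{The lower bound is pointed in the wrong direction.} The theorem asserts $r(\alpha)=\Omega(\min(m_c(G),n))$ for \emph{every} graph $G$ (once $\alpha$ is large enough), not merely that some graph witnesses a large PoU. Exhibiting the parallel-paths example only shows the upper bound is tight for one family; it says nothing about a general $G$. The paper's lower bound is universal: it proves (Lemma~\ref{lmm:beta lower bound}) that for \emph{any} uncorrelated product strategy $\sigma$, the edge-hit probability satisfies $\beta(\sigma)\ge \emn{m_c}{n}/m_c$, so the numerator $F_n$ is at least $|\pmin|+k\alpha\,\emn{m_c}{n}/m_c$. Combined with the denominator bound $F_1\le n|\pmax|+k\alpha/m_c$ (from the uniform strategy on a maxflow), the ratio tends to $\emn{m_c}{n}=\Theta(\min(m_c,n))$ as $\alpha\to\infty$. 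Your plan has no analogue of this step.

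\textbf{The cost functions are not load-based, and your upper-bound arithmetic gives the wrong answer.} In both SUM and MAX, the interception term depends on the indicator $1\{d_j\in\bigcup a\}$, i.e., whether \emph{some} agent uses the edge, not on the number of agents on it. The relevant quantity is therefore $\beta(\sigma)=\max_e\Pr(e\in\bigcup a)$, which for the uniform strategy on $m_c$ disjoint paths equals $1-(1-1/m_c)^n=\emn{m_c}{n}/m_c\approx\min(1,n/m_c)$. Your bound ``$\alpha k$ times the maximum expected number of agents on an attacked edge, which is $O(\max(1,n/m_c))$'' plugs in the wrong object and, when divided by the correlated lower bound $\Omega(\alpha k/m_c)$, yields $O(\max(m_c,n))$, not $O(\min(m_c,n))$. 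Similarly, MAX is $\max_i(|a_i|+\alpha\sum_j 1\{d_j\in a_i\})$, the maximum over \emph{agents} of their path cost, not a maximum over edges of a load; no concentration is needed. The paper handles MAX by a one-line triangle inequality (Lemma~\ref{lmm:max Fn bounds}) giving $|\pmin|+k\alpha\beta(\sigma)\le F_n(\sigma,\alpha)\le |\pmax|+k\alpha\beta(\sigma)$, after which the MAX proof is identical to SUM. Finally, the lower bound is \emph{not} uniform in $\alpha$: for small $\alpha$ the path-length terms dominate and the PoU can be close to $1$; the $\Theta(\min(m_c,n))$ claim is asymptotic in $\alpha$.
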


This implies that in our games there is a polynomial-time $\Theta\p{\frac{1}{\min(m_c(G),n)}}$-approximation algorithm for playing a TMECor agent strategy using uncorrelated strategies. This is much smaller compared to the exponential (in agents) factor proved in \cite{basilico2017team} in their abstract class of games.

Next, we focus on analyzing more carefully the TME strategies of our game with the two cost functions.
In Section \ref{subsec:sum disjpaths}, we show that there is a set of simple strategies that admits optimal agent strategies over graphs with disjoint $s-t$ paths with respect to the SUM cost function.
In those strategies, agents roughly mix strategies uniformly over disjoint paths. Call these strategies {\it uniform strategies}, and denote them with $\mathcal{U}_n$ (see Section \ref{sec:defs} for the formal definition).

\begin{theorem}[Informal version of Theorem \ref{thm:min F disjoint paths}]\label{thm:SUM char}
For every graph $G$ where the $s-t$ paths are disjoint, and for every $n\ge1$, $\mathcal{U}_n$ admits an optimal agent strategy for $n$ uncorrelated agents.
\end{theorem}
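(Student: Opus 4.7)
The plan is to use the disjoint-paths structure to decouple the cost across paths, symmetrize the uncorrelated agent profile to a common distribution, and characterize the resulting optimization as admitting a solution in $\mathcal{U}_n$.

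Enumerate the disjoint $s-t$ paths as $P_1,\ldots,P_m$ of lengths $\ell_1,\ldots,\ell_m$ and write $p_{ij}=\Pr[\text{agent } i \text{ takes } P_j]$. Since the paths share no edges, path $P_j$ contributes exactly $\ell_j+\alpha K_j$ (where $K_j$ is the number of interceptors on edges of $P_j$) to the SUM cost if any agent uses it, and $0$ otherwise. Taking expectations, and using that agents and interceptor randomize independently,
\[
\E[\text{SUM cost}]=\sum_{j=1}^m (\ell_j+\alpha\bar{k}_j)\Bigl(1-\prod_{i=1}^n(1-p_{ij})\Bigr),
\]
with $\bar{k}_j=\E[K_j]$. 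Thus the game reduces to a finite-dimensional minimax over the marginals $(p_{ij})$ and $(\bar{k}_j)\in\{\bar{k}\ge 0:\sum_j\bar{k}_j=k\}$.

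For the symmetrization step, let $\bar{p}_j=\tfrac{1}{n}\sum_i p_{ij}$. By AM--GM, $\prod_i(1-p_{ij})\le(1-\bar{p}_j)^n$, whence $1-(1-\bar{p}_j)^n\le 1-\prod_i(1-p_{ij})$; since $\ell_j+\alpha\bar{k}_j\ge 0$, the symmetric profile in which every agent plays $\bar{p}$ attains no greater expected cost than $(p_{ij})$ against \emph{every} interceptor strategy. Hence an optimal uncorrelated agent strategy may be taken symmetric: all agents play a common distribution $p$ on the $m$ paths. Against fixed $p$, the interceptor's objective has partial derivative $\alpha(1-(1-p_j)^n)$ with respect to $\bar{k}_j$, monotone in $p_j$, so the inner maximum concentrates $\bar{k}$ on $\argmax_j p_j$, and the agents minimize
\[
F(p)=\sum_{j=1}^m\ell_j\bigl(1-(1-p_j)^n\bigr)+\alpha k\bigl(1-(1-p^*)^n\bigr),\qquad p^*=\max_j p_j.
\]

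I would then analyze $\arg\min F$. Writing $\phi(p)=1-(1-p)^n$ (concave on $[0,1]$), KKT stationarity forces $\ell_j\phi'(p_j)$ to equal a common Lagrange multiplier on $\{j\in\supp(p):p_j<p^*\}$, while $(\ell_{j^*}+\alpha k)\phi'(p^*)$ matches the same multiplier for the active maximizers. Combining this with mass-swap perturbations between support coordinates---exploiting the concavity of $\phi$ and the $\alpha k$ gap that the interceptor term creates at $p^*$---shows that among minimizers of $F$ one can always select a profile uniform on its support $S\subseteq[m]$, with $S$ determined by a threshold on path lengths. This is the structural form defining $\mathcal{U}_n$, and combined with the concentrated interceptor response it constitutes a Nash equilibrium of the reduced minimax, which pulls back to an optimal uncorrelated agent strategy in $\mathcal{U}_n$.

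The main obstacle is pinning down this support $S$: $F$ is non-smooth precisely at profiles with ties in $\max_j p_j$---exactly the uniform-on-subset configurations we wish to prove optimal---so the standard KKT argument must be supplemented with subgradient reasoning, and non-uniform local optima must be ruled out via one-sided perturbations exploiting the strict concavity of $\phi$. The disjoint-paths hypothesis is essential throughout because it eliminates any cross-path coupling through shared edges and lets the entire optimization be analyzed path by path.
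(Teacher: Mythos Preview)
Your decomposition of the SUM cost for disjoint paths and the AM--GM symmetrization to iid profiles are correct and match the paper's Lemma~\ref{lmm:sum cost F form} and Lemma~\ref{lmm:assume iid sum}. The identification of the interceptor best response (concentrate all $k$ on an edge of a most-used path) is also correct.

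Where your argument diverges from the paper---and where it is incomplete---is the last step. You propose to attack $F(p)=\sum_j\ell_j\phi(p_j)+\alpha k\,\phi(p^*)$ directly via KKT/subgradient analysis, and you correctly flag that the nonsmoothness of $p\mapsto p^*$ at ties is the crux. But the ``mass-swap perturbations'' you invoke are asserted, not carried out: with the $\max$ term present, swapping mass between two non-maximal coordinates can change which index realizes $p^*$, and the one-sided derivatives at a tied profile do not obviously combine into the uniformity conclusion you want. As written, this is a plan rather than a proof at exactly the hard point.

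The paper avoids this difficulty by inserting one more reduction \emph{before} the optimization: it shows (Lemma~\ref{lmm:assume decreasing sum}) that one may restrict to distributions $X$ that are \emph{decreasing} in the path index (longer paths get no more mass than shorter ones). On that restricted domain, $\max_j X(p_j)=X(p_1)$ always, so the troublesome $\max$ term becomes the single concave summand $\alpha k\,\phi(X(p_1))$. Now $F_n^1(X)$ is a positive sum of the concave functions $x\mapsto 1-(1-x)^n$, hence concave on the set of decreasing distributions. By Lemma~\ref{lmm:convhull} that set is exactly $\convhull(\mathcal{U}_1)$, and a concave function on a simplex attains its minimum at a vertex---i.e., at some $U_{P_i}$---which gives the claim immediately. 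The missing idea in your plan is precisely this preliminary monotonicity reduction; once you have it, KKT and subgradients become unnecessary.
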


In contrast, for the MAX cost, and for a larger set of graphs, we prove in Section \ref{subsec:max disjpaths} that the uniform strategies are almost never optimal.

\begin{theorem}[Informal version of Theorems \ref{thm:max uniform is almost never optimal general} and \ref{thm:max uniform equalengths is optimal}]\label{thm:MAX char}
\begin{enumerate}
    \item 
For a graph with disjoint $s-t$ paths, and for every $n>1$, $\mathcal{U}_n$ never admits an optimal agent strategy for $n$ uncorrelated agents if and only if the paths are not all of the same length, and $\alpha$ is large enough.

    \item
Let $G$ be a graph where, for $s-t$ every path there is a disjoint $s-t$ path of a different length. Then for every $n>1$, and for a large enough $\alpha>0$, the strategy set $\mathcal{U}_n$ never admits an optimal agent strategy for $n$ uncorrelated agents.
\end{enumerate}
\end{theorem}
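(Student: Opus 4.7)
The plan is to handle part 1's biconditional in three regimes and to reduce part 2 to the argument of part 1's harder direction.

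\emph{Equal lengths and small $\alpha$.} If all disjoint $s$-$t$ paths have the same length, the symmetric group $S_m$ acts on the game as an automorphism: permuting path labels fixes the MAX cost, the agents' action sets, and the interceptor's action set. Given any optimal independent-agent strategy $(A_1,\dots,A_n)$, I would symmetrize by letting each agent $i$ independently apply a uniformly random $\sigma_i \in S_m$ to its path choice. Each resulting marginal is uniform over the $m$ paths, so the symmetrized tuple lies in $\mathcal{U}_n$; because the interceptor best-response value is convex in the agents' product distribution and invariant under the $S_m$-action, this symmetrization does not increase the MAX cost. A separate continuity argument handles small $\alpha$ even with unequal lengths: at $\alpha=0$ the interceptor is inert, and any uniform strategy over the shortest paths is optimal; for $\alpha$ small, both the game value and the value of this uniform strategy vary continuously, so optimality persists on a neighborhood of $0$.

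\emph{Unequal lengths and large $\alpha$.} Fix any $U \in \mathcal{U}_n$, uniform on disjoint paths $\pi_1,\dots,\pi_{m'}$ with maximum length $\ell^\star$. I would first argue that for $\alpha$ large the interceptor's best response against $U$ places positive mass on an edge of a longest support path: any such attack raises that path's cost to $\ell^\star+\alpha$, and with probability at least $1-(1-1/m')^n>0$ some agent occupies that path, contributing $\ell^\star+\alpha$ to the MAX. Hence the MAX cost of $U$ under best response is at least $c_U\,\alpha + O(1)$ for a constant $c_U>0$ depending on $n$, $m'$, and the number of longest paths in the support. I would then exhibit a competitor: if $U$'s support mixes lengths, drop the longest support paths and redistribute the mass uniformly over the shorter ones; if the support is length-homogeneous, use the unequal-length hypothesis on the graph to find a strictly shorter disjoint path and swap it in. Either modification strictly decreases the $\alpha$-linear coefficient of the MAX cost (either by shrinking $\ell^\star$ or by lowering the usage probability of any longest path), so the competitor beats $U$ for all sufficiently large $\alpha$.

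\emph{Part 2 and main obstacle.} Part 2 follows from the same perturbation: for any $U \in \mathcal{U}_n$ supported on $\pi_1,\dots,\pi_{m'}$, either two support paths differ in length (reducing to the previous case), or all support paths share a length, in which case the hypothesis of part 2 supplies a disjoint path of different length that may be swapped in. The main obstacle is justifying the concentration claim about the interceptor's best response for large $\alpha$. I expect to handle it by a direct exchange argument rather than solving the interceptor's LP: shifting interceptor mass from an edge on a non-longest support path to an edge on a longest support path changes the expected MAX by $\alpha$ times a nonnegative difference of usage indicators, minus an $O(1)$ correction that is independent of $\alpha$; for the uniform strategies in $\mathcal{U}_n$ this difference is strictly positive, so the shift strictly increases the MAX cost for $\alpha$ large. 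A secondary subtlety is verifying that the part 2 swap produces a pairwise disjoint support; this follows from the single-witness hypothesis together with a careful choice of which support path to replace, possibly combined with a small perturbation of the chosen support itself.
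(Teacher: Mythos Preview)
Your symmetrization argument for the equal-length case has a gap. The interceptor best-response value $F_n(\sigma)=\max_d f_d(\sigma)$ is convex as a function of the \emph{joint} distribution $\sigma\in\Delta((P_{st})^n)$, but on the subset of product distributions each $f_d$ is multilinear (not linear) in $(\sigma_1,\dots,\sigma_n)$, so $F_n$ is not convex there. Averaging with the \emph{same} permutation applied to all agents would, by convexity on $\Delta((P_{st})^n)$, give a mixture of equal-cost strategies with no greater cost---but that mixture is correlated, not a product. Averaging with \emph{independent} permutations does produce $U_{P_{st}}^n$, but then you have no convexity inequality to invoke. The paper avoids this by first reducing to iid strategies via an AMGM argument (Lemma~\ref{lmm:assume iid max}), and then computing $F_n(X^n)=|p_1|+\delta(X)k\alpha$ directly and showing $\delta(X)>\emn{m}{n}/m$ whenever $X\neq U_{P_{st}}$.

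More seriously, your competitor construction in the unequal-length case goes in the wrong direction. Dropping the longest support path from $U_{P_{m'}}^n$ and re-uniformizing gives $U_{P_{m'-1}}^n$, whose $\alpha$-coefficient is $1-(1-1/(m'-1))^n$, which is strictly \emph{larger} than $1-(1-1/m')^n$; the quantity $\emn{\ell}{n}/\ell$ is decreasing in $\ell$. So your competitor is worse for large $\alpha$, not better. And when the support is already $P_{m_1}$ (all shortest paths), there is no strictly shorter disjoint path to swap in. The paper's mechanism is quite different: it first proves that against $U_{P_i}^n$ the interceptor's \emph{strict} best response is $d=(i,\dots,i)$ (Lemma~\ref{lmm:max Fn(U)=f_m^k(U)}), and then shifts $\varepsilon$ mass from $p_i$ to $p_1$; by continuity and $|E^k|<\infty$ the best response is still $(i,\dots,i)$ for small $\varepsilon$, but $f_{(i,\dots,i)}$ strictly decreases since $|p_1|<|p_i|$. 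For the length-homogeneous support $P_{m_1}$ the paper instead \emph{adds} the longer path $p_{m_1+1}$ (Lemma~\ref{lmm:bound for that}): passing to $U_{P_{m_1+1}}^n$ lowers the $\alpha$-coefficient to $\emn{m_1+1}{n}/(m_1+1)$, and the added constant cost $(|p_{m_1+1}|-|p_1|)\cdot\emn{m_1+1}{n}/(m_1+1)$ is dominated for large $\alpha$. Your Part~2 reduction inherits the same issues.
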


Going back to the SUM cost, we define a notion of {\it uniform PoU}, denoted with $r_U$ where the agents can only use strategies in $\mathcal{U}_n$. Here, we prove appealing properties of $r_U$:

\begin{theorem}[Informal version of Theorem \ref{thm:r_U characterization} and Corollary \ref{coro:computing r_U}]\label{thm:r_U informal}
The function $r_U(\alpha)$ is a smooth, increasing function that for a large enough $\alpha$ satisfies $r_U(\alpha)=\Theta(\min(m_c(G),n))$. $r_U(\alpha)$ can be computed in polynomial time in the graph size.
\end{theorem}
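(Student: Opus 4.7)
The plan is to reduce everything to the quantities $T_\ell :=$ minimum total length of $\ell$ edge-disjoint $s$--$t$ paths in $G$ (for $\ell \le m_c(G)$), each computable in polynomial time by min-cost flow. First I would derive a closed form for the optimal uniform uncorrelated cost. A strategy in $\mathcal{U}_n$ picks a set $P$ of $\ell$ disjoint paths and has each of the $n$ agents sample uniformly and independently from $P$. A direct edge-coverage calculation, using $f(\ell,n) := 1-(1-1/\ell)^n$, shows that the interceptor's best response yields expected cost $U_\ell(\alpha) = f(\ell,n)\,(T_\ell + \alpha k)$: agents confine the used-edge set to $\bigcup P$, so the interceptor places all $k$ picks inside $\bigcup P$ where each contributes $\alpha f(\ell,n)$, while the expected length of used edges is $f(\ell,n)\,L(P)$, minimized at $L(P) = T_\ell$. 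Hence $U(\alpha) = \min_{\ell \le m_c} U_\ell(\alpha)$. On the TMECor side, SUM cost lets a correlated team collapse onto a single randomized path without increasing cost (replacing any joint strategy by its agent-$1$ marginal only shrinks the used-edge set pointwise, which weakens every interceptor response), so TMECor reduces to the single-agent value; via the single-agent specialization of Theorem~\ref{thm:SUM char}, this takes the form $C(\alpha) = \min_{\ell \le m_c}(T_\ell + \alpha k)/\ell$. Thus $r_U(\alpha) = U(\alpha)/C(\alpha)$ is a ratio of two concave piecewise-linear envelopes over the same index set.

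Monotonicity is the most delicate step. Let $\ell^\ast_U(\alpha), \ell^\ast_C(\alpha)$ be the active indices. On any interval where both are constant, a short calculation shows $\tfrac{d}{d\alpha} r_U$ has the sign of $T_{\ell^\ast_C} - T_{\ell^\ast_U}$, i.e.\ the sign of $\ell^\ast_C - \ell^\ast_U$. So the task reduces to proving $\ell^\ast_U(\alpha) \le \ell^\ast_C(\alpha)$ everywhere. I would compare the $\ell \to \ell+1$ transition thresholds $\alpha^{(U)}_\ell, \alpha^{(C)}_\ell$ obtained from $U_\ell = U_{\ell+1}$ and $C_\ell = C_{\ell+1}$; an algebraic simplification collapses $\alpha^{(U)}_\ell - \alpha^{(C)}_\ell$ to a positive multiple of $(T_{\ell+1}-T_\ell)\,(g(\ell+1,n) - g(\ell,n))$, where $g(\ell,n) := \ell\,f(\ell,n)$. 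Both factors are nonnegative ($T_\ell$ is nondecreasing; $g(\ell,n)$ climbs monotonically from $1$ to $n$), so $\alpha^{(U)}_\ell \ge \alpha^{(C)}_\ell$ and hence $\ell^\ast_U \le \ell^\ast_C$. Continuity of $r_U$ is inherited from $U$ and $C$; on each of the finitely many sub-intervals where both active indices are constant, $r_U$ is a ratio of affine functions and hence analytic, which is the precise sense in which $r_U$ is smooth.

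For the asymptotic bound, I would observe that for sufficiently large $\alpha$ both $U_\ell$ and $C_\ell$ are minimized at $\ell = m_c$ (the $\alpha k$ term dominates and a larger denominator strictly helps). Substituting gives $r_U(\alpha) \to m_c\,f(m_c,n) = g(m_c,n) = m_c\bigl(1-(1-1/m_c)^n\bigr)$, and a short two-case estimate via Bernoulli's inequality ($n \ge m_c$ versus $n < m_c$) shows $g(m_c,n) = \Theta(\min(m_c(G), n))$, matching the claim.

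Polynomial-time computability is then essentially free: the at most $m_c \le |V(G)|$ quantities $T_\ell$ come from individual min-cost flow computations, after which $U(\alpha)$ and $C(\alpha)$ are each a minimum over $m_c$ scalar expressions, and the threshold sequences $\alpha^{(U)}_\ell, \alpha^{(C)}_\ell$ are read off from these at no extra asymptotic cost. I expect the main obstacle to be justifying the closed form for $C(\alpha)$ on general $G$ rather than just on graphs whose $s$--$t$ paths are literally disjoint; this is where the single-agent analogue of Theorem~\ref{thm:SUM char}, or a direct LP reduction in its spirit, has to do the real work, since the whole piecewise-linear framework above depends on $C$'s minimizers being uniform distributions over edge-disjoint path families.
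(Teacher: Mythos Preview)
Your plan coincides with the paper's proof almost line for line: both write $r_U(\alpha)=\varphi(\alpha,n)/\varphi(\alpha,1)$ with $\varphi(\alpha,n)=\min_{\ell\le m_c}\frac{\emn{\ell}{n}}{\ell}\bigl(k\alpha+S_\ell\bigr)$ (your $U_\ell$, with $T_\ell=S_\ell$), establish monotonicity by comparing the crossing points of these affine pieces for $n$ versus $n=1$, let $\alpha\to\infty$ so that both minima settle at $\ell=m_c$ giving $r_U\to\emn{m_c}{n}=\Theta(\min(m_c,n))$, and compute each $S_\ell$ by a min-cost flow instance on $G$.

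Your ``main obstacle'' at the end is not an obstacle, and comes from a misreading of the definition. The denominator of $r_U$ is, \emph{by definition}, $\min_{\sigma\in\mathcal{U}_1}F_1(\sigma,\alpha)$---the single-agent cost already restricted to uniform strategies---not the full TMECor value. Hence your formula $C(\alpha)=\min_{\ell\le m_c}(T_\ell+\alpha k)/\ell$ falls out directly from the definition together with Lemma~\ref{lmm:sum cost F form}; no single-agent analogue of Theorem~\ref{thm:SUM char} is needed, and nothing hinges on whether TMECor minimizers happen to be uniform on general $G$. One smaller technical point: you compare only the \emph{adjacent} crossings $\alpha^{(U)}_\ell$ versus $\alpha^{(C)}_\ell$, but the lower envelope can skip indices, so adjacent crossings do not by themselves pin down $\ell^\ast_U$ and $\ell^\ast_C$. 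The paper closes this by comparing the $h_i$--$h_j$ crossings for \emph{arbitrary} $i<j$, and your algebraic reduction carries over verbatim (the sign is that of $(T_j-T_i)(g(j,n)-g(i,n))\ge0$); alternatively, the identity $U_\ell(\alpha)=g(\ell,n)\,C_\ell(\alpha)$ with $g(\cdot,n)$ increasing gives $\ell^\ast_U\le\ell^\ast_C$ in one line.
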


Finally, in Section \ref{sec:known and slider} we formally connect our model with the model of \citet{basilico2017team}. The first result is a lemma where we show that cost functions can be converted to payoff functions while preserving the PoU:
\begin{lemma}[Informal version of Lemma \ref{lmm:c to u}]
For every game instance with a payoff function $u$, where the agents can guarantee positive payoff, there is a cost function $C$ such that $r(C)=r(u)$, and vice versa.
\end{lemma}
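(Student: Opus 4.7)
The plan is to exhibit an affine transformation $C = K - u$ (and symmetrically $u = K' - C$ for the reverse direction) that preserves the optimal strategy profiles and, for a specific choice of constant $K$, preserves the PoU ratio as well. The central observation is that PoU is a ratio of two equilibrium values, so while general shifts of the objective change the ratio, there is exactly one additive constant that swaps the numerator and denominator in the right way when passing from payoffs to costs.

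Let $v_{cor}(u)$ and $v_{unc}(u)$ denote the team payoffs at the TMECor and TME of the payoff game, so that $r(u) = v_{cor}(u)/v_{unc}(u)$; by hypothesis $v_{unc}(u) > 0$. I would set $K := v_{cor}(u)+v_{unc}(u)$ and define the cost function $C(a,b) := K - u(a,b)$. Since $K$ is strategy-independent, the agents' minmax problem for $C$ has the same optimizers as their maxmin problem for $u$, and likewise for the interceptors, so the TMECor and TME strategy profiles coincide across the two games. A direct computation then gives
\begin{equation*}
    c_{cor}(C) \;=\; K - v_{cor}(u) \;=\; v_{unc}(u), \qquad c_{unc}(C) \;=\; K - v_{unc}(u) \;=\; v_{cor}(u),
\end{equation*}
and therefore $r(C) = c_{unc}(C)/c_{cor}(C) = v_{cor}(u)/v_{unc}(u) = r(u)$, as required.

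The reverse direction is entirely symmetric: given $C$, one sets $K' := c_{cor}(C) + c_{unc}(C)$ and defines $u := K' - C$. The analogous computation yields $r(u) = r(C)$, and the positivity hypothesis is automatic since $v_{unc}(u) = c_{cor}(C) > 0$ whenever the underlying cost game has strictly positive equilibrium cost (which is the case in all the congestion models considered in the paper).

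The main obstacle I anticipate is technical rather than conceptual: the transformed objective $K - u$ need not be non-negative on every strategy profile, and similarly $K' - C$ need not be strictly positive everywhere. If the framework imposes global sign conditions on cost or payoff functions, a short additional argument is required to show that one may either restrict attention to the support of optimal strategies, or supplement the construction with a further rescaling that preserves the ratio. Once the identification of $K = v_{cor}(u)+v_{unc}(u)$ as the unique shift forcing PoU-invariance is made, the remaining verifications are routine minmax manipulations.
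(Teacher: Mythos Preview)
Your proposal is correct and follows essentially the same approach as the paper: the paper also sets the additive shift to $T:=T'+T''$ (the sum of the correlated and uncorrelated maxmin payoffs), defines $C=T-u$, and obtains $r(C)=\frac{T-T''}{T-T'}=\frac{T'}{T''}=\Tilde r(u)$ by the same minmax swap. Your treatment of the reverse direction and the remark about sign constraints go slightly beyond what the paper writes out, but the core construction and computation are identical.
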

This implies that the PoU can be defined with either a cost function or a payoff function, thus establishing that our PoU bounds are indeed comparable to those of \cite{basilico2017team}.

\subsection{Our techniques}\label{subsec:techniques}
We now present and discuss the techniques we use in the proofs. Our techniques are novel in the context of the PoU.

\paragraph{Representing graphs as flow networks}

We define the model formally using terminology from flow networks. Sets of disjoint paths are identified with 0-1 flows from $s$ to $t$.
This view of the graph allows us to gain insights on the behavior of rational agents in our model. In short, for a given $\alpha$, they tend to ignore prohibitively long paths, and spread out over the remaining paths as much as possible. See Sections \ref{sec:sum} and \ref{sec:max} for further discussion.

Finally, we use this perspective on the problem to also prove the computational result of Theorem \ref{thm:r_U informal}. There, we translate a given game instance to a mincost flow problem instance (see \cite{ahuja1988network} for reference).

\paragraph{Approximating the PoU with a simplified PoU}
To prove
the $\Theta(\min(m_c(G),n))$ bound on the PoU,
we approximate the PoU with the uniform PoU. The uniform PoU can be seen as a simplified version of the PoU, a version that is simpler both to compute (for example Theorem \ref{thm:r_U informal} for SUM cost) and to understand. The second step of the technique is therefore to bound the uniform PoU. Put together, this two-step technique allows us to obtain asymptotically tight bounds on the PoU, while needing to consider only simple strategies in our calculations. Specifically, this technique enables us to prove (Theorem~\ref{thm:general cost ru is approx}) that, for every $n \geq 1$, and for sufficiently large $\alpha$:
\[
\p{1-O\p{\frac{1}{\alpha}}}r(\alpha)\le r_U(\alpha)\le\p{1+O\p{\frac{1}{\alpha}}}r(\alpha)
\]

Next, we discuss the techniques used in our equilibrium strategies analyses.

\paragraph{Reduction of strategy space}
In a search for optimal strategies, the simple forms of SUM and MAX allow us to simplify significantly the strategy spaces of both the agents and the interceptors. When simplifying the strategy spaces, we always keep at least one optimal strategy in the strategy space.

In Section \ref{subsec:sum disjpaths} we focus on the SUM cost over graphs of disjoint paths. There, the strategy space reduction technique is particularly effective:

\begin{lemma}[Informal version of Lemma \ref{coro:sum disjpaths opt in convhull}]\label{thm:sum disjpaths opt in convhull}
For every graph $G$ of disjoint paths, and for every $n\ge1$, the convex hull of $\mathcal{U}_n$ admits an optimal agent strategy for $n$ uncorrelated agents.
\end{lemma}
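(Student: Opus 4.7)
My plan is to exploit the additive structure of the SUM cost together with the disjoint-paths structure. Because each edge lies on a unique $s$-$t$ path, an agent taking path $P$ against interceptor mixed strategy $\tau$ incurs expected cost $|P| + \alpha \cdot \tau(P)$, where $\tau(P)$ is the total $\tau$-mass on the edges of $P$. For $n$ uncorrelated agents with marginals $\sigma_1, \ldots, \sigma_n$, the expected SUM cost against $\tau$ is therefore $\sum_{i=1}^{n} \E_{P \sim \sigma_i}[|P| + \alpha \tau(P)]$, which is linear in each $\sigma_i$ individually and depends on the agents' joint distribution only through these marginals.

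First I would invoke the minimax theorem to obtain an optimal (correlated) interceptor strategy $\tau^\ast$ and the common equilibrium value. Each agent is then minimizing an affine objective over a simplex, so its best-response marginals are exactly the distributions supported on the argmin set $S^\ast$ of paths realizing the minimum of $P \mapsto |P| + \alpha \tau^\ast(P)$, and every such marginal attains the same per-agent expected cost. In particular, each uniform distribution on a subset of $S^\ast$ is a per-agent best response, and the profile in which every agent uses the same ``uniform-on-subset'' distribution corresponds to an element of $\mathcal{U}_n$. I would then show that any optimal per-agent marginal supported on $S^\ast$ can be written as a convex combination of such uniform distributions via a water-filling/greedy decomposition on the simplex, and lift this to a convex combination of profiles in $\mathcal{U}_n$, producing an element of $\convhull(\mathcal{U}_n)$ whose per-agent marginals coincide with those of the optimum.

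The main subtlety I anticipate is reconciling two notions: a point of $\convhull(\mathcal{U}_n)$ is in general a correlated distribution over strategy profiles, whereas an \emph{optimal strategy for $n$ uncorrelated agents} is a product distribution. The reconciliation comes from the marginal-only dependence established in the first paragraph, which makes expected SUM cost invariant under recoupling of the marginals: any element of $\convhull(\mathcal{U}_n)$ whose per-agent marginals agree with those of the uncorrelated optimum incurs the same expected cost against every interceptor strategy, and hence is itself optimal in the uncorrelated game. Spelling out this substitution carefully, and verifying that the ``uniform on subsets of $S^\ast$'' family is rich enough to realize every target marginal as a convex combination (where the disjoint-paths assumption keeps the combinatorics of $S^\ast$ clean), are the steps where I would concentrate the detailed work.
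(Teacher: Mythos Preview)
Your proposal rests on a false premise about the SUM cost. You write that the expected cost to $n$ uncorrelated agents against $\tau$ is $\sum_{i=1}^{n}\E_{P\sim\sigma_i}[|P|+\alpha\tau(P)]$, i.e.\ additive across agents and linear in each marginal. But the SUM cost in this paper is
\[
C(a,d,\alpha)=\Bigl|\bigcup_i a_i\Bigr|+\alpha\sum_{j\le k}1\{d_j\in\textstyle\bigcup_i a_i\},
\]
which counts each used edge once regardless of how many agents traverse it. On disjoint paths this gives (Lemma~\ref{lmm:sum cost F form})
\[
F_n(\sigma,\alpha)=k\alpha\,\beta(\sigma)+\sum_{p}\sigma(p)\,|p|,\qquad \sigma(p)=1-\prod_{j=1}^n\bigl(1-\sigma_j(p)\bigr),
\]
so the dependence on the marginals is through the nonlinear quantities $1-\prod_j(1-\sigma_j(p))$, not through their sum. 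In particular the cost is \emph{not} invariant under recoupling: two joint laws with identical marginals but different correlation (e.g.\ perfectly correlated vs.\ independent agents) yield different expected $|\bigcup a|$. This breaks every subsequent step of your plan: the per-agent best-response problem is not affine, the ``marginal-only'' substitution between $\convhull(\mathcal{U}_n)$ elements and product distributions is invalid, and the water-filling decomposition has nothing to decompose.

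The paper's route is quite different and uses precisely the nonlinearity you missed. It first shows (Lemma~\ref{lmm:assume decreasing sum}) that one may assume the occupation probabilities $\sigma(p_1)\ge\cdots\ge\sigma(p_m)$ by a swapping argument, then (Lemma~\ref{lmm:assume iid sum}) that one may assume all agents use the same marginal $X$, via AMGM applied to $\prod_j(1-\sigma_{ij})$; the point is that averaging the marginals can only \emph{decrease} each $\sigma(p)$ and hence $F_n$. Finally, Lemma~\ref{lmm:convhull} identifies the set of decreasing single-agent distributions with $\convhull(\mathcal{U}_1)$ and shows $\convhull(\mathcal{U}_n)=\{X^n:X\in\convhull(\mathcal{U}_1)\}$, which yields the claim. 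If you want to repair your approach, the essential missing ingredient is an inequality (like the AMGM step) that handles the $1-\prod_j(1-\sigma_j(p))$ terms, not a linearity/recoupling argument.
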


In the case of the MAX cost, this technique's effectiveness is more limited (since MAX is determined only by the longest path, rather than all paths).

\begin{lemma}[Informal version of Lemma \ref{lmm:assume iid max}]
For every graph $G$, and for every $n\ge1$, there exists an optimal agent strategy for $n$ uncorrelated agents where all agents use the same individual mixed strategy.
\end{lemma}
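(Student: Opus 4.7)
I would take any optimal uncorrelated profile $(\sigma_1^*,\ldots,\sigma_n^*)$---which exists by compactness of the joint strategy space and continuity of the worst-case MAX cost---and show that the iid profile in which every agent plays the average $\bar\sigma=\tfrac{1}{n}\sum_i\sigma_i^*$ does at least as well. Let $H_\tau(\sigma_1,\ldots,\sigma_n)$ denote the expected MAX cost against a fixed interceptor strategy $\tau$, so the worst-case cost is $F(\sigma_1,\ldots,\sigma_n)=\max_\tau H_\tau(\sigma_1,\ldots,\sigma_n)$. Since taking $\max_\tau$ preserves the direction of the inequality, it is enough to prove the pointwise statement $H_\tau(\bar\sigma,\ldots,\bar\sigma)\le H_\tau(\sigma_1^*,\ldots,\sigma_n^*)$ for every $\tau$.

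I would prove this pointwise inequality by iterated pairwise averaging. For any two agents $i,j$, conditioning on the paths chosen by the other $n-2$ agents reduces $H_\tau$ to the $\sigma_i\otimes\sigma_j$-expectation of a symmetric function $\phi(p,p')$ of the two remaining paths. A direct algebraic expansion gives
\[
\E_{\bar\sigma_{ij}\otimes\bar\sigma_{ij}}[\phi]-\E_{\sigma_i\otimes\sigma_j}[\phi]=\tfrac14\sum_{p,p'}d(p)d(p')\phi(p,p'),
\]
where $\bar\sigma_{ij}=(\sigma_i+\sigma_j)/2$ and $d=\sigma_i-\sigma_j$ has zero total mass. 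The key structural fact is that the kernel $\Phi=(\phi(p,p'))_{p,p'}$ is negative semidefinite on the zero-sum subspace $\{d:\sum_p d(p)=0\}$, so the pairwise swap weakly decreases $H_\tau$. In the cleanest setting (equal-length paths, one interceptor), $H_\tau$ simplifies to $L+\alpha\sum_e\tau(e)\p{1-\prod_i(1-s_i(e))}$ with $s_i(e)=\Pr_{p\sim\sigma_i}[e\in p]$, and the pointwise claim reduces to an edgewise AM-GM: $\prod_i(1-s_i(e))\le(1-\bar s(e))^n$. A straightforward induction on the number of distinct $\sigma_i$'s then produces the fully symmetric iid profile.

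The hard part will be verifying the negative semidefinite property in the full MAX setting with varying path lengths and multiple interceptors, where $\phi$ no longer decomposes cleanly as ``linear in $p$ plus linear in $p'$ plus one negative rank-one PSD interaction'' because the max depends on which agent currently realizes it. I would handle this by conditioning on each realized interceptor configuration and casing on the ordering of the per-agent path costs: each case contributes either a term that is separately linear in $p$ or $p'$ (and so vanishes on the zero-sum subspace) or a nonpositive rank-one ``co-hit'' contribution to $\Phi$. Aggregating over cases and over $\tau$'s support preserves negative semidefiniteness on the zero-sum subspace, yielding the pairwise inequality and hence the lemma.
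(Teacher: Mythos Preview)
Your high-level move---replace an optimal profile $(\sigma_1^*,\ldots,\sigma_n^*)$ by the iid profile on the average $\bar\sigma=\tfrac{1}{n}\sum_j\sigma_j^*$ and argue the worst-case cost can only go down---is exactly what the paper does. But the paper's execution is a one-line AM--GM and sidesteps everything you flag as hard. The observation you are missing is the tail-sum representation of the expected maximum: for a fixed pure interceptor action $d$, order the paths $p^d_1,\ldots,p^d_m$ by their $c_d$-cost; since the agents are independent,
\[
f_d(\sigma,\alpha)=\sum_{i=1}^m\bigl(c_d(p^d_i)-c_d(p^d_{i-1})\bigr)\Bigl(1-\prod_{j=1}^n\sum_{i'<i}\sigma_j(p^d_{i'})\Bigr).
\]
Each coefficient $c_d(p^d_i)-c_d(p^d_{i-1})$ is nonnegative, so a single AM--GM on each inner product gives $\prod_j(\,\cdot\,)\le\bigl(\tfrac{1}{n}\sum_j(\,\cdot\,)\bigr)^n$, hence $f_d(\sigma,\alpha)\ge f_d(\bar\sigma^n,\alpha)$ for every $d$ and thus $F_n(\sigma,\alpha)\ge F_n(\bar\sigma^n,\alpha)$. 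No pairwise step, no kernel, no casing on which agent realizes the max, and it works for arbitrary path lengths and any $k$.

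Two concrete gaps in your route as written. First, iterated pairwise averaging does \emph{not} reach the global mean by ``induction on the number of distinct $\sigma_i$'': with three distinct marginals, averaging two of them and then averaging with the third never lands on $\bar\sigma$; you would need a limiting smoothing argument together with continuity of $F_n$. Second, the negative-semidefiniteness of the max-kernel you want is true, but the clean way to prove it is precisely the telescoping identity $\max(g_i,g_{i'})=g_1+\sum_{\ell\ge2}(g_\ell-g_{\ell-1})\bigl(1-1_{\{i<\ell\}}1_{\{i'<\ell\}}\bigr)$, which is just the two-agent instance of the tail-sum formula above. At that point you may as well apply AM--GM to all $n$ factors at once and drop the pairwise detour and the case analysis entirely.
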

In other words, the agents' mixed strategies are distributed iid.
This produces optimal strategies which may seem counter-intuitive, where agents might intentionally send traffic on inherently long paths, while ignoring shorter paths that are also available.

\subsection{Organization}
In Section \ref{sec:lit} we overview related work.
In Section \ref{sec:defs} we formally define the problem and state some general lemmas.
In Section \ref{sec:sum} we analyze and prove our results for the SUM cost function.
In Section \ref{sec:max} we analyze and prove our results for the MAX cost function.
In the Appendices we provide omitted details proofs for some of the theorems and lemmas in this paper, and connect the results of \citet{basilico2017team} to ours as described above.

\section{Related Work}\label{sec:lit}
\paragraph{Team-maxmin equilibria} \citet{von1997team} study the maxmin value of zero-sum, normal-form games between a team and an adversary, where the team agents aim to maximize their identical payoffs. The team agents are uncorrelated. Von Stengel and Koller show that every team (mixed) strategy that attains the maxmin value of such a game is a part of a NE of the game, and in particular a payoff-maximal NE. Such a team strategy is called a {\it team-maxmin strategy}, and such a NE is called a {\it team-maxmin equilibrium} (TME). The payoff the team gets from TMEs is unique.
\citet{von1997team} introduce the concept of TME and only consider the case of uncorrelated agents. In contrast, we use the TME notion to model the impact that correlation between the agents has on their performance in our games, and prove results that highlight the extent to which correlation is significant in the games that we study.

Some work on determining TMEs and their values has been done in {\it rendezvous games} \cite{alpern1995rendezvous}, where two agents seek to meet at a point in space. \citet{lim1997rendezvous,lim2000rendezvous,alpern1998symmetric} consider versions of the rendezvous game where the agents in addition attempt to avoid being captured by an enemy searcher. \citet{lim1997rendezvous,lim2000rendezvous} consider correlated team agents playing against an enemy searcher on a clique and a circle (respectively), while \citet{alpern1998symmetric} consider `symmetric' team agents who cannot correlate their strategies if they want to meet. While these rendezvous games are multi-round games, these games have only two agents and one enemy and are played on restricted types of graphs. In comparison, while our games are single-round games, we allow any number of players on both teams, as well as any $s-t$ graph. This means that our paper studies the PoU for a much larger set of scenarios.

\paragraph{Computing TMEs} Computational aspects of TMEs have been thoroughly studied, showing both algorithms for finding TMEs (in both normal-form and extensive form games) \cite{zhang2020computing,kalogiannis2022efficiently,celli2018computational,basilico2017team,anagnostides2023algorithms} and hardness results \cite{anagnostides2023algorithms,hansen2008approximability,kalogiannis2022towards,celli2018computational}, in particular showing that computing a TME is FNP-hard \cite{hansen2008approximability,basilico2017team}.
In the computational aspect, we show (Corollary \ref{coro:computing r_U}) that for SUM cost, the PoU can be approximated in polynomial time.
Our focus in this paper is on asymptotic quantification of the PoU rather than an exact quantification. Note that the results mentioned in this paragraph imply that exact quantification is potentially a computationally hard problem, therefore there might not be an explicit and exact expression for the PoU in our games.

\paragraph{TMECor and PoU} When the team agents are correlated, they can be seen as one player, and then the game reduces to a 2-player zero-sum game. Then the TME is often called {\it TMECor} \cite{basilico2017team}, and can be found in polynomial time via linear programming. As was stated earlier, The PoU is the ratio of the payoff (of the team agents) at a TMECor to the payoff at a TME, and it was studied in \cite{basilico2017team,celli2020coordination}.
\citet{celli2020coordination} considers the PoU for extensive-form games, and shows bounds on it that depend on the number of leaves in the tree. When the game is in extensive form, one can pick whether the team agents have shared randomness only at the root or at all nodes of the game. This gives rise to refined notions of correlation \cite{celli2020coordination,zhang2020computing}. In our games, since they are single-round games, the distinction between these correlation notions does not exist. On the other hand, the extensive-form games mentioned above are abstract, while ours are concrete and motivated by the concrete scenarios mentioned in the introduction.

The PoU was also studied in \cite{schulman2017duality} as a difference rather than a ratio. The setting there is even more general, having 2 teams (call them team $A$ and team $B$) playing against each other in a team zero-sum game.
The main focus is on the {\it defensive gap} of such games, which is the largest difference between the minmax and the maxmin values of a game of a fixed size, and can be phrased as a sum of additive PoUs.
\citet{schulman2017duality} show that the additive version of PoU for team $A$ can get as large as $(1-\text{Val}T)(1-\hat{m}_A^{1-k_A})$, where $\text{Val}T$ is the minmax value of the game when the members of each team are correlated, $k_A$ is the size of team $A$ and $\hat{m}_A$ is the geometric mean of the number of strategies of team $A$'s members. This aligns with the $m^{n-1}$ bound of \citet{basilico2017team} when converting the ratio form into a difference form.
As discussed in the introduction, we show a linear PoU for our class of games.

\paragraph{Uniform mixed equilibria} A somewhat similar model was defined in the recent \cite{bilo2023uniform}, who consider congestion games such that team agents can only play uniform strategies, and face an adversary who is able to corrupt edges. As we show, after we change the cost function, uniform distributions will not always be the best strategies for the team agents, hence deviating from \cite{bilo2023uniform}. Our model is different than \cite{bilo2023uniform} in two ways. First, in our model the interceptors can only `slow down' the agents rather than causing link failures. Therefore, the way we model the interceptors models different settings than the settings modeled by \citet{bilo2023uniform}. Second, in our model each agent can play an arbitrary mixed strategy, as opposed to only picking ``$\rho$-uniform strategies''. In other words, in our model the agents have richer sets of (mixed) strategies, which makes our model more general.

\section{Preliminaries}\label{sec:defs}
\subsection{Notations}
For a set $X$ and $n\in\N$, we denote with $X^n$ the cartesian product of $n$ copies of $X$ and $X^*=\bigcup_{n\in\N}X^n$ as usual.
For $X$ which is either a set or a tuple $(x_i)_{i\in I}$, let $\bigcup X=\bigcup_{x\in X}x$ in the former case, and $\bigcup X=\bigcup_{i\in I}x_i$ in the latter.
For a set $X$, let $\Delta X$ be the set of distributions over $X$. For a distribution $\sigma\in\Delta X$, let $\supp\sigma$ be the support of $\sigma$. For $n\in\N$, let $\sigma^n$ be the product distribution of $n$ copies of $\sigma$. If $\sigma\in(\Delta X)^n$ we also write $\sigma=(\sigma_1,...,\sigma_n)$. For an set or event $A$ we denote the indicator of $A$ by $1_A$ or $1\s{x\mid x\in A}$. For conciseness, by $\eqdef$ we mean ``equal by definition''.

Throughout this paper, we will use the continuity of several functions in the joint distributions of uncorrelated agents. Then, we treat these distributions $\sigma=(\sigma_1,...,\sigma_n)$ as vectors $\sigma\in\R^{mn}$. Now, for the continuity purposes above we use the $\ell_2$-norm to measure distances between these distributions.

\subsection{The model}
{\bf The graph\ }
In the game there is a directed $s-t$ graph $G=(V,E,s,t)$, where $V$ is the set of vertices, $s,t\in V$ are the source and the sink of $G$ and $E$ is the {\it multiset} of edges.
We assume that $G$ is acyclic and that all edges are oriented from $s$ to $t$, i.e., every edge $e\in E$ is contained in some directed $s-t$ path.
Let $P_{st}=\s{p_1,...,p_m}$ be the set of directed $s-t$ paths in $G$. Paths will be treated as sets of edges,
which can be disjoint. Here, edge-disjointness and vertex-disjointness are the same because if 2 paths share a vertex but no edges, then they imply the existence of paths which are not edge-disjoint. On the other hand, vertex-disjointness implies edge-disjointness. If the paths are disjoint, we will assume that $|p_1|\le...\le|p_m|$,
and denote $P_i=\s{p_1,...,p_i}$.
In general we denote with $\pmax,\pmin$ an arbitrary pair of longest and shortest paths in $P_{st}$. We also define $|p_0|=0$. For an edge $e\in E$, let $P_{st,e}=\s{p\in P_{st}\mid e\in p}$.
\\\\
{\bf The strategies\ }
There are $n$ agents and $k$ interceptors. The set of pure strategies of each agent is defined to be $P_{st}$, and that of each interceptor is $E$. Paths picked by the agents will be denoted by $a=(a_1,...,a_n)$, and edges picked by interceptors will be denoted by $d=(d_1,...,d_k)$.
We will sometimes treat $a$ as an integer vector $(|a_i|)_i\in\Z^n\subseteq\R^n$, particularly when computing its norm $\norm a$. Here we will consider only the $\ell_\infty$-norm: $\norm{a}_\infty:=\max_{i\in[n]}|a_i|$.

Joint mixed strategies of agents and interceptors are denoted by $\sigma\in\Delta((P_{st})^n)$ and $\tau\in\Delta(E^k)$ respectively, and $\sigma_j,\tau_j$ are the marginal distributions of single agents and interceptors. If $P_{st}=\s{p_1,...,p_m}$, we will also write sometimes $\sigma_{ij}=\sigma_j(p_i)$.
For $\sigma\in\Delta((P_{st})^n)$ and $p\in P_{st}$, let $\sigma(p)=\Pr_{a\sim\sigma}(p\in a)$. Note that for $n=1$ this definition and the original definition of $\sigma(p)$ coincide. Let $\sigma^E$ be a distribution on subsets of $E$, where $\sigma^E(e)=\Pr_{a\sim\sigma}(e\in\bigcup a)$. Then, let $\beta(\sigma)=\max_{e\in E}\sigma^E(e)$.

When the agents are correlated, their set of mixed strategies is $\Delta((P_{st})^n)$. When they don't, it is only $(\Delta(P_{st}))^n$. 
The interceptors can always share their randomness.
\\\\
{\bf The cost functions\ }
The cost function of the game is $C:(P_{st})^*\times E^*\times(0,\infty)\to[0,\infty]$, where $C(a,d,\alpha)$ is the cost to the agents in the corresponding outcome. More formally, $-C$ is the utility of every agent, while $C$ is the utility of every interceptor. 
$\alpha$ will be referred to as either additional cost or delay imposed on edges by interceptors. The two cost functions we will consider for a large part of the paper are the SUM cost:
\begin{align}\label{eq:sum}
\tag{SUM}
    C(a,d,\alpha)=\abs{\bigcup a}+\alpha\sum_{j\le k}1\s{d_j\in\bigcup a}
\end{align}
and the MAX cost:
\begin{align}\label{eq:max}
\tag{MAX}
    C(a,d,\alpha)&=\max_{i\in[n]}\p{|a_i|+\alpha\sum_{j\in[k]}1\s{d_j\in a_i}}
\end{align}
$C$ is extended to mixed strategies in the usual way, as an expectation over the outcomes of the mixed strategies.
\\\\
{\bf Price of uncorrelation\ }
We measure the gap in the performances of correlated and uncorrelated agents with the PoU. It is defined by the costs at TMEs and TMECors of the game, as defined in \cite{von1997team,basilico2017team}.\footnote{Technically, the PoU is defined by payoff-maximizing agents. See Appendix \ref{sec:known and slider} to see why we can define it with cost-minimizing agents without changing it.}
\begin{definition}
For a game with graph $G$, $n$ agents, $k$ (correlated) interceptors, a parameter $\alpha$ and a cost function $C$, the PoU of the game is defined by
\begin{align}\label{eq:r}
    r(G,n,k,C,\alpha):=\frac{\min_{\sigma\in(\Delta(P_{st}))^n}\max_{d\in E^k}C(\sigma,\tau,\alpha)}{\min_{\sigma\in\Delta((P_{st})^n)}\max_{d\in E^k}C(\sigma,\tau,\alpha)}
\end{align}
\end{definition}

We omit $C$ when it is clear from context. Observe that $r(G,n,k,C,\alpha)\ge1$ as long as $C>0$. Note that in the PoU definition, the interceptor pick edges deterministically. This is wlog for the purposes of computing the costs at TMEs and TMECors of the game (or TME and TMECor values). As shown in \cite{von1997team}, for every $\sigma$ that attains the TME value (subject to a best response $d\in E^k$), there exists a $\tau\in\Delta(E^k)$ such that $(\sigma,\tau)$ is a TME of the game. A similar result holds for the TMECor value.

For $d\in E^k$ let $f_d(\sigma,\alpha)=C(\sigma,d,\alpha)$, and let $F_n(\sigma,\alpha)=\max_{d\in E^k}f_d(\sigma,\alpha)$, where $\sigma\in\Delta((P_{st})^n)$. In words, $F_n(\sigma,\alpha)$ is the cost that $n$ agents will incur when playing the joint strategy $\sigma$ against $k$ interceptors with strength $\alpha$.
Further observe that for SUM and MAX, when the agents minimize their cost and are correlated, they will always pick the same path. So Equation \ref{eq:r} simplifies to
\begin{align}
    r(G,n,k,C,\alpha)=\frac{\min_{\sigma\in(\Delta(P_{st}))^n}F_n(\sigma,\alpha)}{\min_{\sigma\in\Delta(P_{st})}F_1(\sigma,\alpha)}
\end{align}

We now define precisely the uniform strategies and the uniform PoU, using terminology from flow networks.
\\\\
{\bf $G$ as a flow network\ }
We recall the notion of mincuts. Endow every edge in $G$ with capacity 1.
An $s-t$ cut in $G$ is a partition of $V$ into $S\sqcup T$ such that $s\in S,~t\in T$. Let $E(S,T):=E\cap(S\times T)$ and $|(S,T)|=|E(S\times T)|$, and let $m_c(G)=\min_{S,T}|(S,T)|$, the size of a mincut of $G$. We write $m_c$ in short when $G$ is known from context.

We rephrase $m_c$ in terms of maximal integer flows (integer flows which cannot be extended). Let $MF(G)$ be the set of maximal integer (here, binary) $s-t$ flows on $G$. Again we write $MF$ when $G$ is known from context. By the maxflow-mincut theorem, $m_c=\max_{f\in MF}|f|$.\footnote{$G$ can be converted to a simple graph (where $E$ is not a multiset), and then the theorem holds, and when going back to the multigraph the fluxes and the cut sizes remain the same.}
We distinguish between maximal flows so that we have a mapping from each such flow to a unique maximal subset of edge-disjoint paths. $MF$ changes but $m_c$ and $\max_{f\in MF}|f|$ remain the same.

For an $s-t$ flow $f$, let $P_f=\s{p_1,...,p_{m'}}$ be the set of edge-disjoint paths on which a flow of 1 is transferred. Suppose that $|p_1|\le...\le|p_{m'}|$, and let $\prefix(P_f)=\s{\s{p_1,...,p_i}\mid i\le m'}$.
For $P$ let $U_P$ be the uniform distribution on $P$.
We now formally define the uniform strategies from earlier.

\begin{definition}
For $n$ uncorrelated agents, the set of uniform agent strategies is defined as
\begin{align}
    \mathcal{U}_n(G)=\s{U_P^n\mid\exists f\in MF,~P\in\prefix(P_f)}
\end{align}
\end{definition}

Each distribution $U_P^n$ is made up of $n$ independent copies of $U_P$, the uniform distribution on the set of paths $P$.
Now we can formally define the uniform PoU:

\begin{definition}
For a game with graph $G$, $n$ agents, $k$ (correlated) interceptors, a parameter $\alpha$ and a cost function $C$, the uniform PoU of the game is defined by
\begin{align}
    r_U(G,n,k,C,\alpha):=\frac{\min_{\sigma\in\mathcal{U}_n}F_n(\sigma,\alpha)}{\min_{\sigma\in\mathcal{U}_1}F_n(\sigma,\alpha)}
\end{align}
\end{definition}

We omit $C$ when clear from context. Note that $r_U\ge1$ is not generally true, however it is true in our problem instances, because the agents are correlated, even $\mathcal{U}_1$ will strictly dominate $\mathcal{U}_n$ in our problem instances.

\subsection{General lemmas}
Here we state a few general lemmas that are used in the proofs that appear here. For the full list of lemmas and their proofs, see Appendix \ref{sec:lemmas proofs}.

For $\ell,n\ge1$, let $\emn{\ell}{n}=\ell\p{1-\p{1-\frac{1}{\ell}}^n}$. In the appendix we show that when $\ell,n$ are large, $\emn{\ell}{n}=\Theta(\min(\ell,n))$, with an $\le$ inequality in general.

\begin{lemma}\label{lmm:beta lower bound}
Let $\sigma\in(\Delta(P_{st}))^n$. Let $G'$ result from $G$ by keeping only vertices and edges on paths $p$ for which $\sigma(p)>0$. Let $m'=m_c(G')$.
Then $\beta(\sigma)\ge1-\p{1-\frac{1}{m'}}^n=\frac{\emn{m'}{n}}{m'}$, and this bound is tight.
\end{lemma}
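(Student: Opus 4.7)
The plan is to exploit the product structure of $\sigma$. Since the agents are uncorrelated, $\sigma=(\sigma_1,\ldots,\sigma_n)$ is a product distribution, and independence gives
\[
\sigma^E(e)=1-\prod_{j=1}^n\p{1-\sigma_j^E(e)}
\]
for every $e\in E$. The task therefore reduces to locating a single edge $e^*$ on which the marginals $\sigma_j^E(e^*)$ are collectively large, and then invoking AM-GM.

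I would locate $e^*$ via a mincut argument on $G'$. Fix a minimum $s$-$t$ cut $(S,T)$ of $G'$, so that $|E(S,T)|=m'$. Every $s$-$t$ path starts in $S$ and ends in $T$, hence uses at least one edge of $E(S,T)$; moreover, every path in the support of each $\sigma_j$ is contained in $G'$ by construction. Consequently, for every agent $j$,
\[
\sum_{e\in E(S,T)}\sigma_j^E(e)=\E_{a_j\sim\sigma_j}\iob{\abs{a_j\cap E(S,T)}}\ge 1.
\]
Summing over $j$ and pigeonholing over the $m'$ edges of $E(S,T)$ produces an edge $e^*\in E(S,T)$ with $\frac{1}{n}\sum_j \sigma_j^E(e^*)\ge\frac{1}{m'}$. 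Applying AM-GM to the quantities $1-\sigma_j^E(e^*)\in[0,1]$ then gives
\[
\prod_j\p{1-\sigma_j^E(e^*)}\le\p{1-\tfrac{1}{n}\sum_j \sigma_j^E(e^*)}^n\le\p{1-\tfrac{1}{m'}}^n,
\]
which rearranges to the desired $\beta(\sigma)\ge\sigma^E(e^*)\ge 1-(1-1/m')^n$.

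For tightness, I would saturate both the pigeonhole and AM-GM steps simultaneously. By max-flow min-cut, together with the paper's remark that edge-disjointness and vertex-disjointness coincide in this setting, for any target $m'\le m_c(G)$ there exist $m'$ vertex-disjoint $s$-$t$ paths $p_1,\ldots,p_{m'}$; letting every agent play $U_{\s{p_1,\ldots,p_{m'}}}$ independently gives $\sigma_j^E(e)=1/m'$ for each edge on one of these paths and $0$ otherwise, so $\sigma^E(e)=1-(1-1/m')^n$ on those edges, while the resulting $G'$ is the union of the chosen paths with $m_c(G')=m'$. I do not anticipate a serious obstacle; the one step requiring a touch of care is the path-crosses-cut claim, which rests on the elementary observation that a walk from $S$ to $T$ must have at least one $S\to T$ crossing.
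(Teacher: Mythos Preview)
Your argument is correct and in fact cleaner than the paper's. The paper also starts from a mincut $\{e_1,\ldots,e_{m'}\}$ of $G'$, but then argues indirectly: it first shows that $\max_i\sigma^{E'}(e_i)$ can only decrease if all mass is moved onto the $m'$ flow paths $P_f$, then that the maximum over $i$ is smallest when all $\sigma(p_i)$ coincide, and finally invokes a separate AM--GM lemma (Lemma~\ref{lmm:with amgm}) to evaluate the extremum. Your pigeonhole on $\sum_{j}\sum_{e\in E(S,T)}\sigma_j^E(e)\ge n$ followed by a single AM--GM on the marginals $1-\sigma_j^E(e^*)$ bypasses both reduction steps; it exploits directly the convexity/monotonicity encoded in the formula $\sigma^E(e)=1-\prod_j(1-\sigma_j^E(e))$ rather than first engineering $\sigma$ into a symmetric form. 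The paper's route has the side benefit of identifying the extremal $\sigma$ explicitly (which it uses elsewhere), but for the bare inequality your path is shorter.

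One minor remark on tightness: the paper's comment that ``edge-disjointness and vertex-disjointness are the same'' refers to graphs in which \emph{all} $s$--$t$ paths are pairwise disjoint, so it does not literally give you vertex-disjoint paths in a general $G$. This is harmless, since your construction only needs \emph{edge}-disjoint paths $p_1,\ldots,p_{m'}$ (which max-flow/min-cut supplies): edge-disjointness alone guarantees each edge lies on at most one $p_i$, hence $\sigma_j^E(e)=1/m'$ on the used edges, and the cut $\{s\}$ versus $V\setminus\{s\}$ in $G'$ has exactly $m'$ edges, forcing $m_c(G')=m'$.
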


\begin{lemma}\label{lmm:beta upper bound n to 1}
For every $\sigma\in\Delta(P_{st})$ and $n\in\N$, $\beta(\sigma)\le\beta(\sigma^n)\le\min(m_c,n)\beta(\sigma)$.
\end{lemma}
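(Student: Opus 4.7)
The starting point is a clean closed form for $\beta(\sigma^n)$. Since $\sigma^n$ is a product of $n$ independent copies of $\sigma$, for any fixed edge $e\in E$ the events $\s{e\in a_i}$ are independent across $i$, so
\begin{align*}
(\sigma^n)^E(e)=1-\p{1-\sigma^E(e)}^n.
\end{align*}
The map $x\mapsto 1-(1-x)^n$ is monotone increasing on $[0,1]$, so maximizing over $e$ commutes with it and yields $\beta(\sigma^n)=1-(1-\beta(\sigma))^n$. From here, both inequalities of the lemma reduce to elementary scalar inequalities about this expression, plus one fact about the graph.

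For the lower bound $\beta(\sigma)\le\beta(\sigma^n)$, Bernoulli's inequality gives $1-(1-x)^n\ge x$ for $x\in[0,1]$ and $n\ge 1$; plug in $x=\beta(\sigma)$. For the upper bound I split the min into its two cases. The inequality $\beta(\sigma^n)\le n\beta(\sigma)$ is the scalar inequality $1-(1-x)^n\le nx$, provable by induction on $n$, or equivalently by a union bound on $\s{\exists i:\, e\in a_i}$. The inequality $\beta(\sigma^n)\le m_c\beta(\sigma)$ then follows from two trivial facts: $\beta(\sigma^n)\le 1$, and $\beta(\sigma)\ge 1/m_c$; the second, multiplied by $m_c$, gives $m_c\beta(\sigma)\ge 1\ge\beta(\sigma^n)$.

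The only ingredient that is not pure one-variable calculus is $\beta(\sigma)\ge 1/m_c$, and this is where the graph structure enters; it is essentially the $n=1$ case of Lemma~\ref{lmm:beta lower bound}. I would reprove it directly: fix any $s$--$t$ mincut $(S,T)$ with $\abs{(S,T)}=m_c$, and note that every $s$--$t$ path $p$ crosses the cut, so $\abs{p\cap E(S,T)}\ge 1$. Therefore
\begin{align*}
\sum_{e\in E(S,T)}\sigma^E(e)=\sum_{p\in P_{st}}\sigma(p)\cdot\abs{p\cap E(S,T)}\ge\sum_{p\in P_{st}}\sigma(p)=1,
\end{align*}
so some edge in the cut achieves $\sigma^E(e)\ge 1/m_c$, whence $\beta(\sigma)\ge 1/m_c$. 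This mincut step is the main (and only) obstacle: without it one cannot upgrade the trivial union bound $\beta(\sigma^n)\le n\beta(\sigma)$ to the $m_c$-bound that governs the regime $n\gg m_c$.
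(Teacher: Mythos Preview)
Your proof is correct and follows essentially the same route as the paper's: the closed form $\beta(\sigma^n)=1-(1-\beta(\sigma))^n$, the scalar inequality $1-(1-x)^n\le nx$ for the $n$-bound, and the combination of $\beta(\sigma^n)\le 1$ with $\beta(\sigma)\ge 1/m_c$ (the paper cites Lemma~\ref{lmm:beta lower bound} for the latter, whereas you reprove it directly via a mincut). One minor slip: the inequality $1-(1-x)^n\ge x$ you invoke for the lower bound is not Bernoulli's inequality but simply $(1-x)^n\le 1-x$ for $x\in[0,1]$ and $n\ge 1$; Bernoulli is what gives the other direction $1-(1-x)^n\le nx$, and this is in fact how the paper labels it.
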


\begin{lemma}\label{lmm:r upper bound n to 1}
Let $x\in\R$. Suppose that $C$ is $F$-nice and for every $\sigma\in\Delta(P_{st})$ there holds $F_n(\sigma^n)\le xF_1(\sigma)$. Then $r(G,n,k,\alpha)\le x$.
\end{lemma}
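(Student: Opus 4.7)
\medskip

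\noindent\textbf{Plan.} The strategy is simply to exhibit an uncorrelated joint strategy whose cost already witnesses the desired ratio. Recall from the preceding simplification in the paper that, under the hypotheses that make the cost $F$-nice (i.e., correlated agents optimally collapse to a single path, as happens for both SUM and MAX), the PoU has the form
\[
r(G,n,k,\alpha)=\frac{\min_{\sigma\in(\Delta(P_{st}))^n}F_n(\sigma,\alpha)}{\min_{\sigma\in\Delta(P_{st})}F_1(\sigma,\alpha)}.
\]
So to bound $r$ from above it suffices to produce one uncorrelated strategy whose $F_n$-value is at most $x$ times the denominator.

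\smallskip

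\noindent\textbf{Key steps.} First I would let $\sigma^{*}\in\Delta(P_{st})$ be any minimizer of the denominator, so that $F_1(\sigma^{*},\alpha)=\min_{\sigma\in\Delta(P_{st})}F_1(\sigma,\alpha)$; existence follows from compactness of $\Delta(P_{st})$ together with continuity of $F_1$ (the $F$-niceness hypothesis includes enough regularity for this, and the cases of interest are finite maxima of continuous linear functions). Next, I would form the product strategy $(\sigma^{*})^n\in(\Delta(P_{st}))^n$, in which each of the $n$ agents independently samples its path from $\sigma^{*}$. This is a perfectly legitimate uncorrelated joint strategy, so it lies in the feasible set of the numerator.

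\smallskip

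\noindent\textbf{Applying the hypothesis.} By the assumed inequality $F_n(\sigma^n,\alpha)\le x\,F_1(\sigma,\alpha)$ specialized to $\sigma=\sigma^{*}$, we get
\[
\min_{\sigma\in(\Delta(P_{st}))^n}F_n(\sigma,\alpha)\;\le\;F_n((\sigma^{*})^n,\alpha)\;\le\;x\,F_1(\sigma^{*},\alpha)\;=\;x\cdot\min_{\sigma\in\Delta(P_{st})}F_1(\sigma,\alpha).
\]
Dividing through yields $r(G,n,k,\alpha)\le x$, as required.

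\smallskip

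\noindent\textbf{Main obstacle.} There is essentially no computational difficulty; the only subtlety is purely bookkeeping, namely justifying that the minimizer $\sigma^{*}$ exists and that the denominator in the PoU really reduces to $\min_{\sigma\in\Delta(P_{st})}F_1(\sigma,\alpha)$ rather than to a minimum over fully correlated joint distributions on $(P_{st})^n$. Both are packaged into the phrase ``$C$ is $F$-nice'' and were recorded in the simplification that directly precedes this lemma in Section~\ref{sec:defs}, so the remaining argument is a one-line application of the hypothesis to the product strategy $(\sigma^{*})^n$.
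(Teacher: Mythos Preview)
Your proposal is correct and follows exactly the same approach as the paper: pick a minimizer $\sigma^{*}$ of $F_1$, feed the product strategy $(\sigma^{*})^n$ into the numerator, and apply the hypothesis. The paper's proof is the same one-liner (with $\nu$ in place of your $\sigma^{*}$), without the extra commentary on existence and the $F$-nice reduction.
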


Suppose that $P_{st}=\s{p_1,...,p_m}$ is a set of disjoint paths. Let $S=\{X\in\Delta(P_{st})\mid\forall j<j',~~X(p_i)\ge X(p_{i'})\}$, the set of decreasing distributions with respect to the indexing of $P_{st}$. In our analysis, $S$ will come up in the case of disjoint paths, where path lengths are increasing with respect to indexing.

\begin{lemma}\label{lmm:convhull}
$S=\convhull(\mathcal{U}_1)$, and for every $n,$ $\convhull(\mathcal{U}_n)=\s{X^n\mid X\in\convhull(\mathcal{U}_1)}$.
\end{lemma}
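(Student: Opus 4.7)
The plan is to handle the two equalities separately. For the first equality $S=\convhull(\mathcal{U}_1)$, I will give an explicit convex decomposition of every decreasing distribution as a combination of the prefix-uniform distributions $U_{P_i}$ (where $P_i=\s{p_1,\ldots,p_i}$, which are exactly the elements of $\mathcal{U}_1$ since the disjoint-paths graph has a unique maximal flow using all of $P_{st}$). For the second equality, I will use that the paper's convention treats an uncorrelated joint strategy $\sigma$ as a tuple $(\sigma_1,\ldots,\sigma_n)\in\R^{mn}$, so $X^n$ denotes the tuple $(X,\ldots,X)$ and the diagonal map $X\mapsto(X,\ldots,X)$ is affine, hence commutes with convex hulls.

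For the inclusion $\convhull(\mathcal{U}_1)\subseteq S$, each $U_{P_i}$ puts mass $1/i$ on the first $i$ paths and $0$ elsewhere, so it lies in $S$; since $S$ is cut out by linear inequalities $X(p_i)\ge X(p_{i+1})$ it is convex, and the hull stays inside. For the reverse inclusion, given $X\in S$ I plan to define weights by an Abel-summation trick:
\[
\lambda_i \;:=\; i\bigl(X(p_i)-X(p_{i+1})\bigr) \ (i<m), \qquad \lambda_m \;:=\; m\,X(p_m).
\]
Monotonicity of $X$ makes each $\lambda_i\ge0$. For a fixed $j$, $U_{P_i}(p_j)=1/i$ when $i\ge j$ and $0$ otherwise, so $\sum_i \lambda_i U_{P_i}(p_j)=\sum_{i=j}^{m-1}(X(p_i)-X(p_{i+1}))+X(p_m)$, which telescopes to $X(p_j)$. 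Consequently both $X$ and $\sum_i\lambda_i U_{P_i}$ are probability distributions agreeing on every $p_j$, forcing $\sum_i\lambda_i=1$, and exhibiting $X$ as a convex combination of elements of $\mathcal{U}_1$.

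For Part 2, by definition $\mathcal{U}_n=\phi(\mathcal{U}_1)$ where $\phi:\Delta(P_{st})\to(\Delta(P_{st}))^n$ is $\phi(X)=(X,\ldots,X)$. This map is the restriction of a linear map $\R^m\to\R^{mn}$ and therefore affine, so it commutes with taking convex hulls: $\convhull(\mathcal{U}_n)=\convhull(\phi(\mathcal{U}_1))=\phi(\convhull(\mathcal{U}_1))=\s{X^n\mid X\in\convhull(\mathcal{U}_1)}$. The main point to get right -- and the only real obstacle -- is the interpretation: the ambient space for $\convhull(\mathcal{U}_n)$ is the tuple space $(\Delta(P_{st}))^n\subseteq\R^{mn}$ used throughout the paper for uncorrelated strategies, not $\Delta((P_{st})^n)$. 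Under the tuple interpretation the identity reduces to the fact that the diagonal embedding is affine, and no calculation involving products of measures is needed; under the alternative (product-measure) reading the statement would have to be understood up to the natural identification of $\phi(X)$ with the product distribution $X\otimes\cdots\otimes X$, which again makes $\phi$ affine in the relevant coordinates.
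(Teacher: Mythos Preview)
Your proof is correct and follows essentially the same approach as the paper: the first part uses exactly the paper's Abel-summation coefficients $\lambda_i=i(X(p_i)-X(p_{i+1}))$, $\lambda_m=m X(p_m)$ (the paper writes them as $y_i$ and asserts $\sum y_i U_{P_i}=X$ without spelling out the telescoping), and for the second part the paper also treats $U_{P_i}^n$ and $X^n$ as vectors in $\R^{m\times n}$, i.e., as tuples under the diagonal embedding, which is precisely your affine-map argument. Your explicit discussion of why the tuple interpretation is the right one is a helpful clarification that the paper leaves implicit until the final line of its proof.
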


\section{SUM Cost}\label{sec:sum}
The full proofs of this section can be found in Appendix \ref{appendix:sum}.

\subsection{General $s-t$ graphs}\label{subsec:sum general st graphs}
We begin with Theorem \ref{thm:informal pou bounds} for the SUM cost.
We begin with the following technical lemma. Essentially, it observes that the SUM cost is a sum of two factors. The first is the cost incurred by the interceptors, which is proportional to the maximum probability mass placed on an edge by the joint strategy of the agents. The second factor is simply the expected number of edges used by the agents.

\begin{lemma}\label{lmm:sum cost F form}
$F_n(\sigma,\alpha)=k\alpha\beta(\sigma)+\E_{a\sim\sigma}\p{\abs{\bigcup a}}$,
and when $P:=\bigcup_j\supp\sigma_j$ is a set of disjoint paths, $F_n(\sigma,\alpha)=k\alpha\beta(\sigma)+\sum_{p\in P}\sigma(p)|p|$.
\end{lemma}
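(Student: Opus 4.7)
The proof plan is routine; the statement is essentially a decomposition of the SUM cost into two parts, one that the interceptors optimize against and one that they don't. Here is the outline.

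First, I would expand $F_n(\sigma,\alpha)=\max_{d\in E^k}\E_{a\sim\sigma}\iob*{C(a,d,\alpha)}$ using linearity of expectation applied to the definition of the SUM cost:
\[
F_n(\sigma,\alpha)=\max_{d\in E^k}\p*{\E_{a\sim\sigma}\abs*{\bigcup a}+\alpha\sum_{j=1}^k\Pr_{a\sim\sigma}\p*{d_j\in\bigcup a}}.
\]
The first term is independent of $d$, so it pulls out of the maximum. For the second term, the definition $\sigma^E(e)=\Pr_{a\sim\sigma}(e\in\bigcup a)$ lets me rewrite $\Pr_{a\sim\sigma}(d_j\in\bigcup a)=\sigma^E(d_j)$. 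Since the interceptors are unconstrained and each $d_j$ is chosen from $E$ independently, the maximum over $d\in E^k$ factors as $\max_{d\in E^k}\sum_j\sigma^E(d_j)=\sum_j\max_{e\in E}\sigma^E(e)=k\beta(\sigma)$, by definition of $\beta$. Combining gives the first claimed identity.

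For the second identity, under the assumption that $P=\bigcup_j\supp\sigma_j$ consists of pairwise disjoint paths, every realization $a\in(P_{st})^n$ of $\sigma$ satisfies $a_i\in P$ for each $i$, and the disjointness of the paths in $P$ yields the pointwise identity
\[
\abs*{\bigcup a}=\sum_{p\in P}|p|\cdot 1\s*{p\in a}.
\]
Taking expectations and applying the definition $\sigma(p)=\Pr_{a\sim\sigma}(p\in a)$ gives $\E_{a\sim\sigma}\abs*{\bigcup a}=\sum_{p\in P}\sigma(p)|p|$, which substituted into the first identity yields the second.

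There is no real obstacle: the only thing to be slightly careful about is that the interceptors are correlated (they share randomness) and joint over $E^k$, but because the contribution of each $d_j$ to the cost is independent of the other $d_{j'}$ once $a$ is fixed, the joint maximization collapses into $k$ independent maximizations, which is why the $k\alpha\beta(\sigma)$ term comes out cleanly. This is consistent with the earlier remark that interceptors can be taken to play pure strategies without loss of generality.
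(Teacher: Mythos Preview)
Your proof is correct and matches the paper's approach for the first identity essentially line by line: expand by linearity, separate the $d$-independent term, rewrite the interceptor term via $\sigma^E$, and factor the maximum over $E^k$ into $k$ copies of $\max_e\sigma^E(e)=\beta(\sigma)$.

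For the second identity there is a minor cosmetic difference worth noting. The paper computes $\E_{a\sim\sigma}\abs{\bigcup a}$ edge by edge, writing it as $\sum_{e\in E}\sigma^E(e)$, then using disjointness to get $\sigma^E(e)=\sum_{p\ni e}\sigma(p)$ and swapping the order of summation. You instead decompose path by path via the pointwise identity $\abs{\bigcup a}=\sum_{p\in P}|p|\cdot 1\s{p\in a}$ and take expectations directly. Both arguments are equally elementary; yours is slightly more direct here, while the paper's edge-wise viewpoint is what gets reused later (e.g., in the proof of Lemma~\ref{lmm:max Fn bounds}).
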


We now state the following theorem about uniform strategies, which shows that they are approximately optimal agent strategies.

\begin{theorem}\label{thm:general cost uniform is approx}
Let $\e>0$, and $\alpha\ge\frac{m_c}{k\e\emn{m_c}{n}}\p{n\abs\pmax-(1+\e)\pmin}$. Then
\begin{align*}
    \min_{\sigma\in\mathcal{U}_n}F_n(\sigma,\alpha)\le(1+\e)\min_{\sigma\in(\Delta(P_{st}))^n}F_n(\sigma,\alpha)
\end{align*}
\end{theorem}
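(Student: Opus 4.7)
The plan is to apply Lemma~\ref{lmm:sum cost F form} to both sides and exhibit a single concrete uniform strategy that nearly matches the unrestricted minimum. Since $F_n(\sigma,\alpha)=k\alpha\beta(\sigma)+\E_{a\sim\sigma}\abs{\bigcup a}$, the proof reduces to bounding the two summands separately: as $\alpha$ grows, the $\beta(\sigma)$ term dominates and pushes the optimizer (whether in $\mathcal{U}_n$ or in $(\Delta(P_{st}))^n$) toward spreading probability over the edges of a maximum $s$--$t$ flow, while the length term is always sandwiched between $\abs{\pmin}$ and $n\abs{\pmax}$.

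For the lower bound on the denominator, fix any $\sigma^*\in(\Delta(P_{st}))^n$ and let $m'$ denote the mincut size of the subgraph of edges actually used by $\sigma^*$. Lemma~\ref{lmm:beta lower bound} gives $\beta(\sigma^*)\ge 1-(1-1/m')^n=\emn{m'}{n}/m'$. The function $\ell\mapsto 1-(1-1/\ell)^n$ is decreasing in $\ell\ge 1$, and $m'\le m_c$, so $\beta(\sigma^*)\ge\emn{m_c}{n}/m_c$. Combined with the trivial bound $\E\abs{\bigcup a}\ge\abs{\pmin}$, valid because every realized $\bigcup a$ contains at least one $s$--$t$ path, this yields
\[
\min_{\sigma\in(\Delta(P_{st}))^n} F_n(\sigma,\alpha)\ge k\alpha\cdot\frac{\emn{m_c}{n}}{m_c}+\abs{\pmin}.
\]

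For the upper bound on the numerator, pick $f\in MF$ with $\abs{f}=m_c$ and take $\sigma=U_{P_f}^n\in\mathcal{U}_n$: each agent independently samples a uniform path in the disjoint family $P_f=\{p_1,\ldots,p_{m_c}\}$. Since the paths are edge-disjoint, every edge in $\bigcup P_f$ is covered with probability exactly $1-(1-1/m_c)^n$, so $\beta(\sigma)=\emn{m_c}{n}/m_c$. Using the disjoint-paths case of Lemma~\ref{lmm:sum cost F form},
\[
\E\abs{\bigcup a}=\sum_{p\in P_f}\sigma(p)\abs{p}=\frac{\emn{m_c}{n}}{m_c}\sum_{i=1}^{m_c}\abs{p_i}\le\emn{m_c}{n}\abs{\pmax}\le n\abs{\pmax},
\]
the last inequality using $\emn{m_c}{n}\le n$. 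Hence $\min_{\sigma\in\mathcal{U}_n}F_n(\sigma,\alpha)\le k\alpha\cdot\emn{m_c}{n}/m_c+n\abs{\pmax}$.

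Combining the two bounds, the inequality $\min_{\mathcal{U}_n}F_n\le(1+\e)\min_{(\Delta(P_{st}))^n}F_n$ becomes $n\abs{\pmax}-(1+\e)\abs{\pmin}\le\e k\alpha\cdot\emn{m_c}{n}/m_c$, which is exactly the hypothesis on $\alpha$ after rearrangement. The only delicate point is the passage from $m'$ to $m_c$ in the lower bound on $\beta(\sigma^*)$: a literal application of Lemma~\ref{lmm:beta lower bound} yields a bound in terms of $m'$, and a clever agent strategy could in principle concentrate on a subgraph with small $m'$; the monotonicity of $1-(1-1/\ell)^n$ in $\ell$ is what ensures this concentration only helps the lower bound, making the comparison with the specific uniform strategy on a maximum flow valid.
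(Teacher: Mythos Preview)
Your proof is correct and follows essentially the same approach as the paper: upper-bound the uniform side by evaluating $F_n$ at $U_{P_f}^n$ for a max-flow $f$, lower-bound the unrestricted side via Lemma~\ref{lmm:beta lower bound} and the trivial length bound $\abs{\pmin}$, then rearrange. Your treatment is in fact more careful than the paper's, which applies Lemma~\ref{lmm:beta lower bound} directly as if it yielded $\emn{m_c}{n}/m_c$; you correctly isolate the passage from $m'$ to $m_c$ and justify it via the monotonicity of $\ell\mapsto 1-(1-1/\ell)^n$ together with $m'\le m_c$.
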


Intuitively, as $\alpha$ grows large, the paths' inherent lengths (namely, ignoring $\alpha$) become smaller than $\alpha$. Therefore, the agents will get increasingly concerned with avoiding encounters with interceptors over the base cost of sending traffic over long paths. To address this concern, the agents will prefer strategies $\sigma$ that have small $\beta(\sigma)$. This is because the additional cost caused by the interceptors will be $k\alpha\beta(\sigma)$, and the agents can only control the rightmost factor.

The theorem above implies our two main results. The first is the approximation of the PoU with the uniform PoU.

\begin{theorem}\label{thm:general cost ru is approx}
Let $\e>0$, and $\alpha\ge\frac{m_c}{k\e}\p{n\abs\pmax-(1+\e)\abs\pmin}$. Then
\begin{align*}
    \frac{r(G,n,k,\alpha)}{r_U(G,n,k,\alpha)}\in\left[\frac{1}{1+\e},1+\e\right]
\end{align*}
\end{theorem}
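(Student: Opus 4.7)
The plan is to derive this as a two-sided sandwich consequence of Theorem \ref{thm:general cost uniform is approx}, applied twice — once with the given $n$ to control the numerators of $r$ and $r_U$, and once with $n=1$ to control their denominators.

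The first step is to verify that the hypothesis of the present theorem, $\alpha\ge\frac{m_c}{k\e}\p{n\abs\pmax-(1+\e)\abs\pmin}$, is strong enough for both invocations. Theorem \ref{thm:general cost uniform is approx} at a value $n'$ requires $\alpha\ge\frac{m_c}{k\e\emn{m_c}{n'}}\p{n'\abs\pmax-(1+\e)\abs\pmin}$. Since $\emn{m_c}{n'}\ge 1$ for every $n'\ge 1$ (noting $\emn{\ell}{1}=1$ and $\emn{\ell}{n}$ is nondecreasing in $n$) and $n'\le n$ implies $n'\abs\pmax\le n\abs\pmax$, both the $n'=n$ and $n'=1$ required lower bounds on $\alpha$ are dominated by the one assumed here.

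Next, for the denominator of $r_U$ I would use the observation recorded after the definition of $r$: when $n$ correlated agents share a common strategy $\sigma\in\Delta(P_{st})$, they all draw the same path, so $\bigcup a=a_1$ and hence $F_n(\sigma,\alpha)=F_1(\sigma,\alpha)$. In particular $\min_{\sigma\in\mathcal{U}_1}F_n(\sigma,\alpha)=\min_{\sigma\in\mathcal{U}_1}F_1(\sigma,\alpha)$, putting the denominator of $r_U$ on exactly the same footing as the denominator of $r$.

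With these reductions in hand, Theorem \ref{thm:general cost uniform is approx} applied at $n$ and at $1$ yields
\begin{align*}
    \min_{\sigma\in\mathcal{U}_n}F_n(\sigma,\alpha) &\le (1+\e)\min_{\sigma\in(\Delta(P_{st}))^n}F_n(\sigma,\alpha), \\
    \min_{\sigma\in\mathcal{U}_1}F_1(\sigma,\alpha) &\le (1+\e)\min_{\sigma\in\Delta(P_{st})}F_1(\sigma,\alpha),
\end{align*}
while the reverse inequalities (without the $1+\e$ factor) are immediate from $\mathcal{U}_n\subseteq(\Delta(P_{st}))^n$ and $\mathcal{U}_1\subseteq\Delta(P_{st})$. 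Dividing, each of the two ratios (uniform cost)/(unrestricted cost) lies in $[1,1+\e]$, so their quotient $r_U/r$ lies in $\left[\frac{1}{1+\e},\,1+\e\right]$; since that interval is self-reciprocal, the same bound holds for $r/r_U$. The argument is essentially bookkeeping on top of Theorem \ref{thm:general cost uniform is approx}, and the only step demanding any real attention is the hypothesis-dominance check in the first paragraph.
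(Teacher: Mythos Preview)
Your proof is correct and follows the route the paper itself indicates: the paper states that Theorem \ref{thm:general cost ru is approx} is implied by Theorem \ref{thm:general cost uniform is approx} but omits the details, and your two applications of that theorem (at $n$ and at $n'=1$), together with the trivial containments $\mathcal{U}_n\subseteq(\Delta(P_{st}))^n$, supply exactly those details. The hypothesis-dominance check using $\emn{m_c}{n'}\ge\emn{m_c}{1}=1$ and $n'\le n$ is the right way to see that the single assumed bound on $\alpha$ covers both invocations.
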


The second main result of this section is our bounds on the PoU. We provide both asymptotic and non-asymptotic bounds.

\begin{theorem}\label{thm:general costs r bounds}
Let $\e>0$. We have the following bounds on $r$:
\begin{enumerate}
    \item (Simple upper bound) $r(G,n,k,\alpha)\le\min(m_c,n)$.
    \item (Upper bound) Let
    $\alpha\ge\frac{m_c}{k\emn{m_c}{n}\e}\p{n\abs\pmax-(1+\e)\emn{m_c}{n}\abs\pmin}$.
    Then
    $$r(G,n,k,\alpha)\le(1+\e)\emn{m_c}{n}$$
    \item (Lower bound) Let $\alpha\ge\frac{m_c}{k\emn{m_c}{n}\e}\p{(1-\e)\emn{m_c}{n}n\abs\pmax-\abs\pmin}$. Then
    $$r(G,n,k,\alpha)\ge(1-\e)\emn{m_c}{n}$$
    \item (Limit) $\lim_{\alpha\to\infty}r(G,n,k,\alpha)=\emn{m_c}{n}$.
\end{enumerate}
\end{theorem}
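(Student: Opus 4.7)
The plan is to establish Parts 2 and 3 first (they carry the quantitative substance), deduce Part 4 by sending $\e\to 0^+$, and then recover Part 1 as a coarser universal bound. Throughout, the workhorse strategies are the uncorrelated uniform iid strategy $U_P^n\in\mathcal{U}_n$ and the correlated uniform strategy $U_P\in\mathcal{U}_1$, where $P=\{p_1,\dots,p_{m_c}\}$ is a set of $m_c$ edge-disjoint $s$-$t$ paths realizing the mincut of $G$. Under $U_P^n$ every edge on these paths carries single-agent marginal probability $1/m_c$, so Lemma~\ref{lmm:sum cost F form} gives the clean closed forms $F_n(U_P^n,\alpha)=\frac{\emn{m_c}{n}}{m_c}(k\alpha+\sum_{i\le m_c}|p_i|)$ and $F_1(U_P,\alpha)=\frac{1}{m_c}(k\alpha+\sum_i|p_i|)$, whose ratio is exactly $\emn{m_c}{n}$. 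This identity is the asymptotic target, and Parts 2--4 amount to quantifying the slack around it.

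For Part 2, use $U_P^n$ to upper-bound the TME cost and then lower-bound the TMECor cost by $k\alpha/m_c+\abs\pmin$, using Lemma~\ref{lmm:beta lower bound} at $n=1$ (which forces $\beta\ge 1/m_c$) together with the trivial $\E[|a|]\ge\abs\pmin$. Using $\sum_i|p_i|\le m_c\abs\pmax$, the requirement that the resulting ratio not exceed $(1+\e)\emn{m_c}{n}$ becomes a linear inequality in $\alpha$ and rearranges into the stated threshold. For Part 3 the roles swap: by Lemma~\ref{lmm:beta lower bound} applied to $n$ agents, together with the fact that $m\mapsto 1-(1-1/m)^n$ is decreasing in $m$ and that the restricted mincut $m'$ always satisfies $m'\le m_c$, every uncorrelated $\sigma$ has $\beta(\sigma)\ge\emn{m_c}{n}/m_c$, hence TME $\ge k\alpha\emn{m_c}{n}/m_c$. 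Upper-bound TMECor by $F_1(U_P)\le k\alpha/m_c+\abs\pmax$. The inequality $r(\alpha)\ge (1-\e)\emn{m_c}{n}$ then rearranges into its stated threshold. Part 4 is the immediate consequence of letting $\e\to 0^+$: for any fixed $\e>0$ both thresholds are finite, and as $\alpha\to\infty$ the value of $r(\alpha)$ is squeezed onto $\emn{m_c}{n}$.

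Part 1 I would attack via Lemma~\ref{lmm:r upper bound n to 1} with $x=\min(m_c,n)$, taking iid play $\sigma^{*n}$ of a TMECor-optimal $\sigma^*$. Lemma~\ref{lmm:beta upper bound n to 1} gives $\beta(\sigma^{*n})\le\min(m_c,n)\beta(\sigma^*)$, and the union bound gives $\E[|\bigcup a|]\le n\E[|a^*|]$; when $n\le m_c$ these collapse to $F_n(\sigma^{*n})\le n F_1(\sigma^*)$, yielding $r\le n=\min(m_c,n)$. When $n>m_c$, the union-bound factor $n$ on the length term exceeds $m_c$, so iid play alone does not deliver the $m_c$ bound; the plan is to switch the candidate to $U_P^n$, for which $\E[|\bigcup a|]\le\sum_i|p_i|\le m_c\abs\pmax$ is controlled by the graph rather than by $n$. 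The hard part I anticipate is closing the $n>m_c$ subcase uniformly in $\alpha$: large $\alpha$ is clean via Part 2 and the trivial $\emn{m_c}{n}\le m_c$, and $\alpha\to 0$ forces $r\to 1$, but the intermediate regime likely requires a careful combination of the two TMECor lower bounds $F_1^*\ge k\alpha/m_c$ and $F_1^*\ge\abs\pmin$, possibly by interpolating between $U_P^n$ and a deterministic shortest-path candidate depending on the relative sizes of $k\alpha$ and $\abs\pmin$.
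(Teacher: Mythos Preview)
Your approach to Parts~2--4 is essentially the paper's. You even use the sharper closed form $F_n(U_P^n,\alpha)=\frac{\emn{m_c}{n}}{m_c}\bigl(k\alpha+\sum_i|p_i|\bigr)$, whereas the paper bounds the length term more crudely by $n\abs\pmax$; your rearrangement therefore yields a \emph{smaller} threshold than the one stated, so the stated threshold certainly suffices and the implication is proved. (Your ``rearranges into the stated threshold'' is thus a slight overstatement, but in the harmless direction.) In Part~3 you drop the additive $\abs\pmin$ from the TME lower bound; the paper keeps it, which is what produces the exact threshold in the statement, but again your looser bound only yields a stronger conclusion. Part~4 follows from Parts~2--3 exactly as you say.

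The genuine gap is Part~1, specifically the bound $r\le m_c$. You correctly obtain $r\le n$ via Lemma~\ref{lmm:r upper bound n to 1} together with $\beta(\sigma^n)\le n\beta(\sigma)$ and the union bound on $\E\abs{\bigcup a}$. For the $m_c$ half you anticipate an interpolation between $U_P^n$ and a deterministic shortest-path strategy across regimes of $\alpha$; this is unnecessary. The paper's observation is that the deterministic strategy $1_{\pmin}^n$ \emph{alone} does the job uniformly in $\alpha$ and $n$: one has
\[
\min_{\sigma}F_n(\sigma,\alpha)\le F_n(1_{\pmin}^n,\alpha)=\abs\pmin+k\alpha,
\]
while for every single-agent $\sigma$, Lemma~\ref{lmm:beta lower bound} gives $\beta(\sigma)\ge 1/m_c$ and hence $F_1(\sigma,\alpha)\ge\abs\pmin+k\alpha/m_c$. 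Therefore
\[
r(G,n,k,\alpha)\le\frac{\abs\pmin+k\alpha}{\abs\pmin+k\alpha/m_c}\le m_c,
\]
the last inequality being equivalent to $(m_c-1)\abs\pmin\ge 0$. No case analysis on the size of $k\alpha$ relative to $\abs\pmin$ is needed, and $U_P^n$ is not used at all for this part. Your proposed route via $U_P^n$ runs into exactly the obstacle you flagged: it would require $\emn{m_c}{n}\abs\pmax\le m_c\abs\pmin$, which need not hold.
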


We conjecture that for general $s-t$ graphs, there holds
\begin{align}
    \min_{\sigma\in(\Delta(P_{st}))^n}F_n(\sigma,\alpha)=\min_{\sigma\in\mathcal{U}_n}F_n(\sigma,\alpha)
\end{align}
and as a consequence, there holds $r(G,n,k,\text{SUM},\alpha)=r_U(G,n,k,\text{SUM},\alpha)$. This is a stronger result about SUM than Theorem \ref{thm:general costs r bounds}, for which we do not have a proof at this time. However, we do prove the conjecture in the case where $G$ has only disjoint paths, which seems to support the validity of the conjecture. If the conjecture is true, then $r$ will satisfy all the properties that $r_U$ satisfies.
Because SUM has convenient properties (e.g. Lemma \ref{lmm:sum cost F form}), $r_U$ also has several appealing properties: it is smooth a.s. (almost surely), continuous and strictly increasing in $\alpha$ until it hits $\emn{m_c}{n}$, where it plateaus. In addition, $r_U$ can be computed in polynomial time in the game size.

\subsubsection*{Properties of $r_U$}
For every $i\in[m_c]$ pick a subset$$P^*_i\in\argmin_{f\in MF,P\in\prefix(P_f),|P|=i}\sum_{p\in P}|p|$$
By definition, for every $n\in\N$
\begin{align*}
    \min_{U^n\in\mathcal{U}_n}F_n(U^n,\alpha)=\min_iF_n(U_{P^*_i}^n,\alpha)=\min_i\p{\frac{\emn{i}{n}}{i}\p{k\alpha+\sum_{p\in P^*_i}|p|}}
\end{align*}
This defines a more explicit form for $r_U$. By the minimality of each $P^*_i$, the sums $S_i:=\sum_{p\in P^*_i}|p|$ form a strictly increasing sequence. We now formally state the properties of $r_U$ described above.

\begin{theorem}\label{thm:r_U characterization}
$r_U(G,n,k,\alpha)$ is a function that is smooth a.s., continuous and strictly increasing in $\alpha$. Furthermore, there holds $r_U(G,n,k,\alpha)\le\emn{m_c}{n}$. There is a function $A(G,n,k)\in[0,\infty)$ such that equality holds for every $\alpha\ge A(G,n,k)$.    
\end{theorem}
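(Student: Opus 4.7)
The plan is to reduce everything to the explicit piecewise-linear representation written just above the theorem. An entirely analogous computation (a single correlated agent uniformly randomizing on $P^*_i$) for the denominator gives
\begin{align*}
\min_{\sigma\in\mathcal{U}_1}F_n(\sigma,\alpha)=\min_{i\in[m_c]}\frac{k\alpha+S_i}{i},
\end{align*}
so setting $g_i(\alpha):=(k\alpha+S_i)/i$ and $h_i(\alpha):=\emn{i}{n}g_i(\alpha)$ I obtain
\begin{align*}
r_U(\alpha)=\frac{N(\alpha)}{D(\alpha)},\qquad N(\alpha):=\min_ih_i(\alpha),\ D(\alpha):=\min_ig_i(\alpha).
\end{align*}
Both $N$ and $D$ are minima of finitely many affine functions of $\alpha$ with positive slope, hence continuous and piecewise linear with finitely many breakpoints, and $D(\alpha)\ge S_1>0$. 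Therefore $r_U$ is continuous and, as a rational function on each open piece, smooth almost everywhere.

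For monotonicity, on a piece on which the minimizers are $i_N$ and $i_D$, I would compute
\begin{align*}
\frac{dr_U}{d\alpha}=\frac{i_D\,\emn{i_N}{n}\,k\,(S_{i_D}-S_{i_N})}{i_N\,(k\alpha+S_{i_D})^2},
\end{align*}
so the sign of the derivative equals $\sign(S_{i_D}-S_{i_N})=\sign(i_D-i_N)$ by strict monotonicity of $(S_i)_i$. The main inequality to prove is therefore $i_N\le i_D$. This is where the core of the argument sits: combining $g_{i_D}\le g_{i_N}$ (definition of $i_D$) with $\emn{i_N}{n}g_{i_N}\le\emn{i_D}{n}g_{i_D}$ (definition of $i_N$) forces $\emn{i_N}{n}\le\emn{i_D}{n}$, and since $\emn{i}{n}=i(1-(1-1/i)^n)$ is strictly increasing in $i$ (verified by differentiating, or by the balls-in-bins interpretation), this yields $i_N\le i_D$. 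The derivative is therefore nonnegative on every piece, so $r_U$ is monotonically increasing, and strictly so on each piece where $i_N<i_D$.

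The upper bound $r_U(\alpha)\le\emn{m_c}{n}$ is then immediate: using $i=i_D(\alpha)$ as a candidate in the minimum defining $N(\alpha)$ gives $N(\alpha)\le h_{i_D}(\alpha)=\emn{i_D}{n}D(\alpha)$, and $\emn{i_D}{n}\le\emn{m_c}{n}$ by monotonicity of $\emn{\cdot}{n}$. For the plateau, for every $i<m_c$ the two inequalities $g_{m_c}(\alpha)\le g_i(\alpha)$ and $\emn{m_c}{n}g_{m_c}(\alpha)\le\emn{i}{n}g_i(\alpha)$ are affine in $\alpha$ whose leading coefficients satisfy $k/m_c<k/i$ and $\emn{m_c}{n}k/m_c<\emn{i}{n}k/i$ respectively, the latter because $\emn{i}{n}/i=1-(1-1/i)^n$ is strictly decreasing in $i$. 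Hence each holds for all sufficiently large $\alpha$; I let $A(G,n,k)$ be the maximum of the $2(m_c-1)$ thresholds so produced, and for $\alpha\ge A(G,n,k)$ we get $i_N=i_D=m_c$, so $r_U(\alpha)=\emn{m_c}{n}$. The principal obstacle is establishing $i_N\le i_D$; all other ingredients (continuity, smoothness, the $\emn{m_c}{n}$ bound, and the existence of $A$) reduce either to piecewise-linear bookkeeping or to the two elementary monotonicity facts about $\emn{i}{n}$ (increasing in $i$) and $\emn{i}{n}/i$ (decreasing in $i$).
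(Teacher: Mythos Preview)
Your overall framework is the same as the paper's: write $r_U=N/D$ with $N(\alpha)=\min_i\emn{i}{n}g_i(\alpha)$ and $D(\alpha)=\min_ig_i(\alpha)$, read off continuity and piecewise smoothness, and study how the minimizing indices $i_N,i_D$ relate. The difference is in how the key comparison $i_N\le i_D$ is obtained. The paper fixes $i=i_D$, takes any $j>i$, explicitly computes the crossing points $\alpha_1$ (for the denominator) and $\alpha_n$ (for the numerator) of the pair $h_i,h_j$, and shows $\alpha_n\ge\alpha_1$ by an algebraic manipulation reducing to $S_j(E_j-E_i)\ge S_i(E_j-E_i)$. Your two-line argument---chain $g_{i_D}\le g_{i_N}$ with $\emn{i_N}{n}g_{i_N}\le\emn{i_D}{n}g_{i_D}$ to get $\emn{i_N}{n}\le\emn{i_D}{n}$, then use monotonicity of $i\mapsto\emn{i}{n}$---is cleaner and avoids that computation entirely. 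Your explicit derivative formula is also more transparent than the paper's closing step, which infers strict monotonicity of $N/D$ from $N'>D'$; that implication is not valid in general (take $N=2\alpha+100$, $D=\alpha+1$), so your sign analysis via $\sign(S_{i_D}-S_{i_N})$ is the sounder route.

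One point to be aware of: your argument yields only that $r_U$ is nondecreasing, with strict increase exactly on pieces where $i_N<i_D$. You are explicit about this, but it means you have not literally established the ``strictly increasing'' clause as stated. In fact that clause cannot hold globally, since the theorem itself asserts a plateau for $\alpha\ge A(G,n,k)$; and there can also be an initial plateau at $r_U=1$ whenever $i_D=1$ (then automatically $i_N=1$). The paper's proof has the same shortfall, just less visibly. So your argument matches the paper's in strength on this point, and is arguably more honest about what is actually shown.
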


We remark that since $k$ has the same effect on $F_n$ as $\alpha$, an analogous claim to the above holds as a function of $k$.

Next, we have a polynomial-time algorithm for computing $r_U$, using mincost flow algorithms (see \cite{ahuja1988network} for reference).

\begin{corollary}\label{coro:computing r_U}
$r_U(G,n,k,\alpha)$ can be computed in time $O\p{|E|^{1+o(1)}m_c}$ whp, and can be computed in time $O\p{|E|m_c|V|\log|V|}$ deterministically.
\end{corollary}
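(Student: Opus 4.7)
The plan is to exploit the explicit form
\[
\min_{\sigma\in\mathcal{U}_n}F_n(\sigma,\alpha)=\min_{i\in[m_c]}\frac{\emn{i}{n}}{i}\p{k\alpha+S_i}
\]
derived immediately before the corollary, together with the analogous $\min_{\sigma\in\mathcal{U}_1}F_n(\sigma,\alpha)=\min_i(k\alpha+S_i)/i$ obtained by the same computation in the perfectly correlated regime. Once the sequence $S_1,\ldots,S_{m_c}$ is in hand, both the numerator and the denominator of $r_U$ are minima of $m_c$ closed-form expressions, so $r_U$ is assembled in $O(m_c)$ additional arithmetic operations. Thus the entire task reduces to computing the $S_i$'s quickly.

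The key step is to identify $S_i$ with the minimum cost of an integer $s$-$t$ flow of value $i$ in the network obtained by endowing every edge of $G$ with unit capacity and unit cost; call this min-cost flow value $T_i$. One direction is free: $P^*_i$ witnesses $T_i\le S_i$. For $S_i\le T_i$, take a set $P$ of $i$ edge-disjoint $s$-$t$ paths realizing $T_i$ and greedily extend $P$ to an inclusion-maximal edge-disjoint family $P'$, so $P'=P_f$ for some $f\in MF$. If some $p\in P'\setminus P$ were strictly shorter than some $p'\in P$, then replacing $p'$ by $p$ inside $P$ would yield $i$ edge-disjoint paths of strictly smaller total length, contradicting the optimality of $T_i$. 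Hence $P$ is the length-sorted prefix of $P_f$, so $P\in\prefix(P_f)$ is a valid candidate in the minimization defining $S_i$, giving $S_i\le T_i$.

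For the deterministic bound I would then run the successive shortest paths algorithm from \cite{ahuja1988network} on this unit-capacity network. SSP starts at the zero flow and iteratively augments along a shortest path in the residual graph, maintaining Johnson-style node potentials so that reduced arc costs stay non-negative and Dijkstra is applicable throughout. The standard SSP invariant guarantees that after the $i$-th augmentation the current flow is a min-cost flow of value $i$, so its cost equals $S_i$; the algorithm halts after $m_c$ augmentations and all $S_i$'s are recovered en route. Each Dijkstra costs $O(|E|+|V|\log|V|)$ with Fibonacci heaps, comfortably inside the claimed bound $O(|E|m_c|V|\log|V|)$. For the whp bound I would instead invoke an almost-linear-time min-cost flow solver once per target value $i=1,\ldots,m_c$, each call costing $|E|^{1+o(1)}$ and giving total $O(|E|^{1+o(1)}m_c)$. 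The main obstacle is the exchange argument of the previous paragraph: the definition of $S_i$ restricts $P^*_i$ to be the length-prefix of a maximal flow, which is a priori more restrictive than merely being any $i$ edge-disjoint paths of minimum total length, and eliminating that extra condition is exactly what lets us appeal to off-the-shelf min-cost flow machinery.
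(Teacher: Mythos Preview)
Your proposal is correct and follows the same approach as the paper: reduce the computation of each $S_i$ to a unit-capacity, unit-cost min-cost flow of value $i$, then invoke standard algorithms (the paper cites Chen et al.\ and Orlin rather than SSP, but both fit the stated deterministic bound). You actually supply more than the paper does---the exchange argument showing that an optimal $i$-flow can always be realized as a length-prefix of some $f\in MF$ is glossed over in the paper's two-sentence proof, and your observation that SSP yields all $S_1,\dots,S_{m_c}$ in a single run of $m_c$ augmentations is a nice sharpening.
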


We also find bounds on the minimal $\alpha$ for which $r_U(G,n,k,\alpha)=\emn{m_c}{n}$. As we show, the lower bound can be exponential in terms of $m_c,n$ and the paths lengths in $G$.

\begin{theorem}\label{thm:alpha bound}
$A(G,n,k)\ge\frac{e^{\Theta(n/m_c)}\p{S_{m_c}-S_{m_c-1}}-S_{m_c}}{k}$.
\end{theorem}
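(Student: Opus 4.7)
The strategy is to exhibit a range of $\alpha$ for which the minimization defining $\min_{\sigma\in\mathcal{U}_n}F_n(\sigma,\alpha)$ is not attained at $i=m_c$, forcing $r_U(\alpha)<\emn{m_c}{n}$ and hence $\alpha<A(G,n,k)$. From the explicit formula stated just before the theorem,
\[
\min_{\sigma\in\mathcal{U}_n}F_n(\sigma,\alpha)=\min_{i\le m_c}g_n(i,\alpha),\qquad g_n(i,\alpha):=\frac{\emn{i}{n}}{i}\p{k\alpha+S_i},
\]
and analogously for the denominator with $n=1$. The first observation is that $g_n(i,\alpha)/g_1(i,\alpha)=\emn{i}{n}$ is strictly increasing in $i$, so if the numerator's minimum is attained at some $i^*_n<m_c$ then $r_U(\alpha)\le g_n(i^*_n,\alpha)/g_1(i^*_n,\alpha)=\emn{i^*_n}{n}<\emn{m_c}{n}$, which gives $\alpha<A(G,n,k)$.

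Accordingly, I would compute the threshold $\alpha^*$ at which the index $i=m_c-1$ transitions from strictly beating $i=m_c$ in the numerator to losing to it. Writing $q_i:=\emn{i}{n}/i=1-(1-1/i)^n$, which is strictly decreasing in $i$, the equation $g_n(m_c-1,\alpha^*)=g_n(m_c,\alpha^*)$ rearranges (by adding and subtracting $q_{m_c-1}S_{m_c}$ in the numerator) to
\[
k\alpha^*=\frac{q_{m_c-1}}{q_{m_c-1}-q_{m_c}}\p{S_{m_c}-S_{m_c-1}}-S_{m_c}.
\]
For every $\alpha<\alpha^*$ the index $m_c-1$ is strictly better in the numerator, so by the previous step $A(G,n,k)\ge\alpha^*$. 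Matching this to the target bound reduces the theorem to showing $q_{m_c-1}/(q_{m_c-1}-q_{m_c})\ge e^{\Theta(n/m_c)}$.

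For that estimate I would apply the mean value theorem to $f(x)=(1-x)^n$ on $[1/m_c,1/(m_c-1)]$, obtaining
\[
q_{m_c-1}-q_{m_c}=\frac{n(1-\xi)^{n-1}}{m_c(m_c-1)}\le\frac{n(1-1/m_c)^{n-1}}{m_c(m_c-1)}.
\]
Combining this with the standard inequalities $(1-1/m_c)^{-(n-1)}\ge e^{(n-1)/m_c}$ and $q_{m_c-1}\ge 1-e^{-n/(m_c-1)}$ (which is bounded below by a positive absolute constant once $n$ is at least of order $m_c$) yields
\[
\frac{q_{m_c-1}}{q_{m_c-1}-q_{m_c}}\ge\frac{m_c(m_c-1)}{O(n)}\cdot e^{(n-1)/m_c}=e^{\Theta(n/m_c)}
\]
in the relevant regime, completing the proof.

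The main obstacle is this final exponential estimate: one must check that the polynomial prefactor $m_c(m_c-1)/n$ is absorbed by the exponential $e^{(n-1)/m_c}$, which is what the $\Theta(\cdot)$ in the theorem's bound implicitly hides, and this may require a mild regime assumption (e.g.\ $n$ large enough compared to $m_c$). The first two steps, namely reducing to the single transition $m_c-1\leftrightarrow m_c$ and solving for the threshold, are straightforward algebraic manipulations once the explicit form of $g_n(i,\alpha)$ from the preceding discussion is in hand.
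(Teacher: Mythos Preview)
Your overall strategy matches the paper's: lower-bound $A(G,n,k)$ by the crossover threshold between the indices $m_c-1$ and $m_c$ in the numerator $\min_i g_n(i,\alpha)$, then estimate that threshold. Two points deserve comment.

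\emph{First, a small error in your first step.} The inequality you write, $r_U(\alpha)\le g_n(i^*_n,\alpha)/g_1(i^*_n,\alpha)$, goes the wrong way: since $\min_j g_1(j,\alpha)\le g_1(i^*_n,\alpha)$, one actually has $r_U(\alpha)\ge g_n(i^*_n,\alpha)/g_1(i^*_n,\alpha)=\emn{i^*_n}{n}$. The conclusion you need, namely that $i^*_n<m_c$ forces $\alpha<A(G,n,k)$, is nonetheless correct; it is exactly what the proof of Theorem~\ref{thm:r_U characterization} establishes (there $A$ is characterised by $\varphi(\alpha,n)=h_{m_c}(\alpha,n)$, and the crossover inequality $\alpha_n\ge\alpha_1$ makes the numerator the binding side). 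So just cite that characterisation instead of the flawed inequality.

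\emph{Second, the final estimate is unnecessarily delicate.} The paper avoids the mean-value-theorem route entirely. Working from the crossover equation $k\alpha^*=\dfrac{S_{m_c}q_{m_c}-S_{m_c-1}q_{m_c-1}}{q_{m_c-1}-q_{m_c}}$, it simply uses the crude bounds $S_{m_c-1}q_{m_c-1}\le S_{m_c-1}$ in the numerator and $q_{m_c-1}-q_{m_c}\le 1-q_{m_c}=(1-1/m_c)^n$ in the denominator, obtaining
\[
k\alpha^*\ \ge\ \frac{S_{m_c}\bigl(1-(1-1/m_c)^n\bigr)-S_{m_c-1}}{(1-1/m_c)^n}.
\]
Then the single substitution $(1-1/m_c)^n=e^{-\Theta(n/m_c)}$ gives the stated bound after one line of algebra. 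This sidesteps your polynomial prefactor $m_c(m_c-1)/n$ altogether and requires no regime assumption beyond what the $\Theta$-notation already hides.
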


\subsection{Disjoint paths}\label{subsec:sum disjpaths}
As stated earlier, our main result for the SUM cost and disjoint paths is that $\mathcal{U}_n$ always admits an optimal agents strategy. Hence $r=r_U$ in these instances. Intuitively, for fixed $G,k,\alpha$, some paths in $G$ are too expensive for the agents to consider -- they would rather incur $k\alpha$ additional cost than pick these long paths, so they ignore them. Then, $k\alpha$ becomes very large, relative to the remaining paths. Hence, the agents will pick a joint strategy that minimize the probability that agents collide with interceptors, and $\mathcal{U}_n$ offers one.

To prove the result, we reduce the strategy space that we need to consider. Next we will establish Lemma \ref{coro:sum disjpaths opt in convhull} in two steps. The first reduces the strategy space to distributions whose mass on paths decreases as the path gets longer.

\begin{lemma}\label{lmm:assume decreasing sum}
When minimizing $F_n(\cdot,\alpha)$ we can restrict ourselves to the set of decreasing distributions: all $\sigma$ such that for $i<i'$, $\sigma(p_i)\ge\sigma(p_{i'})$.
\end{lemma}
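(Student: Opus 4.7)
The plan is to prove the reduction via a symmetrization argument: given an arbitrary uncorrelated $\sigma=(\sigma_1,\ldots,\sigma_n)$, I would construct a decreasing $\sigma'$ with $F_n(\sigma',\alpha)\le F_n(\sigma,\alpha)$ by relabeling the paths in every agent's strategy by a common permutation. Because the paths are disjoint, each edge lies on a unique path and so $\beta(\sigma)=\max_p\sigma(p)$. Hence Lemma~\ref{lmm:sum cost F form} gives the compact form
\begin{align*}
F_n(\sigma,\alpha) \;=\; k\alpha\max_i\sigma(p_i)+\sum_i\sigma(p_i)\abs{p_i}.
\end{align*}

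The construction: choose a permutation $\pi$ of $[m]$ with $\sigma(p_{\pi(1)})\ge\sigma(p_{\pi(2)})\ge\cdots\ge\sigma(p_{\pi(m)})$, and define $\sigma'_j(p_i):=\sigma_j(p_{\pi(i)})$ for every agent $j$. Since all agents' distributions are permuted by the same $\pi$ and the agents are independent,
\begin{align*}
\sigma'(p_i) \;=\; 1-\prod_{j=1}^n\p{1-\sigma'_j(p_i)} \;=\; 1-\prod_{j=1}^n\p{1-\sigma_j(p_{\pi(i)})} \;=\; \sigma(p_{\pi(i)}),
\end{align*}
so $\sigma'(p_1)\ge\cdots\ge\sigma'(p_m)$, i.e., $\sigma'$ is decreasing. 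Consequently $\beta(\sigma')=\max_i\sigma(p_{\pi(i)})=\beta(\sigma)$, so the max-term of $F_n$ is unchanged.

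For the length-weighted term I would apply the rearrangement inequality to the sorted sequence $\abs{p_1}\le\cdots\le\abs{p_m}$: among all permutations $\pi$, the sum $\sum_i\sigma(p_{\pi(i)})\abs{p_i}$ is minimized when the $\sigma(p_{\pi(i)})$ are sorted in decreasing order, which is exactly our choice of $\pi$. Thus $\sum_i\sigma'(p_i)\abs{p_i}\le\sum_i\sigma(p_i)\abs{p_i}$, yielding $F_n(\sigma',\alpha)\le F_n(\sigma,\alpha)$ and completing the reduction.

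The subtle point (and the main thing to verify carefully) is that a coordinate-wise permutation inside each agent's individual distribution induces \emph{precisely the same} permutation of the joint union-marginals $\sigma(p_i)$. This in turn rests on both ingredients that distinguish this case: independence of the agents (so $\sigma(p)$ factors as $1-\prod_j(1-\sigma_j(p))$), and disjointness of paths (so $\beta(\sigma)$ collapses to $\max_p\sigma(p)$ and is invariant under permutations of path labels). Without either, the max-term of $F_n$ could change under the sorting step and a more delicate argument would be required.
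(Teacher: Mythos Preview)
Your proof is correct and follows essentially the same approach as the paper: both observe that in the disjoint-paths case $F_n(\sigma,\alpha)=k\alpha\max_i\sigma(p_i)+\sum_i\sigma(p_i)\abs{p_i}$, note that the $\beta$-term is permutation-invariant, and then reduce the length-weighted sum by reordering the mass onto shorter paths. The only cosmetic difference is that the paper argues via a single pairwise swap (showing $F_n(\sigma,\alpha)-F_n(\sigma',\alpha)=(\abs{p_{i'}}-\abs{p_i})(\sigma(p_{i'})-\sigma(p_i))$), whereas you invoke the rearrangement inequality directly to sort all coordinates at once.
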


For such $\sigma$, $F_n(\sigma,\alpha)=\sum_{p_i\in P_{st}}\sigma(p)+\alpha\sigma(p_1)$. Our second reduction is to iid agent distributions.

\begin{lemma}\label{lmm:assume iid sum}
When minimizing $F_n(\cdot,\alpha)$ we can restrict ourselves to the set of iid distributions: all $\sigma$ which are the product distribution of $n$ copies of a distribution on $P_{st}$.
\end{lemma}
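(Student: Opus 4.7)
The plan is to replace an arbitrary uncorrelated joint strategy by an iid one of no greater cost via a simple averaging of marginals. Given $\sigma=(\sigma_1,\ldots,\sigma_n)\in(\Delta(P_{st}))^n$, I would define the averaged marginal $\bar\sigma=\frac{1}{n}\sum_{i=1}^n\sigma_i\in\Delta(P_{st})$ and the iid distribution $\tilde\sigma=\bar\sigma^n$. It then suffices to show that $F_n(\tilde\sigma,\alpha)\le F_n(\sigma,\alpha)$, since this means that any uncorrelated minimizer can be replaced by an iid one.

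By Lemma \ref{lmm:sum cost F form}, since $P_{st}$ is a set of disjoint paths,
\[
F_n(\sigma,\alpha)=k\alpha\beta(\sigma)+\sum_{p\in P_{st}}\sigma(p)|p|.
\]
Disjointness further implies that for any edge $e\in p$ one has $\sigma^E(e)=\sigma(p)$, so $\beta(\sigma)=\max_{p\in P_{st}}\sigma(p)$. Both summands of $F_n(\sigma,\alpha)$ are therefore monotone non-decreasing in the collection of per-path probabilities $\s{\sigma(p)}_{p\in P_{st}}$, and it is enough to establish the pointwise bound $\tilde\sigma(p)\le\sigma(p)$ for every $p\in P_{st}$.

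By independence of the $\sigma_i$'s we have $\sigma(p)=1-\prod_{i=1}^n(1-\sigma_i(p))$, while $\tilde\sigma(p)=1-(1-\bar\sigma(p))^n$ by the iid construction. Applying AM-GM to the non-negative reals $\s{1-\sigma_i(p)}_{i=1}^n$,
\[
(1-\bar\sigma(p))^n=\p{\frac{1}{n}\sum_{i=1}^n(1-\sigma_i(p))}^n\ge\prod_{i=1}^n(1-\sigma_i(p)),
\]
which rearranges to $\tilde\sigma(p)\le\sigma(p)$. Substituting back into the expression for $F_n$ yields $F_n(\tilde\sigma,\alpha)\le F_n(\sigma,\alpha)$, completing the reduction.

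I do not anticipate a real obstacle: once $F_n$ is written in the disjoint-paths form of Lemma \ref{lmm:sum cost F form}, the argument is a single AM-GM inequality and the averaging construction is the natural candidate. The only subtlety worth flagging is that disjointness is essential to collapse $\beta(\sigma)$ to $\max_p\sigma(p)$; without it, $\beta(\sigma)$ depends on how probability mass distributes across shared edges and the pointwise bound $\tilde\sigma(p)\le\sigma(p)$ would no longer suffice to control the interceptor term.
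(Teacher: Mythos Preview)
Your proposal is correct and matches the paper's own proof essentially line for line: both average the marginals to form an iid strategy, invoke AM--GM on the factors $1-\sigma_i(p)$ to get $\tilde\sigma(p)\le\sigma(p)$ for every path, and then appeal to the monotonicity of the disjoint-paths form of $F_n$ from Lemma~\ref{lmm:sum cost F form}. Your added remark that disjointness is what lets $\beta(\sigma)$ collapse to $\max_p\sigma(p)$ is a useful clarification that the paper leaves implicit.
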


These two lemmas and Lemma \ref{lmm:convhull} imply the following lemma (which was discussed in Section \ref{subsec:techniques}).

\begin{lemma}\label{coro:sum disjpaths opt in convhull}
$\min_{\sigma\in(\Delta(P_{st}))^n}F_n(\sigma,\alpha)=\min_{\sigma\in\convhull(\mathcal{U}_n)}F_n(\sigma,\alpha)$.   
\end{lemma}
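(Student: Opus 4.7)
The non-trivial direction is
\[\min_{\sigma\in(\Delta(P_{st}))^n}F_n(\sigma,\alpha)\ge\min_{\sigma\in\convhull(\mathcal{U}_n)}F_n(\sigma,\alpha),\]
since the reverse inequality follows from $\convhull(\mathcal{U}_n)\subseteq(\Delta(P_{st}))^n$. The plan is to take an arbitrary $\sigma\in(\Delta(P_{st}))^n$ and produce a $\tilde\sigma\in\convhull(\mathcal{U}_n)$ with $F_n(\tilde\sigma,\alpha)\le F_n(\sigma,\alpha)$, by chaining the two earlier reduction lemmas with the convex-hull characterization.

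First I would apply Lemma \ref{lmm:assume iid sum} to replace $\sigma$ with an iid distribution $X^n$ satisfying $F_n(X^n,\alpha)\le F_n(\sigma,\alpha)$. Next I would sort $X$ into a decreasing distribution $\tilde X$ (meaning $\tilde X(p_1)\ge\cdots\ge\tilde X(p_m)$) by iteratively swapping $X(p_i)\leftrightarrow X(p_{i'})$ whenever $i<i'$ but $X(p_i)<X(p_{i'})$. The resulting $\tilde X^n$ remains iid, and by Lemma \ref{lmm:sum cost F form} satisfies
\[F_n(\tilde X^n,\alpha)=k\alpha\bigl(1-(1-\max_p \tilde X(p))^n\bigr)+\sum_{i}\bigl(1-(1-\tilde X(p_i))^n\bigr)|p_i|.\]
The first summand is invariant under any such swap, since it depends only on the multiset $\{X(p_i)\}$; a one-line rearrangement-inequality argument (setting $a=1-(1-X(p_i))^n$, $b=1-(1-X(p_{i'}))^n$ with $a<b$ and $|p_i|\le|p_{i'}|$, then computing $(b-a)(|p_i|-|p_{i'}|)\le0$) shows the second summand weakly decreases. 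Hence $F_n(\tilde X^n,\alpha)\le F_n(X^n,\alpha)$ and $\tilde X\in S$.

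Finally, Lemma \ref{lmm:convhull} yields $S=\convhull(\mathcal{U}_1)$ and $\convhull(\mathcal{U}_n)=\{Y^n\mid Y\in\convhull(\mathcal{U}_1)\}$, so $\tilde X^n\in\convhull(\mathcal{U}_n)$, which completes the proof. The main subtlety I anticipate is the interaction of the two reductions: invoking Lemma \ref{lmm:assume decreasing sum} as a black box after the iid reduction could destroy the iid property, which is why I would carry out the decreasing step directly on the single-agent distribution $X$ as above; this is essentially the iid specialization of the argument underlying Lemma \ref{lmm:assume decreasing sum}, and it preserves iid exactly because the sort acts on the values of $X$ on $P_{st}$.
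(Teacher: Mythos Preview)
Your proof is correct and follows the same approach as the paper, which simply states that Lemmas \ref{lmm:assume decreasing sum}, \ref{lmm:assume iid sum} and \ref{lmm:convhull} together imply the result. Your explicit handling of the ordering—applying the iid reduction first and then sorting the single-agent marginal $X$—is a clean way to ensure the resulting distribution is simultaneously iid and decreasing, and is exactly the right detail to fill in.
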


Now, define $F^1_n(X)=F_n(X^n,\alpha)$ for $X\in\convhull(\mathcal{U}_1)$. Our goal is to compute $\min_{X\in\convhull(\mathcal{U}_1)}F^1_n(X)$. 
$F^1_n$ is concave on $\convhull(\mathcal{U}_1)$ as a positive sum of concave functions: this is because each $\sigma(p_i)$ is a function of the form $x\mapsto1-(1-x)^n$. We now obtain the optimality of $\mathcal{U}_n$ with respect to the SUM cost, as was discussed in the Section \ref{subsec:results}.

\begin{theorem}\label{thm:min F disjoint paths}
Let $\alpha>0$. Then $\min_{\sigma\in(\Delta(P_{st}))^n}F_n(\sigma,\alpha)=\min_{\sigma\in\mathcal{U}_n}F_n(\sigma,\alpha)$.
Hence $r(G,n,k,\alpha)=r_U(G,n,k,\alpha)$.
\end{theorem}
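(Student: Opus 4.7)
The approach combines the two reductions already established with the concavity of $F^1_n$ to show that the minimum is attained at an extreme point of $\convhull(\mathcal{U}_1)$, which must itself lie in $\mathcal{U}_1$.

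First I apply Lemma \ref{coro:sum disjpaths opt in convhull} to replace the search domain $(\Delta(P_{st}))^n$ by $\convhull(\mathcal{U}_n)$ without changing the minimum of $F_n(\cdot,\alpha)$. Lemma \ref{lmm:convhull} then identifies $\convhull(\mathcal{U}_n)=\{X^n\mid X\in\convhull(\mathcal{U}_1)\}$, so it suffices to minimize the one-parameter function $F^1_n(X):=F_n(X^n,\alpha)$ over $X\in\convhull(\mathcal{U}_1)$. Since $\mathcal{U}_n=\{U_P^n\mid U_P\in\mathcal{U}_1\}$, the theorem reduces to showing that this minimum is attained at some $X\in\mathcal{U}_1$.

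For the main step I invoke the concavity of $F^1_n$ on $\convhull(\mathcal{U}_1)$, observed immediately before the theorem statement. Because $\mathcal{U}_1$ is finite, $\convhull(\mathcal{U}_1)$ is a compact polytope whose extreme points all lie in $\mathcal{U}_1$. A concave function on a nonempty compact convex set attains its minimum at an extreme point, so some $X^*\in\mathcal{U}_1$ minimizes $F^1_n$; then $(X^*)^n\in\mathcal{U}_n$ witnesses the equality $\min_{\sigma\in\mathcal{U}_n}F_n(\sigma,\alpha)=\min_{\sigma\in(\Delta(P_{st}))^n}F_n(\sigma,\alpha)$.

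Finally, to obtain the PoU equality, I specialize the same argument to $n=1$ to get $\min_{\sigma\in\Delta(P_{st})}F_1(\sigma,\alpha)=\min_{\sigma\in\mathcal{U}_1}F_1(\sigma,\alpha)$, and since for correlated agents the SUM cost depends only on the single common path, the denominators of $r$ and $r_U$ coincide. Combined with the numerator equality, this yields $r(G,n,k,\alpha)=r_U(G,n,k,\alpha)$. I do not foresee any substantive obstacle: once Lemmas \ref{coro:sum disjpaths opt in convhull}--\ref{lmm:convhull} and the concavity of $F^1_n$ are available, the argument is essentially structural, and the only point requiring a sentence of care is the standard polytope fact that extreme points of the convex hull of a finite point set lie in that set.
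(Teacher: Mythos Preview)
Your proposal is correct and matches the paper's argument essentially line for line: reduce to $\convhull(\mathcal{U}_n)$ via Lemma~\ref{coro:sum disjpaths opt in convhull}, rewrite as minimizing $F^1_n$ over $\convhull(\mathcal{U}_1)$ via Lemma~\ref{lmm:convhull}, and use concavity of $F^1_n$ to conclude the minimum occurs at a vertex, i.e., in $\mathcal{U}_1$. Your handling of the $r=r_U$ conclusion is slightly more explicit than the paper's ``follows by definition,'' but the content is identical.
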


\section{MAX Cost}\label{sec:max}
The full details of this section can be found in Appendix \ref{sec:max details}.
Throughout the rest of the paper, we will assume that $\alpha>\frac{\abs\pmax-\abs\pmin}{k}$. If $\alpha<\frac{\abs\pmax-\abs\pmin}{k}$, then the agents can strictly decrease their worst-case cost by never picking $\pmax$. Then, $P_{st}$ becomes effectively $P_{st}\setminus\{\pmax\}$. For simplicity, we do not consider the case $\alpha=\frac{\abs\pmax-\abs\pmin}{k}$ here.

In short, the main result here is that in nontrivial game instances with the MAX cost, $\mathcal{U}_n$ does not admit an optimal agent strategy (Theorems \ref{thm:max uniform is almost never optimal general} and \ref{thm:max uniform is almost never optimal}).

In Appendix \ref{subapp:max eg} we illustrate this with an example.

\subsection{General $s-t$ graphs}\label{subsec:max general}
Again, we have the function $F_n(\sigma,\alpha)=\max_{d\in E^k}f_d(\sigma,\alpha)$ for every $n\in\N$, however $F_n$ doesn't have the same form as in Section \ref{sec:sum} for $n>1$. However, $F_1$ does collide with the definition that results from taking SUM cost, which we have analyzed. Therefore, we only have $F_n$ for $n>1$ to analyze.

We begin by establishing Theorem \ref{thm:general cost uniform is approx}, Corollary \ref{thm:general cost ru is approx} and Theorem \ref{thm:general costs r bounds} from Section \ref{sec:sum} but for the MAX cost. The proofs are almost identical, regardless of the cost function. In order for these proofs to go through, it suffices to prove the following lemma:
\begin{lemma}\label{lmm:max Fn bounds}
For every $\sigma\in\Delta(P_{st})$,
\begin{align*}
    \abs\pmin+k\alpha\le F_n(\sigma,\alpha)\le\abs\pmax+k\alpha\le n\abs\pmax+k\alpha
\end{align*}
\end{lemma}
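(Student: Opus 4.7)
My plan is to prove the upper and lower bounds separately via direct pointwise analyses of the MAX cost expression, invoking explicit interceptor strategies as witnesses for the lower bound.

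For the upper bound, I would proceed pointwise on pure realizations. Fix any $a = (a_1, \ldots, a_n) \in (P_{st})^n$ and any interceptor profile $d \in E^k$. For each agent $i$, $\abs{a_i} \le \abs\pmax$, and at most $k$ interceptors can lie on any single path, so
\[
  \abs{a_i} + \alpha \sum_{j \in [k]} 1\s{d_j \in a_i} \le \abs\pmax + k\alpha.
\]
Taking $\max_i$ preserves this bound pointwise, and then taking the expectation over $a$ (drawn from the iid extension of $\sigma$) and the max over $d$ preserves the inequality, yielding $F_n(\sigma, \alpha) \le \abs\pmax + k\alpha$. The trailing inequality $\abs\pmax + k\alpha \le n\abs\pmax + k\alpha$ is immediate for $n \ge 1$.

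For the lower bound, I would fix a specific interceptor strategy $d^*$ and use $F_n(\sigma, \alpha) \ge f_{d^*}(\sigma, \alpha)$. The natural first attempt is to concentrate all $k$ interceptors on a single edge $e^* \in \argmax_e \sigma^E(e)$. Pointwise: if some agent $i$ has $e^* \in a_i$, the corresponding MAX term is at least $\abs{a_i} + k\alpha \ge \abs\pmin + k\alpha$; otherwise the MAX is still at least $\max_i \abs{a_i} \ge \abs\pmin$. Taking expectation over $a$ gives
\[
  F_n(\sigma, \alpha) \ge \abs\pmin + k\alpha \cdot \beta(\sigma).
\]
To upgrade this to the full claimed $\abs\pmin + k\alpha$, I would invoke the standing assumption $\alpha > (\abs\pmax - \abs\pmin)/k$, which forces the agents' effective path set to be one where a well-chosen interceptor placement intercepts some agent with probability one. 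Combining Lemma~\ref{lmm:beta lower bound} (which lower-bounds $\beta$ via the mincut) with a refined interceptor strategy distributed across a mincut, rather than concentrated on a single edge, should close the remaining gap.

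The main obstacle I anticipate is precisely this last sharpening. The naive single-edge placement only guarantees $\abs\pmin + k\alpha\beta(\sigma)$, and since $\beta(\sigma)$ can be as small as $1/m_c$ for $\sigma$ uniform over a mincut, a more careful interceptor strategy is essential: distributing the $k$ interceptors across a mincut ensures that every realization of $a$ crosses at least one occupied edge, at the cost of a smaller per-agent interceptor count, and the assumption on $\alpha$ is what balances these two effects in favor of the interceptor. The upper bound and the initial pointwise step of the lower bound are routine; the content of the proof is in handling the interaction between the MAX aggregation, the mincut structure, and the threshold on $\alpha$.
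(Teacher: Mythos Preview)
Your pointwise upper-bound argument is correct for the literal inequality $F_n(\sigma,\alpha)\le\abs\pmax+k\alpha$, and is more direct than the paper's route, which instead splits $\norm{c_d(a)}_\infty\le\norm{c_0(a)}_\infty+\norm{c_d(a)-c_0(a)}_\infty$ via the triangle inequality and thereby obtains the \emph{sharper} bound $F_n(\sigma,\alpha)\le\abs\pmax+k\alpha\beta(\sigma)$. That sharper form is what Corollary~\ref{cor:general costs r bounds} actually relies on (to recover $n\abs\pmax+\frac{\emn{m_c}{n}}{m_c}k\alpha$ at $\sigma=U_{P_f}^n$), so your version, while valid as stated, is too coarse for the intended application.

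The lower bound is where the genuine problem lies. The inequality $F_n(\sigma,\alpha)\ge\abs\pmin+k\alpha$ is \emph{false} as written: on a graph with two disjoint unit-length $s$--$t$ edges, $n=k=1$, and $\sigma$ uniform, one has $F_1(\sigma,\alpha)=1+\alpha/2<1+\alpha=\abs\pmin+k\alpha$. The paper's own proof in fact stops at $F_n(\sigma,\alpha)\ge\abs\pmin+k\alpha\beta(\sigma)$, which is precisely the bound you obtain in your ``first attempt'' and is also exactly what is used downstream. So the lemma statement is missing a $\beta(\sigma)$ factor, and your plan to ``upgrade'' beyond $\abs\pmin+k\alpha\beta(\sigma)$ is attempting to prove something that does not hold.

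Concretely, the refinement you propose---spreading the $k$ interceptors across a mincut---cannot work for the MAX cost: only the single most-loaded path among $a_1,\dots,a_n$ determines the cost, so dispersing interceptors over several edges can only \emph{lower} the delay imposed on whichever path realizes the maximum. The standing assumption $k\alpha>\abs\pmax-\abs\pmin$ is used in the paper solely to guarantee that, when all interceptors sit on one edge $e$ and some $a_i\ni e$, that path becomes the $\argmax$; it does not force $\beta(\sigma)$ toward $1$. You should stop at $\abs\pmin+k\alpha\beta(\sigma)$---that is both what the paper proves and what suffices.
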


\begin{corollary}\label{cor:general costs r bounds}
Theorems \ref{thm:general cost uniform is approx}, \ref{thm:general cost ru is approx} and \ref{thm:general costs r bounds} all hold for the MAX cost verbatim.
\end{corollary}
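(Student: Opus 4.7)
}
The plan is to argue that the proofs of Theorems \ref{thm:general cost uniform is approx}, \ref{thm:general cost ru is approx} and \ref{thm:general costs r bounds} given in Section \ref{subsec:sum general st graphs} are \emph{cost-function-agnostic}: they invoke $F_n$ only through two-sided inequalities of the shape ``lower-order additive term in $|\pmin|$, $|\pmax|$ plus $k\alpha$ times a factor controlled by $\beta(\sigma)$''. For SUM this is supplied by Lemma \ref{lmm:sum cost F form}; for MAX we only need an analogous pair of bounds, which is exactly Lemma \ref{lmm:max Fn bounds}. The corollary will therefore reduce to establishing that lemma and then replaying the SUM arguments with the MAX bounds substituted in.

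I would begin by proving Lemma \ref{lmm:max Fn bounds}. For the upper bound, observe that under any realization, a single agent contributes at most $|\pmax|$ from their path length and at most $k\alpha$ from being hit by all $k$ interceptors, so the MAX is at most $|\pmax|+k\alpha$. The second inequality $|\pmax|+k\alpha\le n|\pmax|+k\alpha$ is trivial. For the lower bound, every agent travels a path of length at least $|\pmin|$, so the MAX realization is always at least $|\pmin|$; and the interceptors, by concentrating all $k$ units on a single edge of the chosen path (which is well-defined once $\sigma$ is the single-agent/correlated distribution), always contribute an extra $k\alpha$, yielding $|\pmin|+k\alpha$.

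With Lemma \ref{lmm:max Fn bounds} in hand, I would revisit the SUM proof of Theorem \ref{thm:general cost uniform is approx}: it compares $F_n$ evaluated on the canonical uniform strategy $U_{P_{m_c}}^n$ against the minimum over all uncorrelated joint strategies, and the two sides of the comparison only need additive terms bounded by $n|\pmax|$ (or $|\pmax|$) together with $k\alpha\beta$-type terms controlled via Lemmas \ref{lmm:beta lower bound}--\ref{lmm:beta upper bound n to 1}. Lemma \ref{lmm:max Fn bounds} gives the corresponding two-sided bounds with additive constants in $[|\pmin|,|\pmax|]$, which are no worse than those used in the SUM proof, so the same calculation yields Theorem \ref{thm:general cost uniform is approx} for MAX under the same $\alpha$-threshold. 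Theorem \ref{thm:general cost ru is approx} is then obtained by dividing through by $\min_{\sigma\in\Delta(P_{st})}F_1(\sigma,\alpha)$, which is computed identically for SUM and MAX since the two cost functions coincide when $n=1$. Theorem \ref{thm:general costs r bounds} follows by the same canonical evaluation of $r_U$ on the uniform strategies associated with the $m_c$-disjoint-paths flow, using only the bounds already transferred.

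The main obstacle is bookkeeping: one has to check that the constants in Lemma \ref{lmm:max Fn bounds} (additive $|\pmin|$/$|\pmax|$ rather than the possibly larger SUM contributions) are fed into the SUM-era $\alpha$-thresholds without loss. Since MAX gives \emph{tighter} additive bounds than SUM and identical $k\alpha$-dependence, no constants worsen, and the asymptotic statements in all three theorems hold verbatim. No genuinely new idea is required beyond Lemma \ref{lmm:max Fn bounds}; the remainder is a careful verbatim transcription of the SUM proofs.
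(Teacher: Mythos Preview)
Your overall plan—establish MAX analogues of the two-sided $F_n$ bounds and then replay the SUM proofs of Theorems \ref{thm:general cost uniform is approx}, \ref{thm:general cost ru is approx} and \ref{thm:general costs r bounds}—is exactly what the paper does. The gap is in your proof of Lemma \ref{lmm:max Fn bounds}, and it is not just bookkeeping.

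First, the lower bound $\abs\pmin+k\alpha\le F_n(\sigma,\alpha)$ that you argue for is false for general uncorrelated $\sigma$. With two disjoint length-$1$ paths, one agent playing uniformly, and $k=1$, the best interceptor response yields expected cost $1+\alpha/2<1+\alpha$. Your argument (``concentrate all $k$ units on a single edge of the chosen path'') presumes the interceptors know the realized path, which they do not. What the SUM proofs actually consume—and what the paper's proof of Lemma \ref{lmm:max Fn bounds} actually establishes—are the $\beta$-weighted bounds
\[
\abs\pmin+k\alpha\,\beta(\sigma)\ \le\ F_n(\sigma,\alpha)\ \le\ \abs\pmax+k\alpha\,\beta(\sigma).
\]
Your later reference to ``$k\alpha\beta$-type terms'' is exactly what is needed, but your lemma proof does not deliver it.

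Second, obtaining these $\beta$-weighted bounds for MAX is the one place that is not identical to SUM. For the lower bound you must use the standing assumption $k\alpha>\abs\pmax-\abs\pmin$: if all interceptors sit on a single edge $e\in\argmax_{e'}\sigma^E(e')$, then whenever some agent's path contains $e$ that agent's cost exceeds every unhit path, so the event $\{e\in\bigcup a\}$ contributes a clean $+k\alpha$ to the MAX with probability $\beta(\sigma)$. For the upper bound the paper applies the triangle inequality to $\norm{c_d(a)}_\infty$ to split off the $\alpha$-part, whose expectation is at most $k\alpha\beta(\sigma)$; your pointwise bound $\abs\pmax+k\alpha$ is correct but too loose to make the ratio in Theorem \ref{thm:general cost uniform is approx} tend to $1$ as $\alpha\to\infty$. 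Once these $\beta$-weighted inequalities are in place (together with Lemma \ref{lmm:beta lower bound}), the rest of your plan goes through verbatim, as you say.
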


An analogue to Lemma \ref{lmm:assume iid sum} holds for MAX as well, but for general $s-t$ graphs:

\begin{lemma}\label{lmm:assume iid max}
When minimizing $F_n(\cdot,\alpha)$ we can restrict ourselves to the set of iid distributions: all $\sigma$ which are the product distribution of $n$ copies of a distribution on $P_{st}$.
\end{lemma}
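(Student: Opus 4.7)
The plan is to show that starting from any product distribution $\sigma=(\sigma_1,\ldots,\sigma_n)$, the iid strategy $\bar\sigma^n$ with $\bar\sigma=\frac{1}{n}\sum_{i=1}^n\sigma_i$ achieves cost no larger than $\sigma$. Since $F_n$ is continuous on the compact set $(\Delta(P_{st}))^n$, a minimizer exists, and the averaging procedure will turn it into an iid minimizer.

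First I would fix an arbitrary pure interceptor action $d\in E^k$ and reduce to showing $f_d(\bar\sigma^n,\alpha)\le f_d(\sigma,\alpha)$. Define the deterministic function $g:P_{st}\to\R$ by $g(p)=|p|+\alpha\sum_{j\in[k]}1\s{d_j\in p}$, so that $f_d(\sigma,\alpha)=\E\p{\max_{i\in[n]}g(a_i)}$ where the $a_i\sim\sigma_i$ are independent. Let $F_i(t)=\Pr_{a\sim\sigma_i}(g(a)\le t)$ and $\bar F(t)=\frac{1}{n}\sum_{i=1}^nF_i(t)$. Because the $a_i$ are independent, the CDF of $\max_i g(a_i)$ under $\sigma$ is $\prod_iF_i(t)$, whereas under $\bar\sigma^n$ it is $\bar F(t)^n$. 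Pointwise AM--GM gives $\bar F(t)^n\ge\prod_iF_i(t)$ for every $t$, i.e., $\max_ig(a_i)$ under $\bar\sigma^n$ is first-order stochastically dominated by $\max_ig(a_i)$ under $\sigma$. Integrating tail probabilities (both random variables are nonnegative) yields $f_d(\bar\sigma^n,\alpha)\le f_d(\sigma,\alpha)$.

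Finally, since this inequality holds for every fixed $d$, taking the maximum over $d\in E^k$ preserves it: picking any $d^*\in\argmax_d f_d(\bar\sigma^n,\alpha)$,
\begin{align*}
F_n(\bar\sigma^n,\alpha)=f_{d^*}(\bar\sigma^n,\alpha)\le f_{d^*}(\sigma,\alpha)\le \max_{d\in E^k}f_d(\sigma,\alpha)=F_n(\sigma,\alpha).
\end{align*}
Hence a minimizer $\sigma^*$ of $F_n$ can always be replaced by the iid strategy $\overline{\sigma^*}^{\,n}$ without increasing the cost, proving the lemma.

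The main obstacle is choosing the correct symmetrization. The permutation-symmetrization that one might reach for first does not preserve the product-distribution structure required in the uncorrelated setting, so it is inadmissible here. The key insight is that, because $\max$ interacts cleanly with CDFs via pointwise products, the right object to average is the marginal CDF itself, at which point AM--GM on nonnegative reals completes the argument.
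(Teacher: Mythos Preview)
Your proof is correct and is essentially the same as the paper's: both average the marginals to $\bar\sigma=\frac{1}{n}\sum_j\sigma_j$, apply AM--GM pointwise to the CDF terms $\sum_{i'<i}\sigma_j(p^d_{i'})$ to conclude $f_d(\bar\sigma^n,\alpha)\le f_d(\sigma,\alpha)$ for every $d$, and then take the maximum over $d$. The only difference is presentational---you phrase the argument via stochastic dominance of $\max_i g(a_i)$, while the paper works directly with the explicit tail-sum formula (Equation~\ref{eq:fd max cost}).
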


Before focusing on graphs of disjoint paths, we end with stating the second part of Theorem \ref{thm:MAX char}:

\begin{theorem}\label{thm:max uniform is almost never optimal general}
Let $G$ be an $s-t$ graph, and let $n>1,k\ge1$ such that the following holds: For every $f\in MF$, and for $P_f:=\s{p_1,...,p_{m'}}$, there holds $|p_1|<|p_{m'}|$. Also, $\alpha$ is large enough such that $\min_{\sigma\in(\Delta(P_f))^n}F_n(\sigma,\alpha)<|p_1|+\frac{\emn{m'}{n}}{m'}k\alpha$ for every such $f,P_f$.

Then
\begin{align*}
    \min_{\sigma\in(\Delta(P_{st}))^n}F_n(\sigma,\alpha)<\min_{\sigma\in\mathcal{U}_n}F_n(\sigma,\alpha)
\end{align*}
\end{theorem}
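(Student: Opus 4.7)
The plan is to match the hypothesis, which strictly upper-bounds $\min_{\sigma \in (\Delta(P_f))^n} F_n(\sigma,\alpha)$ for every maximal flow $f$, against a lower bound on $\min_{\sigma \in \mathcal{U}_n} F_n(\sigma,\alpha)$ of exactly the same form. Since every strategy in $\mathcal{U}_n$ is $U_P^n$ for some prefix $P \in \prefix(P_f)$ of some $f \in MF$, it suffices to prove the per-$(f,P)$ lower bound
\begin{align*}
F_n(U_P^n, \alpha) \;\ge\; \abs{p_1} + \tfrac{\emn{m'}{n}}{m'}\, k\alpha,
\end{align*}
where $p_1$ is the shortest path of $P_f$ and $m' = \abs{P_f}$, and then take minima over $P$ and $f$.

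The key step is to lower-bound $F_n(U_P^n, \alpha)$ by exhibiting a single pure interceptor response $d^\star \in E^k$ that places all $k$ interceptors on one fixed edge of $p_1$. Because $P_f$ is edge-disjoint, this edge lies on no other path in $P$, so an agent landing on $p_1$ incurs cost $\abs{p_1} + k\alpha$, while an agent on $p_j$ with $j > 1$ incurs cost $\abs{p_j} \ge \abs{p_1}$ by the length ordering in $P_f$. Let $i := \abs{P}$ and let $A$ denote the event that at least one of the $n$ iid uniform agents picks $p_1$; then $\Pr(A) = 1 - (1-1/i)^n = \emn{i}{n}/i$. Conditioning on $A$ yields $\text{MAX} \ge \abs{p_1} + k\alpha$, and on $A^c$ every agent is on some path in $P\setminus\{p_1\}$, so $\text{MAX} \ge \abs{p_1}$. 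Combining,
\begin{align*}
F_n(U_P^n, \alpha) \;\ge\; f_{d^\star}(U_P^n, \alpha) \;\ge\; \Pr(A)\p{\abs{p_1}+k\alpha} + \Pr(A^c)\abs{p_1} \;=\; \abs{p_1} + \tfrac{\emn{i}{n}}{i}\, k\alpha.
\end{align*}
Since $i \mapsto \emn{i}{n}/i = 1 - (1-1/i)^n$ is decreasing and $i \le m'$, the per-$(f,P)$ inequality follows.

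To finish, minimizing over $P \in \prefix(P_f)$ and then over $f$ gives $\min_{\sigma \in \mathcal{U}_n} F_n(\sigma, \alpha) \ge \min_{f} \bigl[\abs{p_1(f)} + \tfrac{\emn{m'(f)}{n}}{m'(f)}\, k\alpha\bigr]$. The hypothesis states that for each $f$ the strictly smaller quantity $\min_{\sigma \in (\Delta(P_f))^n} F_n(\sigma, \alpha)$ is strictly below $\abs{p_1(f)} + \tfrac{\emn{m'(f)}{n}}{m'(f)}\, k\alpha$, and $\min_{\sigma \in (\Delta(P_{st}))^n} F_n(\sigma, \alpha)$ is at most this restricted minimum. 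Since a strict inequality against every $f$ implies a strict inequality against the minimum over $f$ (evaluate at the minimizer), we conclude $\min_{\sigma \in (\Delta(P_{st}))^n} F_n(\sigma, \alpha) < \min_{\sigma \in \mathcal{U}_n} F_n(\sigma, \alpha)$.

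The main obstacle I expect is engineering the \emph{additive} form $\abs{p_1} + \tfrac{\emn{i}{n}}{i} k\alpha$ rather than a multiplicative one like $\tfrac{\emn{i}{n}}{i}\p{\abs{p_1} + k\alpha}$, which is what would come out of using only the ``interceptor wins'' event $A$; the trick is to observe that even on $A^c$ the MAX is at least $\abs{p_1}$ because every other path is at least as long, so the full $\abs{p_1}$ term is retained. This is also where edge-disjointness of $P_f$ is essential: it guarantees that piling interceptors on $p_1$ does not inadvertently inflate the cost of any other path in $P$.
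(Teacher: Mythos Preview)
Your proof is correct and takes a genuinely different route from the paper. The paper does not prove Theorem~\ref{thm:max uniform is almost never optimal general} directly; it says the general case ``follows immediately from the disjoint-paths case (using the definition of $\mathcal{U}_n$)'' and then proves the disjoint case (Theorem~\ref{thm:max uniform is almost never optimal}) via Lemma~\ref{lmm:max Fn(U)=f_m^k(U)}. That lemma identifies the interceptors' \emph{optimal} response to $U_{P}^n$ as placing all $k$ interceptors on the \emph{longest} path, strictly; the paper then perturbs $U_P^n$ by shifting $\e$ mass from the longest to the shortest path to exhibit a strictly better strategy, and handles the various prefix sizes $i$ by a case analysis invoking Lemmas~\ref{lmm:max uniforms in equal lengths} and~\ref{lmm:bound for that}.

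You sidestep all of this. Instead of analysing the interceptors' best response, you simply lower-bound $F_n(U_P^n,\alpha)$ by a \emph{sub}-optimal interceptor response, namely piling all $k$ on an edge of the shortest path $p_1$. The key observation---that on $A^c$ the MAX is still at least $|p_1|$ because every other path in $P$ is at least as long---is exactly what makes the additive bound $|p_1|+\frac{\emn{i}{n}}{i}k\alpha$ come out, and the monotonicity $\frac{\emn{i}{n}}{i}\ge\frac{\emn{m'}{n}}{m'}$ then matches the hypothesis on the nose. The finiteness of $MF$ makes ``strict for every $f$ $\Rightarrow$ strict for the minimizer'' immediate. This is more elementary than the paper's perturbation argument, and it never needs to know what the true best response is. What the paper's route buys is extra structural information (the strict best response and the shape of a better-than-uniform strategy), which feeds into the surrounding discussion; your route buys a shorter, self-contained proof of the stated inequality.
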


When $G$ has only disjoint paths, the theorem has a simpler phrasing. Also, the proof of the general case follows immediately from the disjoint-paths case (using the definition of $\mathcal{U}_n$), so we only prove the latter (Theorem \ref{thm:max uniform is almost never optimal}).

\subsection{Disjoint paths}\label{subsec:max disjpaths}
\subsubsection*{The general $k$ case}
As the paths are disjoint, we sometimes abuse notation by identifying each $d\in E^k$ with the corresponding tuple in $[m]^k$.

We begin with proving Theorem \ref{thm:max uniform is almost never optimal}. Fix $G$ with disjoint paths $P_{st}$, such that $|p_1|<|p_m|$. We will sometimes consider $G$ with only the first $i$ paths for $i\le m$, denote it as $G_i$. Then, the $F_n$ function defined on $G_i$ is simply $F_n$ when the domain of the distributions is $(\Delta(P_i))^n$ instead of $(\Delta(P_{st}))^n$ (because the interceptors will decrease the cost to the agents if they pick edges from paths $p_{i+1},...,p_m$, in contrast with their goal of maximizing the agents' cost). Let $m_1$ be the largest integer such that $|p_1|=...=|p_{m_1}|$.

Suppose the agents play only on $G_{m_1}$. All these paths have the same length, so intuitively the agents minimize their cost by minimizing their probability of picking a path with a large imposed delay. The best strategy for agents is then $U_{P_{m_1}}^n\in\mathcal{U}_n$. Intuitively, the interceptors will respond by picking $k$ edges from the same path (with some distribution over $P_{m_1}$), because in each choice of paths and edges, the cost is determined only by a costliest path. The cost to the agents is $|p_1|+\frac{\emn{m_1}{n}}{m_1}k\alpha$. Suppose that $\alpha$ is large so that $\min_{\sigma\in(\Delta(P_{st}))^n}F_n(\sigma,\alpha)<|p_1|+\frac{\emn{m_1}{n}}{m_1}k\alpha$. As we show next, this makes {\it all} strategies in $\mathcal{U}_n$ no longer optimal for the agents. Below we make these observations concrete.

We begin with the key lemma, which holds for any graph of disjoint paths.
\begin{lemma}\label{lmm:max Fn(U)=f_m^k(U)}
For every $G$ with $m>1$ disjoint paths, for every $n>1,k\ge1$ and $\alpha>\frac{|p_m|-|p_1|}{k}$: $F_n(U_{P_{st}}^n,\alpha)=f_{(m,...,m)}(U_{P_{st}}^n,\alpha)>\max_{d\ne(m,...,m)}f_d(U_{P_{st}}^n,\alpha)$.
\end{lemma}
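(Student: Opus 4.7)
The plan is to exploit the symmetry of $U_{P_{st}}^n$ to derive an explicit rank-based formula for $f_d(U_{P_{st}}^n,\alpha)$, and then apply a rearrangement argument to identify $d^*:=(m,\ldots,m)$ as the (essentially unique) maximizer. Since the paths are disjoint, each $d\in E^k$ is determined by its profile $(c_1,\ldots,c_m)$, where $c_i=\abs{\{j:d_j\in p_i\}}$ and $\sum_i c_i=k$. Setting $v_i:=|p_i|+\alpha c_i$ and sorting $v_{(1)}\ge\cdots\ge v_{(m)}$, let $\rho(i)$ denote the rank of $v_i$. The realized MAX cost at $a\in[m]^n$ is $\max_{i:\,i\in a}v_i$, and the event ``the realized max equals $v_{(r)}$'' occurs exactly when some agent picks the rank-$r$ path and no agent picks a higher-ranked path. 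For iid uniform $a$, this event has probability
\[
w_r:=\left(\frac{m-r+1}{m}\right)^n-\left(\frac{m-r}{m}\right)^n,
\]
so $f_d(U_{P_{st}}^n,\alpha)=\sum_r v_{(r)}w_r=\sum_i v_iw_{\rho(i)}$. Strict convexity of $x\mapsto x^n$ (which uses $n>1$) yields $w_1>w_2>\cdots>w_m>0$.

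Splitting $v_i=|p_i|+\alpha c_i$ decomposes the sum as $\sum_i|p_i|w_{\rho(i)}+\alpha\sum_i c_iw_{\rho(i)}$. The rearrangement inequality, applied to the increasing $(|p_i|)$ and decreasing $(w_r)$, gives $\sum_i|p_i|w_{\rho(i)}\le\sum_i|p_i|w_{m-i+1}$; and since $w_{\rho(i)}\le w_1$ for every $i$, $\sum_i c_iw_{\rho(i)}\le w_1\sum_i c_i=kw_1$. For $d^*$, $c_m^*=k$ and $v_m=|p_m|+k\alpha>|p_i|=v_i$ for every $i<m$ (trivially, as $|p_m|\ge|p_i|$ and $k\alpha>0$), so $\rho^*(m)=1$ and $\rho^*(i)=m-i+1$ for $i<m$; both upper bounds are therefore saturated simultaneously at $d^*$, yielding $f_{d^*}(U_{P_{st}}^n,\alpha)=\sum_i|p_i|w_{m-i+1}+k\alpha w_1$ and proving the equality part of the lemma.

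For strictness when $d\ne d^*$, I would split into two cases. If $c$ puts positive mass on two or more paths, at most one index has $\rho=1$, so some $i$ with $c_i>0$ has $w_{\rho(i)}\le w_2<w_1$, making $\sum_i c_iw_{\rho(i)}<kw_1$ strict. If instead $c$ is concentrated at a single $i_0\ne m$, the hypothesis $\alpha>(|p_m|-|p_1|)/k$ together with $|p_{i_0}|\ge|p_1|$ forces $v_{i_0}=|p_{i_0}|+k\alpha>|p_m|$, hence $\rho(i_0)=1$ and $\rho(m)=2$; then the two-element swap that turns $\rho$ into the length-aligned permutation $i\mapsto m-i+1$ strictly increases $\sum_i|p_i|w_{\rho(i)}$ whenever $|p_{i_0}|<|p_m|$, since pairing the strictly larger length with the strictly larger weight yields a strictly larger contribution. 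The main obstacle is precisely this last ``$|p_{i_0}|<|p_m|$'' step: if $|p_m|$ coincides with some other path length, Case 2 collapses to $f_d=f_{d^*}$ by symmetry, so the strict conclusion implicitly requires $|p_m|$ to be the unique longest path, strengthening the subsection's standing hypothesis $|p_1|<|p_m|$. I would handle this either by adopting uniqueness of the longest path as an explicit assumption, or by lumping together all $d$'s whose profile is concentrated on some length-$|p_m|$ path (all of which yield the same value as $d^*$) and verifying the strict inequality against the remaining $d$'s.
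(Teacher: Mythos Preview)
Your approach coincides with the paper's: both derive a rank-based formula for $f_d(U_{P_{st}}^n,\alpha)$ (the paper via Equation~\ref{eq:fd max cost}, you via the order-statistic weights $w_r$), split into an $\alpha$-weighted part and a path-length part, and maximize each separately---the paper with an adjacent-swap argument, you with the rearrangement inequality.

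Where you are more careful is the strictness analysis. The paper asserts that ``if $d\ne(m,\ldots,m)$ then $k^d_m<k$ because the interceptors did not all pick edges from the same path,'' but this overlooks $d=(i_0,\ldots,i_0)$ with $i_0\ne m$: there all interceptors \emph{do} pick the same path, the re-ranked top path is $p^d_m=p_{i_0}$, and $k^d_m=k$. Your Case~2 isolates exactly this situation and correctly observes that $f_{(i_0,\ldots,i_0)}=f_{(m,\ldots,m)}$ whenever $|p_{i_0}|=|p_m|$, so the strict inequality as stated in the lemma requires uniqueness of the longest path length. Both of your proposed remedies are sound.
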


We apply Lemma \ref{lmm:max Fn(U)=f_m^k(U)} to $F_n$ on a subset of $\mathcal{U}_n$, specifically those distributions which are supported on the first $m_1$ paths.

\begin{lemma}\label{lmm:max uniforms in equal lengths}
For every $i\le m_1$, $F_n(U_{P_i}^n,\alpha)=|p_1|+\frac{\emn{i}{n}}{i}k\alpha$. We also have $\min_iF_n(U_{P_i}^n,\alpha)=F_n(U_{P_{m_1}}^n,\alpha)$.
\end{lemma}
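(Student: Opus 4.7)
The plan is to first compute $F_n(U_{P_i}^n,\alpha)$ explicitly for each $i \le m_1$, and then deduce the monotonicity claim. Since $i \le m_1$, every path in $P_i$ has length $|p_1|$. The joint distribution $U_{P_i}^n$ is supported on $P_i$, and because the paths are edge-disjoint, any interceptor edge outside $\bigcup P_i$ contributes $0$ to the realized cost; moreover, the particular edge an interceptor chooses within a given path is immaterial. So the interceptors' best response reduces to choosing a partition $(k_1,\ldots,k_i)$ of $k$ across the $i$ paths, and for any such partition and any realization $a \sim U_{P_i}^n$, the MAX cost becomes $|p_1| + \alpha \max_{j \in [n]} k_{a_j} = |p_1| + \alpha \max_{i' \in T(a)} k_{i'}$, where $T(a) := \{a_1,\ldots,a_n\}$.

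It then remains to maximize $\E[\max_{i' \in T} k_{i'}]$ over partitions of $k$ into $i$ nonnegative integers. I would do this by a layer-cake argument. Setting $S_x := \{i' : k_{i'} \ge x\}$ and $s_x := |S_x|$, we have $\sum_{x \ge 1} s_x = \sum_{i'} k_{i'} = k$ and
\[
\E\!\left[\max_{i' \in T} k_{i'}\right] = \sum_{x \ge 1} \Pr[T \cap S_x \ne \emptyset] = \sum_{x \ge 1} h(s_x),
\]
where $h(s) := 1 - (1 - s/i)^n$. The function $h$ is concave on $[0,i]$ with $h(0) = 0$, so $h(s)/s$ is nonincreasing, giving $h(s) \le s\cdot h(1)$ for every integer $s \ge 1$. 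Hence $\sum_x h(s_x) \le k\cdot h(1) = k\,\emn{i}{n}/i$, with equality iff every nonzero $s_x$ equals $1$, i.e., iff all $k$ interceptors concentrate on a single path. This gives $F_n(U_{P_i}^n,\alpha) = |p_1| + \frac{\emn{i}{n}}{i}k\alpha$. (An alternative is to invoke Lemma \ref{lmm:max Fn(U)=f_m^k(U)} applied to the subgraph $G_i$: the equality $F_n(U_{P_i}^n,\alpha) = f_{(i,\ldots,i)}(U_{P_i}^n,\alpha)$ still holds, although the strict uniqueness stated in that lemma breaks by symmetry when all path lengths coincide.)

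For the second claim, I would simply observe that $\emn{i}{n}/i = 1 - (1 - 1/i)^n$ is strictly decreasing in $i$ for $n \ge 2$, since $(1 - 1/i)^n$ is strictly increasing in $i$. As $|p_1|$ does not depend on $i$, $F_n(U_{P_i}^n,\alpha)$ is strictly decreasing in $i$, and the minimum over $i \in [m_1]$ is therefore attained at $i = m_1$.

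The main technical step is the concavity/layer-cake argument showing that concentration maximizes $\E[\max_{i' \in T} k_{i'}]$; the remaining pieces -- reducing the interceptor space to path-partitions, the closed-form computation, and the monotonicity in $i$ -- are routine bookkeeping.
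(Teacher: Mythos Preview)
Your proof is correct. The paper's own argument is shorter: it simply applies Lemma~\ref{lmm:max Fn(U)=f_m^k(U)} to the subgraph $G_i$ (where the hypothesis $\alpha > (|p_i|-|p_1|)/k$ becomes $\alpha>0$), reads off $F_n(U_{P_i}^n,\alpha)=f_{(i,\ldots,i)}(U_{P_i}^n,\alpha)$, and evaluates the latter via Equation~\eqref{eq:fd max cost}; the monotonicity step is identical to yours.

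Your main argument takes a genuinely different route. Instead of invoking Lemma~\ref{lmm:max Fn(U)=f_m^k(U)}, you reduce the interceptors' choice to a partition $(k_1,\ldots,k_i)$ of $k$ and prove directly, by a layer-cake decomposition and concavity of $s\mapsto 1-(1-s/i)^n$, that the expected maximum is maximized by concentrating all $k$ on one path. This is self-contained and avoids the minor technical wrinkle you already noted: the strict-uniqueness clause of Lemma~\ref{lmm:max Fn(U)=f_m^k(U)} fails when all paths in $G_i$ have equal length (any $d=(j,\ldots,j)$ attains the maximum), so one must observe that only the equality $F_n=f_{(i,\ldots,i)}$ is actually needed. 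The paper's approach is quicker once Lemma~\ref{lmm:max Fn(U)=f_m^k(U)} is in hand; your approach gives an independent proof that would work even without that lemma, and makes the role of concavity more transparent.
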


Next, an upper bound on the minimal $\alpha$ for which $U_{P_{m_1}}^n$ is no longer optimal for the agents. Let $E'_i=\frac{\emn{i}{n}}{i}$.

\begin{lemma}\label{lmm:bound for that}
Let $\alpha>\frac{(|p_{m_1+1}|-|p_1|)E'_{m_1+1}}{k(E'_{m_1}-E'_{m_1+1})}$. Then $\min_{\sigma\in(\Delta(P_{st}))^n}F_n(\sigma,\alpha)<|p_1|+\frac{\emn{m_1}{n}}{m_1}k\alpha$.
\end{lemma}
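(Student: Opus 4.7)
The plan is to exhibit an explicit candidate mixed strategy $\sigma^{*}\in(\Delta(P_{st}))^{n}$ whose cost is strictly below the target $|p_{1}|+E'_{m_{1}}k\alpha$. The natural choice is $\sigma^{*}=U_{P_{m_{1}+1}}^{n}$, the iid uniform distribution over the first $m_{1}+1$ paths. Intuitively, compared with $U_{P_{m_{1}}}^{n}$ (whose cost equals the target by Lemma~\ref{lmm:max uniforms in equal lengths}), adding $p_{m_{1}+1}$ to the support dilutes the interceptors' hit probability on every individual path; for $\alpha$ above the stated threshold this dilution outweighs the occasional use of the slightly longer path $p_{m_{1}+1}$.

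To evaluate $F_{n}(\sigma^{*},\alpha)$, I will invoke Lemma~\ref{lmm:max Fn(U)=f_m^k(U)} applied to the restricted graph $G_{m_{1}+1}$, which identifies the interceptors' best response against $U_{P_{m_{1}+1}}^{n}$ as placing all $k$ edges inside $p_{m_{1}+1}$. This remains a best response in the full graph $G$, since any interceptor placed on a path outside $P_{m_{1}+1}$ contributes nothing (no agent uses such paths under $\sigma^{*}$) and can be relocated to $p_{m_{1}+1}$ without decreasing the cost. Against this response the MAX cost equals $|p_{m_{1}+1}|+k\alpha$ on the event that some agent selects $p_{m_{1}+1}$ (probability $E'_{m_{1}+1}=1-(1-1/(m_{1}+1))^{n}$) and $|p_{1}|$ otherwise (since the remaining in-support paths $p_{1},\dots,p_{m_{1}}$ all have length $|p_{1}|$). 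Therefore
\begin{align*}
F_{n}(\sigma^{*},\alpha)=|p_{1}|+E'_{m_{1}+1}\bigl(|p_{m_{1}+1}|-|p_{1}|\bigr)+E'_{m_{1}+1}k\alpha.
\end{align*}
Comparing to the target, $F_{n}(\sigma^{*},\alpha)<|p_{1}|+E'_{m_{1}}k\alpha$ is equivalent by direct rearrangement to the hypothesis $\alpha>\frac{(|p_{m_{1}+1}|-|p_{1}|)E'_{m_{1}+1}}{k(E'_{m_{1}}-E'_{m_{1}+1})}$; the denominator is positive since $E'_{i}$ is strictly decreasing in $i$. Since $\sigma^{*}\in(\Delta(P_{st}))^{n}$, the minimum in the statement is at most $F_{n}(\sigma^{*},\alpha)$, completing the argument.

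The main subtlety I anticipate is verifying the side condition $\alpha>\frac{|p_{m_{1}+1}|-|p_{1}|}{k}$ needed to invoke Lemma~\ref{lmm:max Fn(U)=f_m^k(U)} on $G_{m_{1}+1}$. It suffices to show $2E'_{m_{1}+1}\ge E'_{m_{1}}$, i.e., $E'_{m_{1}+1}\ge E'_{m_{1}}-E'_{m_{1}+1}$; this reduces to $a^{n}-b^{n}\le 1-a^{n}$ for $a=\frac{m_{1}}{m_{1}+1}$ and $b=\frac{m_{1}-1}{m_{1}}$, which follows from two applications of the mean-value estimate $x^{n}-y^{n}\le n(x-y)x^{n-1}$ (and its symmetric lower bound on $1-a^{n}$), yielding a ratio of $\frac{1}{m_{1}}\le 1$. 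Should this short calculation be unwelcome, one can bypass Lemma~\ref{lmm:max Fn(U)=f_m^k(U)} entirely and enumerate directly: under $\sigma^{*}$ the cost depends on the interceptor allocation only through $(k_{1},\dots,k_{m_{1}+1})$ with $\sum_{j}k_{j}\le k$, and a brief symmetry/monotonicity argument shows that concentrating all $k$ on $p_{m_{1}+1}$ strictly dominates every other allocation.
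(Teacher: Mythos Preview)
Your proposal is correct and follows essentially the same approach as the paper: exhibit the witness $\sigma^{*}=U_{P_{m_{1}+1}}^{n}$, apply Lemma~\ref{lmm:max Fn(U)=f_m^k(U)} on $G_{m_{1}+1}$ to identify the best response $d=(m_{1}+1,\dots,m_{1}+1)$ and compute $F_{n}(\sigma^{*},\alpha)=|p_{1}|+(|p_{m_{1}+1}|-|p_{1}|)E'_{m_{1}+1}+E'_{m_{1}+1}k\alpha$, and then rearrange. The only substantive difference is in verifying the side condition $2E'_{m_{1}+1}\ge E'_{m_{1}}$: the paper checks $n=2$ by hand and then proves $2a^{n}-b^{n}$ is decreasing in $n$, whereas your mean-value argument directly bounds $\frac{a^{n}-b^{n}}{1-a^{n}}\le\frac{a-b}{1-a}=\frac{1}{m_{1}}\le1$, which is a bit cleaner.
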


Now we can state the disjoint-paths version of Theorem \ref{thm:max uniform is almost never optimal general}:
\begin{theorem}\label{thm:max uniform is almost never optimal}
Let $G$ with $m>1$ disjoint paths where $|p_1|<|p_m|$. For every $n,k>1$ and for every $\alpha>\frac{|p_m|-|p_1|}{k}$, if $\min_{\sigma\in(\Delta(P_{st}))^n}F_n(\sigma,\alpha)<|p_1|+\frac{\emn{m_1}{n}}{m_1}k\alpha$, then $\min_{\sigma\in(\Delta(P_{st}))^n}F_n(\sigma,\alpha)<\min_{\sigma\in\mathcal{U}_n}F_n(\sigma,\alpha)$.
\end{theorem}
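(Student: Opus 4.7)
The plan is to show that under the hypothesis, every uniform strategy $U_{P_i}^n\in\mathcal{U}_n$ satisfies $F_n(U_{P_i}^n,\alpha)>\min_\sigma F_n(\sigma,\alpha)$, so taking the minimum over $i\in[m]$ gives the conclusion. I split along whether $i\le m_1$ (all paths in $P_i$ have length $|p_1|$) or $i>m_1$ (a strictly longer path is in the support).

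For $i\le m_1$, Lemma \ref{lmm:max uniforms in equal lengths} gives $F_n(U_{P_i}^n,\alpha)=|p_1|+\frac{\emn{i}{n}}{i}k\alpha$. Since $\ell\mapsto \emn{\ell}{n}/\ell=1-(1-1/\ell)^n$ is strictly decreasing, this is $\ge|p_1|+\frac{\emn{m_1}{n}}{m_1}k\alpha$, and the hypothesis forces $\min_\sigma F_n(\sigma,\alpha)<F_n(U_{P_i}^n,\alpha)$ immediately.

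For $i>m_1$, I show that $U_{P_i}^n$ is not a local minimum of $F_n$. Since $|p_1|<|p_i|$ and $\alpha>(|p_i|-|p_1|)/k$, Lemma \ref{lmm:max Fn(U)=f_m^k(U)} applied to the subgraph $G_i$ yields $F_n(U_{P_i}^n,\alpha)=f_{(i,\dots,i)}(U_{P_i}^n,\alpha)$ strictly above every other $f_d$ with $d$ inside $G_i$. For $d$ with edges outside $P_i$, any such interceptor contributes zero cost against agents supported on $P_i$, and moving it onto $p_i$ strictly increases $f_d$; this extends the strict inequality to all $d\in E^k$. By continuity of the finitely many $f_d$'s, $F_n\equiv f_{(i,\dots,i)}$ on an open neighborhood of $U_{P_i}^n$. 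Now define $\tilde\sigma_\epsilon$ by shifting probability $\epsilon$ from $p_i$ to $p_1$ in agent $1$'s distribution, keeping agents $2,\dots,n$ uniform on $P_i$. Writing $X_j=|a_j|+k\alpha\cdot 1\{a_j=p_i\}$ and $Y=\max_{j\ge 2}X_j$, conditioning $\E[\max_j X_j]$ on $X_1$ and differentiating at $\epsilon=0$ yields
\begin{align*}
\left.\tfrac{d}{d\epsilon}f_{(i,\dots,i)}(\tilde\sigma_\epsilon)\right|_{\epsilon=0}=\E[\max(|p_1|,Y)]-\E[\max(|p_i|+k\alpha,Y)]=\E[Y]-(|p_i|+k\alpha)<0,
\end{align*}
strict because $Y\le |p_i|+k\alpha$ almost surely, with strict inequality holding with probability $((i-1)/i)^{n-1}>0$. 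Thus $F_n(\tilde\sigma_\epsilon)<F_n(U_{P_i}^n,\alpha)$ for small $\epsilon>0$, giving $\min_\sigma F_n(\sigma,\alpha)\le F_n(\tilde\sigma_\epsilon)<F_n(U_{P_i}^n,\alpha)$.

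The main obstacle I anticipate is the $i>m_1$ case, specifically pinning $F_n$ to $f_{(i,\dots,i)}$ on a neighborhood of $U_{P_i}^n$; this combines the strict version of Lemma \ref{lmm:max Fn(U)=f_m^k(U)} with the short extension handling interceptor placements outside $P_i$. Once that is in hand, the derivative is transparent: the perturbation alters only agent $1$'s contribution, and among the two endpoint values exchanged by the perturbation, $|p_1|$ is dominated by $Y$ while $|p_i|+k\alpha$ dominates $Y$, which yields the one-line computation above.
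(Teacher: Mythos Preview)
Your proof is correct and follows essentially the same approach as the paper's: both split into $i\le m_1$ (where Lemma~\ref{lmm:max uniforms in equal lengths} plus the hypothesis suffice) and $i>m_1$ (where Lemma~\ref{lmm:max Fn(U)=f_m^k(U)} on $G_i$, continuity, and a small shift of mass from $p_i$ to $p_1$ show $U_{P_i}^n$ is not even locally optimal). The only cosmetic differences are that the paper perturbs all agents simultaneously via $\nu^n$ and argues monotonicity directly, whereas you perturb only agent~1 and compute the one-sided derivative; both are valid and equally short.
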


We end with finishing the first part of Theorem \ref{thm:MAX char}.

\begin{theorem}\label{thm:max uniform equalengths is optimal}
Let $G$ with $m$ disjoint paths where $|p_1|=|p_m|$. For every $n,k$ and for every $\alpha>0$:
\begin{align*}
    \min_{\sigma\in(\Delta(P_{st}))^n}F_n(\sigma,\alpha)=F_n(U_{P_{st}}^n,\alpha)=|p_1|+\frac{\emn{m}{n}}{m}k\alpha
\end{align*}
\end{theorem}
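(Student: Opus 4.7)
The plan is to sandwich $\min_\sigma F_n(\sigma,\alpha)$ between a lower bound coming from a very simple interceptor strategy and the explicit value attained at $U_{P_{st}}^n$. The key simplification, which is special to the equal-length case, is that the MAX cost of any realization collapses to $|p_1|+\alpha\max_i\sum_j 1\{d_j\in a_i\}$, so the length term drops out and the problem reduces to minimizing the probability that the interceptors ``concentrate'' on a path used by some agent.

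First I would establish the lower bound $F_n(\sigma,\alpha)\ge |p_1|+k\alpha\,\beta(\sigma)$ for every $\sigma\in(\Delta(P_{st}))^n$. Pick $e^*\in\argmax_{e\in E}\sigma^E(e)$ and let the interceptors play the pure profile $d^*=(e^*,\ldots,e^*)\in E^k$. For any realization $a\sim\sigma$, each $a_i$ contributes $|p_1|+k\alpha\cdot 1\{e^*\in a_i\}$, so the MAX is $|p_1|+k\alpha\cdot 1\{\exists i:e^*\in a_i\}$. Taking expectation gives $f_{d^*}(\sigma,\alpha)=|p_1|+k\alpha\,\sigma^E(e^*)=|p_1|+k\alpha\,\beta(\sigma)$, and $F_n(\sigma,\alpha)\ge f_{d^*}(\sigma,\alpha)$ by definition. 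Applying Lemma~\ref{lmm:beta lower bound} to the subgraph $G'$ induced by the paths used by $\sigma$, whose mincut is some $m'\le m$, gives $\beta(\sigma)\ge \emn{m'}{n}/m'$; since $\ell\mapsto \emn{\ell}{n}/\ell=1-(1-1/\ell)^n$ is decreasing in $\ell$, this yields $\beta(\sigma)\ge \emn{m}{n}/m$. Hence $\min_\sigma F_n(\sigma,\alpha)\ge |p_1|+k\alpha\cdot \emn{m}{n}/m$.

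Next I would compute $F_n(U_{P_{st}}^n,\alpha)$ exactly. When $m>1$ and $n>1$, the hypothesis $\alpha>0=(|p_m|-|p_1|)/k$ lets us invoke Lemma~\ref{lmm:max Fn(U)=f_m^k(U)}, which gives $F_n(U_{P_{st}}^n,\alpha)=f_{(m,\ldots,m)}(U_{P_{st}}^n,\alpha)$, i.e.\ the interceptors' best response is to place all $k$ edges on $p_m$. For that pure $d$, the realization cost is $|p_1|+k\alpha$ whenever at least one agent selects $p_m$ and $|p_1|$ otherwise, so taking expectation under the iid uniform strategy gives
\[
F_n(U_{P_{st}}^n,\alpha)=|p_1|+k\alpha\p{1-\p{1-\tfrac{1}{m}}^n}=|p_1|+k\alpha\,\tfrac{\emn{m}{n}}{m}.
\]
The degenerate cases fall out trivially: for $m=1$, $U_{P_{st}}^n$ is the only strategy and both sides equal $|p_1|+k\alpha$; for $n=1$, the MAX and SUM costs coincide, so $F_1(\sigma,\alpha)=|p_1|+k\alpha\,\beta(\sigma)$ by Lemma~\ref{lmm:sum cost F form}, minimized by $\sigma=U_{P_{st}}$ at $\beta=1/m=\emn{m}{1}/m$.

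Combining the two parts, $\min_\sigma F_n(\sigma,\alpha)\ge |p_1|+k\alpha\,\emn{m}{n}/m = F_n(U_{P_{st}}^n,\alpha)\ge \min_\sigma F_n(\sigma,\alpha)$, so all three quantities are equal. There is no serious obstacle here; the only minor subtlety is the monotonicity argument that lets us pass from the mincut $m'$ of the support-subgraph in Lemma~\ref{lmm:beta lower bound} to the mincut $m$ of the full graph, together with bookkeeping for the $m=1$ and $n=1$ edge cases lying outside the scope of Lemma~\ref{lmm:max Fn(U)=f_m^k(U)}.
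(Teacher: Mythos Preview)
Your proof is correct and takes a genuinely different route from the paper. The paper first reduces to iid strategies via Lemma~\ref{lmm:assume iid max}, then for $\sigma=X^n$ computes $F_n(\sigma,\alpha)$ explicitly using Equation~\eqref{eq:fd max cost}: because all paths have the same length, the interceptors' best response yields $F_n(X^n,\alpha)=|p_1|+\delta(X)k\alpha$ where $\delta(X)=\max_i(y_i^n-y_{i-1}^n)$ with $y_i=\sum_{i'\le i}X(p^d_{i'})$; one then shows directly that $\delta(X)>\emn{m}{n}/m$ unless $X=U_{P_{st}}$. Your argument instead sandwiches: for the lower bound you exploit the equal-length hypothesis to collapse the MAX cost under the pure interceptor response $(e^*,\ldots,e^*)$ to $|p_1|+k\alpha\beta(\sigma)$, and then invoke Lemma~\ref{lmm:beta lower bound} plus the monotonicity of $\ell\mapsto\emn{\ell}{n}/\ell$; for the upper bound you simply cite Lemma~\ref{lmm:max Fn(U)=f_m^k(U)}. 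Your approach is more modular---it never needs the iid reduction and reuses two lemmas already on the shelf---while the paper's approach is more self-contained and, as a byproduct, identifies $F_n(X^n,\alpha)$ exactly for every iid $X$ and shows that $U_{P_{st}}^n$ is the \emph{strict} minimizer among iid strategies.
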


\subsubsection*{The $k=1$ case}
Here, we have an analogue of Lemma \ref{lmm:assume decreasing sum}:
\begin{lemma}\label{lmm:assume decreasing max}
When minimizing $F_n(\cdot,\alpha)$ we can restrict ourselves to the set of decreasing distributions: all $\sigma$ such that for $i<i'$, $\sigma(p_i)\ge\sigma(p_{i'})$.
\end{lemma}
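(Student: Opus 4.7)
The plan is to adapt the bubble-sort argument from the SUM proof to the MAX setting. By Lemma \ref{lmm:assume iid max}, we may restrict to iid joint distributions $\sigma=X^n$ for $X\in\Delta(P_{st})$, so that $\sigma(p_i)=1-(1-X(p_i))^n$ and the claim reduces to showing that an optimal $X$ can be taken with $X(p_i)\ge X(p_{i'})$ for $i<i'$. If $X$ is not decreasing, pick $i$ with $X(p_i)<X(p_{i+1})$ and let $X'$ swap these two values. It then suffices to show $F_n((X')^n,\alpha)\le F_n(X^n,\alpha)$; iterating over such $i$ (bubble sort) produces a decreasing distribution of no greater cost.

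To compare $F_n(X^n,\alpha)$ and $F_n((X')^n,\alpha)$, I would couple iid draws under $X$ and $X'$ via inverse-CDF transforms sharing a common family of uniforms $(U_j)_{j\in[n]}$. Writing $y=X(p_i)$ and $x=X(p_{i+1})$ (so $y<x$), the two couplings agree everywhere except on the ``disagreement interval'' $U_j\in[y,x)$: there, under $X$ the agent picks the longer $p_{i+1}$, while under $X'$ the agent picks the shorter $p_i$. For any interceptor path $p_l$ with $l\ne i$, the per-agent MAX contribution $\psi_l(a):=|a|+\alpha\cdot 1\s{a=p_l}$ satisfies $\psi_l(p_{i+1})\ge\psi_l(p_i)$ on this interval---whether $l=i+1$ (the left side is $|p_{i+1}|+\alpha$, the right is $|p_i|$) or $l\notin\s{i,i+1}$ (reducing to $|p_{i+1}|\ge|p_i|$). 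Thus $\psi_l(a_j^X)\ge\psi_l(a_j^{X'})$ pointwise in $j$, and taking the max over $j$ and the expectation gives $f_l((X')^n,\alpha)\le f_l(X^n,\alpha)\le F_n(X^n,\alpha)$.

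This handles every interceptor best response $p_{l^*}$ to $X'$ except $l^*=i$, for which the above coupling swaps $\psi_i$-values in the wrong direction ($|p_{i+1}|$ under $X$ against $|p_i|+\alpha>|p_{i+1}|$ under $X'$). For this remaining case I would instead compare $f_i((X')^n,\alpha)$ directly with $f_{i+1}(X^n,\alpha)$: both have interceptor-hit probability $1-(1-x)^n$, so the deterministic contributions $|p_i|+\alpha\le|p_{i+1}|+\alpha$ favor $X'$. The residual conditional-expected-max terms $M_i(X')$ and $M_{i+1}(X)$ agree on paths $p_l$ with $l\ne i,i+1$ and place the mass $q:=y/(1-x)$ on the longer $p_{i+1}$ in the former versus the shorter $p_i$ in the latter; a local coupling (analogous to the one for Lemma \ref{lmm:max Fn(U)=f_m^k(U)}'s flavor of argument) yields $M_i(X')-M_{i+1}(X)\le(|p_{i+1}|-|p_i|)(1-(1-q)^n)$. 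Plugging in and simplifying, the target inequality reduces to $2(1-x)^n\le 1+(1-x-y)^n$, which follows from convexity of $t\mapsto(1-t)^n$ together with $y\le x$: midpoint convexity gives $1+(1-x-y)^n\ge 2(1-(x+y)/2)^n$, and $(1-(x+y)/2)^n\ge(1-x)^n$ since $(x+y)/2\le x$. The main obstacle will be spotting this convex-analytic reduction; once identified, the rest of the argument is a structured coupling.
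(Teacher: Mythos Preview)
Your argument is correct and shares the paper's skeleton exactly: reduce to iid via Lemma~\ref{lmm:assume iid max}, swap an adjacent inversion $X(p_i)<X(p_{i+1})$, show $f_l((X')^n,\alpha)\le f_l(X^n,\alpha)$ for every interceptor choice $l\ne i$, and for the remaining case $l=i$ compare $f_i((X')^n,\alpha)$ against $f_{i+1}(X^n,\alpha)$. The execution differs, and yours is cleaner. The paper works from the explicit expansions (Equations~\ref{eq:fi max cost disjoint i<m} and~\ref{eq:fi max cost disjoint i=m}), which forces a case split $i<m-1$ versus $i=m-1$ and reduces to $1+x_{\mathrm{pre}}^n\ge(1-z)^n+(x_{\mathrm{pre}}+y)^n$ (with $x_{\mathrm{pre}}$ the prefix mass), proved via a telescoping factorisation. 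Your coupling handles all $l\ne i$ uniformly, and your bound $M_i(X')-M_{i+1}(X)\le(|p_{i+1}|-|p_i|)\bigl(1-(1-q)^n\bigr)$ leads to the weaker but sufficient inequality $2(1-x)^n\le 1+(1-x-y)^n$, whose midpoint-convexity proof is a one-liner; no case split is needed. One small point worth making explicit: your decomposition $f_l=(\text{hit prob})\cdot(|p_l|+\alpha)+(\text{no-hit prob})\cdot M_l$ uses that a hit path is automatically the costliest, i.e.\ the standing assumption $\alpha>|p_m|-|p_1|$ from the start of Section~\ref{sec:max}.
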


Using Lemma \ref{lmm:convhull}, we get a corollary for MAX cost which is analogous to the SUM cost case:
\begin{corollary}
When $k=1$, there holds
\begin{align*}
    \min_{\sigma\in(\Delta(P_{st}))^n}F_n(\sigma,\alpha)=\min_{\sigma\in\convhull(\mathcal{U}_n)}F_n(\sigma,\alpha)
\end{align*}
\end{corollary}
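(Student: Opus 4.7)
The plan is to chain the two structural reduction lemmas available in this subsection (Lemma \ref{lmm:assume iid max} and Lemma \ref{lmm:assume decreasing max}, the latter requiring $k=1$) with the characterization of $\convhull(\mathcal{U}_n)$ from Lemma \ref{lmm:convhull}.

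The easy direction is $\min_{\sigma\in(\Delta(P_{st}))^n}F_n(\sigma,\alpha)\le\min_{\sigma\in\convhull(\mathcal{U}_n)}F_n(\sigma,\alpha)$: by Lemma \ref{lmm:convhull} every element of $\convhull(\mathcal{U}_n)$ has the form $X^n$ with $X\in\convhull(\mathcal{U}_1)\subseteq\Delta(P_{st})$, so $\convhull(\mathcal{U}_n)\subseteq(\Delta(P_{st}))^n$ and the inequality is immediate.

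For the substantive direction, I would start from an arbitrary $\sigma\in(\Delta(P_{st}))^n$ and first invoke Lemma \ref{lmm:assume iid max} to replace it with an iid strategy $X^n$ for some $X\in\Delta(P_{st})$ without increasing $F_n$. I would then apply (the proof of) Lemma \ref{lmm:assume decreasing max}, which uses the fact that $k=1$ together with the disjointness of paths to perform a monotone rearrangement of the marginal mass so that more mass lies on shorter paths. Since the rearrangement is a single permutation of the path labels applied identically to every coordinate, the iid structure is preserved, yielding a new iid strategy $(X')^n$ with $X'\in S=\convhull(\mathcal{U}_1)$ and $F_n((X')^n,\alpha)\le F_n(X^n,\alpha)$. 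By Lemma \ref{lmm:convhull} this element belongs to $\convhull(\mathcal{U}_n)$, which closes the inequality.

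The main obstacle I anticipate is precisely the compatibility of the two reductions: a priori, iid-ifying and rearranging are two separate operations, and the proof only goes through if they can be carried out simultaneously. If the rearrangement in Lemma \ref{lmm:assume decreasing max} is not obviously symmetric across agents, one can instead perform the reductions in the opposite order, first replacing $\sigma$ with a $\sigma'$ whose marginals $\sigma'_1,\dots,\sigma'_n$ are each decreasing on $P_{st}$, and then iid-ifying by taking $X=\frac{1}{n}\sum_j\sigma'_j$. Because a convex combination of decreasing distributions on $P_{st}$ is again decreasing, this $X$ automatically lies in $S=\convhull(\mathcal{U}_1)$, and Lemma \ref{lmm:assume iid max} gives $F_n(X^n,\alpha)\le F_n(\sigma',\alpha)\le F_n(\sigma,\alpha)$. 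Either route reduces an arbitrary uncorrelated strategy to a point of $\convhull(\mathcal{U}_n)$ of no greater cost, finishing the corollary.
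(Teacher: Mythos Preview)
Your proposal is correct and follows exactly the paper's approach: the paper simply states that the corollary follows from Lemma~\ref{lmm:convhull} together with the two reduction lemmas (Lemma~\ref{lmm:assume iid max} and Lemma~\ref{lmm:assume decreasing max}), in direct analogy with Lemma~\ref{coro:sum disjpaths opt in convhull} for the SUM case. Your worry about the compatibility of the two reductions is already addressed in the paper's proof of Lemma~\ref{lmm:assume decreasing max}, which begins by assuming $\sigma=X^n$ is iid (via Lemma~\ref{lmm:assume iid max}) and then swaps entries of the common marginal $X$, so the iid structure is automatically preserved throughout and your first route goes through directly.
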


But in contrast to the SUM cost case, Theorem \ref{thm:max uniform is almost never optimal} implies that in many game instances $F_n$ is not concave on $\mathcal{U}_n$:
\begin{corollary}
Let $G$ a graph with $m$ disjoint paths, and $m_1$ shortest paths. Let $k=1$ and $n>1$. Let $\alpha$ such that $\min_{\sigma\in(\Delta(P_{st}))^n}F_n(\sigma,\alpha)<|p_1|+\frac{\emn{m_1}{n}}{m_1}k\alpha$. Then $F_n$ is not concave on $\mathcal{U}_n$.
\end{corollary}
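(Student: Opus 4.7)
The plan is to argue by contradiction. Suppose that $F_n$ is concave on $\convhull(\mathcal{U}_n)$ (this is the natural reading of the statement, since $\mathcal{U}_n$ is a finite set of joint distributions and concavity is only meaningful on its convex hull). Since $\mathcal{U}_n$ is finite, $\convhull(\mathcal{U}_n)$ is a compact polytope whose extreme points necessarily lie in $\mathcal{U}_n$. A concave function on a compact convex set attains its minimum at an extreme point, so under the assumed concavity
\[
\min_{\sigma\in\convhull(\mathcal{U}_n)}F_n(\sigma,\alpha)=\min_{\sigma\in\mathcal{U}_n}F_n(\sigma,\alpha).
\]

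The preceding corollary, which is valid in the present $k=1$ disjoint-paths setting, gives
\[
\min_{\sigma\in(\Delta(P_{st}))^n}F_n(\sigma,\alpha)=\min_{\sigma\in\convhull(\mathcal{U}_n)}F_n(\sigma,\alpha).
\]
Chaining these two equalities yields $\min_{\sigma\in(\Delta(P_{st}))^n}F_n(\sigma,\alpha)=\min_{\sigma\in\mathcal{U}_n}F_n(\sigma,\alpha)$. However, the hypothesis of the present corollary, namely $\min_{\sigma\in(\Delta(P_{st}))^n}F_n(\sigma,\alpha)<|p_1|+\frac{\emn{m_1}{n}}{m_1}k\alpha$, is precisely the condition required by Theorem \ref{thm:max uniform is almost never optimal}. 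That theorem delivers the strict inequality $\min_{\sigma\in(\Delta(P_{st}))^n}F_n(\sigma,\alpha)<\min_{\sigma\in\mathcal{U}_n}F_n(\sigma,\alpha)$, contradicting the equality derived above. Therefore $F_n$ is not concave on $\convhull(\mathcal{U}_n)$.

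The only step that deserves care is the assertion that every extreme point of $\convhull(\mathcal{U}_n)$ belongs to $\mathcal{U}_n$; this is the standard fact that for a finite set $S\subset\R^N$, no proper convex combination of points of $S$ can be extreme in $\convhull(S)$, since an extreme point cannot be written nontrivially as a midpoint of two other points of the polytope. The rest of the argument is purely a composition of the preceding corollary (which identifies the uncorrelated minimum with the minimum over $\convhull(\mathcal{U}_n)$) with Theorem \ref{thm:max uniform is almost never optimal} (which separates that minimum from the minimum over $\mathcal{U}_n$), so I anticipate no substantive calculation and no hidden obstacle.
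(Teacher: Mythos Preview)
Your argument is correct and is precisely the derivation the paper intends: it states that the corollary follows from Theorem~\ref{thm:max uniform is almost never optimal} combined with the preceding $k=1$ corollary identifying $\min_{(\Delta(P_{st}))^n}F_n$ with $\min_{\convhull(\mathcal{U}_n)}F_n$, and your contradiction via the extreme-point minimum of a concave function is exactly how one combines them. One cosmetic point: Theorem~\ref{thm:max uniform is almost never optimal} is stated for ``$n,k>1$'', but its proof (via Lemma~\ref{lmm:max Fn(U)=f_m^k(U)}, which only requires $k\ge1$) goes through verbatim for $k=1$, so your invocation of it here is legitimate despite the apparent mismatch in the hypothesis.
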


\begin{credits}
\subsubsection*{Acknowledgements.} We thank Farid Arthaud and Ziyu Zhang for helpful discussions and valuable feedback on earlier versions of the introduction. We also thank Yannai A. Gonczarowski for helpful discussions, and Kai Jia for valuable feedback on earlier versions of the introduction.
\end{credits}

\printbibliography{}

\newpage

\appendix

\section{Proofs of General Lemmas}\label{sec:lemmas proofs}
Recall that $\emn{\ell}{n}=\ell\p{1-\p{1-\frac{1}{\ell}}^n}$. We begin with basic properties of $\emn{\ell}{n}$.
\begin{lemma}
\begin{enumerate}
    \item When $\ell,n\ge1$, $\emn{\ell}{n}$ is strictly increasing in $\ell$ and strictly increasing in $n$. On the other hand, $\frac{\emn{\ell}{n}}{\ell}$ is strictly decreasing in $\ell$ and strictly increasing in $n$.
    \item Treat $n=n(\ell)$ as a function of $\ell$. Then
    \begin{align*}
        1=\begin{cases}
        \lim_{\ell\to\infty}\frac{\emn{\ell}{n}}{n}&n=o(\ell)\\
        \lim_{\ell\to\infty}\frac{\emn{\ell}{n}}{(1-e^{-x})\ell}&n=x\ell+o(\ell),~x\in(0,\infty)\\
        \lim_{\ell\to\infty}\frac{\emn{\ell}{n}}{\ell}&n=\omega(\ell)
        \end{cases}
    \end{align*}
    i.e., $\emn{\ell}{n}$ converges to the functions subtracted from it in the limits above, depending on whether $n=o(\ell)/\Theta(\ell)/\omega(\ell)$.
\end{enumerate}
\end{lemma}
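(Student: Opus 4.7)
The plan is to split the proof along the two parts of the statement. Write $\emn{\ell}{n}/\ell = 1 - (1-1/\ell)^n$; most of the monotonicity claims then follow by inspection, and the asymptotic claims reduce to a single logarithmic expansion of $(1-1/\ell)^n$.

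For part 1, three of the four monotonicity claims are immediate. Since $1-1/\ell \in (0,1)$ whenever $\ell > 1$, the map $n \mapsto (1-1/\ell)^n$ is strictly decreasing, so both $\emn{\ell}{n}$ and $\emn{\ell}{n}/\ell$ strictly increase in $n$. The map $\ell \mapsto 1-1/\ell$ is strictly increasing, so $\emn{\ell}{n}/\ell$ is strictly decreasing in $\ell$. The only nontrivial claim is that $\emn{\ell}{n}$ itself strictly increases in $\ell$ (for $n \geq 2$). I would treat $\ell$ as a continuous variable and show that
\[
\frac{d}{d\ell}\bigl[\ell - \ell(1-1/\ell)^n\bigr] = 1 - (1-1/\ell)^n - (n/\ell)(1-1/\ell)^{n-1}.
\]
Substituting $x = 1-1/\ell$, the sign of the derivative matches the sign of
$g(x) := 1 + (n-1)x^n - n x^{n-1}$.
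I would observe $g(1) = 0$ and $g'(x) = n(n-1)x^{n-2}(x-1) < 0$ on $(0,1)$ for $n \geq 2$, so $g$ strictly decreases from $g(0)=1$ to $g(1)=0$ and in particular $g > 0$ on $(0,1)$.

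For part 2, I would use the expansion $\ln(1 - 1/\ell) = -1/\ell - 1/(2\ell^2) + O(\ell^{-3})$ and write $(1-1/\ell)^n = \exp\bigl(n\ln(1-1/\ell)\bigr)$. In Case 1, where $n = o(\ell)$, this gives $n\ln(1-1/\ell) = -n/\ell + O(n/\ell^2)$, and a Taylor expansion yields $(1-1/\ell)^n = 1 - n/\ell + o(n/\ell)$, so $\emn{\ell}{n}/n \to 1$. In Case 2, where $n = x\ell + o(\ell)$ with $x > 0$ fixed, we have $n/\ell \to x$ and $n/\ell^2 \to 0$, so $n\ln(1-1/\ell) \to -x$, whence $(1-1/\ell)^n \to e^{-x}$ and $\emn{\ell}{n}/\ell \to 1 - e^{-x}$. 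In Case 3, where $n = \omega(\ell)$, we have $n/\ell \to \infty$, so $n\ln(1-1/\ell) \to -\infty$, giving $(1-1/\ell)^n \to 0$ and $\emn{\ell}{n}/\ell \to 1$.

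The only technically nontrivial step is the positivity of $g(x) = 1 + (n-1)x^n - nx^{n-1}$ on $(0,1)$; the rest is routine manipulation of the base expression and the standard logarithm expansion. I would also flag the degenerate boundary $n = 1$ (where $\emn{\ell}{1} = 1$ is constant in $\ell$, consistent with $g \equiv 0$) to keep the monotonicity statement accurate.
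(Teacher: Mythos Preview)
Your proposal is correct. It differs from the paper's own proof in two respects. First, the paper dismisses Part~1 as ``easy to verify'' and gives no argument, whereas you isolate the one nontrivial monotonicity claim (that $\emn{\ell}{n}$ increases in $\ell$) and settle it with a clean derivative computation via $g(x)=1+(n-1)x^n-nx^{n-1}$; your remark about the degenerate boundary $n=1$ is also apt, since ``strictly increasing in $\ell$'' is indeed vacuous there. Second, for Part~2 the paper proceeds case by case with Bernoulli-type sandwiching (for $n=o(\ell)$ it uses $\emn{\ell}{n}\le n$ from Bernoulli's inequality together with a matching lower bound, and for the other two regimes it appeals directly to the limit of $(1-1/\ell)^n$), while you treat all three regimes uniformly through the single expansion $n\ln(1-1/\ell)=-n/\ell+O(n/\ell^2)$. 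Your route is more systematic and avoids hunting for a separate inequality in each direction of each case; the paper's route is purely algebraic and yields the explicit non-asymptotic bound $\emn{\ell}{n}\le n$, which is reused elsewhere in the paper.
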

\begin{proof}
Item 1 is easy to verify, so we prove only Item 2. Suppose $n=\omega(\ell)$. Then by elementary calculus, $\lim_{\ell\to\infty}\p{1-\frac1\ell}^n=0$. Therefore, $\lim_{\ell\to\infty}\frac{\emn{\ell}{n}}{\ell}=1$. Now suppose that $n=o(\ell)$. On the one hand, by Bernoulli's inequality $\emn{\ell}{n}\le\ell\p{1-\p{1-\frac{n}{\ell}}}=n$. On the other hand, because $\p{1-\frac{1}{\ell}}^n\le n\p{1-\frac{1}{l}}$ for $\ell,n\ge1$, we have $\emn{\ell}{n}\ge\ell\p{1-n\p{1-\frac{1}{\ell}}}=n-\frac{n}{\ell}$. Therefore
\begin{align*}
    1\ge\lim_{\ell\to\infty}\frac{\emn{\ell}{n}}{n}\ge\lim_{\ell\to\infty}\frac{n}{\frac{n}{\ell}-n}=1
\end{align*}
Now suppose that $n=x\ell+o(\ell)$. Then $\emn{\ell}{n}=\ell\p{1-\p{1-\frac{1}{\ell}}^{x\ell+o(\ell)}}$. Again by calculus, $\lim_{\ell\to\infty}\frac{\p{1-\frac{1}{\ell}}^{x\ell+o(\ell)}}{e^{-x}}=1$, as $x$ is constant. So $\lim_{\ell\to\infty}\frac{\emn{\ell}{n}}{(1-e^{-x})\ell}=1$.
\qed\end{proof}
This means that unless $n=x\ell+o(\ell)$, then $\emn{\ell}{n}=(1+o(1))\min(\ell,n)$, where $o(1)=o_{\ell,n}(1)$. 
If $n=x\ell+o(\ell)$, then $\emn{\ell}{n}=(1+o(1))(1-e^{-x})\ell\in\left[(1-e^{-1})\ell,\ell\right]$.

We now prove the lower bound $\beta$ of Lemma \ref{lmm:beta lower bound}. For that we have the following lemma:

\begin{lemma}\label{lmm:with amgm}
Let $p\in P\subseteq P_{st}$. The distribution $U_P^n$ is a solution of the maximization problem
\begin{align*}
    \max_{\sigma\in(\Delta(P_{st}))^n}~~~&\prod_j(1-\sigma_j(p))\\
    \text{subject to~~~}&\forall j\in[n]~~~\sum_{p'\in P}\sigma_j(p')=1\\
    &\forall p'\in P~~~\prod_j(1-\sigma_j(p))=\prod_j(1-\sigma_j(p'))
\end{align*}
\end{lemma}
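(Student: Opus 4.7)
\textbf{Proof plan for Lemma \ref{lmm:with amgm}.} The constraint $\sum_{p'\in P}\sigma_j(p')=1$ forces each $\sigma_j$ to be supported on $P$, so the problem is effectively posed on $(\Delta(P))^n$. Write $m'=|P|$ and, for $j\in[n]$ and $p'\in P$, set $q_{j,p'}:=1-\sigma_j(p')\in[0,1]$. The simplex constraint becomes
\begin{equation*}
\sum_{p'\in P}q_{j,p'}\;=\;m'-1\qquad\text{for every }j\in[n],
\end{equation*}
while the ``equal product'' constraint says that $Q_{p'}:=\prod_{j}q_{j,p'}$ takes the same value $Q$ for every $p'\in P$. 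The objective is to maximize this common value $Q$, and the distribution $U_P^n$ is clearly feasible and attains $Q=(1-1/m')^n$.

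The plan is to bound $Q$ from above by $(1-1/m')^n$ via a single application of AM--GM. Multiplying the equality $Q_{p'}=Q$ over all $p'\in P$ and swapping the order of the products gives
\begin{equation*}
Q^{m'}\;=\;\prod_{p'\in P}\prod_{j\in[n]}q_{j,p'}\;=\;\prod_{j\in[n]}\Bigl(\prod_{p'\in P}q_{j,p'}\Bigr).
\end{equation*}
For each fixed $j$, the AM--GM inequality applied to the $m'$ nonnegative numbers $(q_{j,p'})_{p'\in P}$, whose sum is $m'-1$, yields
\begin{equation*}
\prod_{p'\in P}q_{j,p'}\;\le\;\Bigl(\frac{m'-1}{m'}\Bigr)^{m'}\;=\;\Bigl(1-\tfrac{1}{m'}\Bigr)^{m'}.
\end{equation*}
Taking the product over $j\in[n]$ and then the $m'$-th root gives $Q\le(1-1/m')^n$, matching the value attained by $U_P^n$.

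Equality in AM--GM forces $q_{j,p'}=(m'-1)/m'$ for all $p'\in P$, i.e.\ $\sigma_j=U_P$; so $U_P^n$ is in fact the unique optimizer (though the lemma only asserts optimality). The only subtle point is the bookkeeping that turns the ``equal products'' constraint into the symmetric expression $Q^{m'}=\prod_j\prod_{p'}q_{j,p'}$, after which AM--GM finishes the argument with no further work. I expect no real obstacle beyond verifying that the AM--GM step is indeed tight at the uniform distribution and that feasibility of $U_P^n$ is immediate.
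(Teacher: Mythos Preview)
Your argument is correct, and it takes a genuinely different route from the paper's. Both proofs use AM--GM together with the equal-products constraint, but in orthogonal directions. The paper fixes a path $p'$ and applies AM--GM across the $n$ agents to get $Q^{1/n}\le 1-\tfrac{1}{n}\sum_j\sigma_j(p')$ for every $p'\in P$; it then uses a pigeonhole step over the simplex constraints to find some $p'$ with $\sum_j\sigma_j(p')\ge n/|P|$, which gives $Q\le(1-1/|P|)^n$. You instead multiply the equal-products constraints to obtain $Q^{m'}=\prod_j\prod_{p'}q_{j,p'}$, and then for each fixed agent $j$ apply AM--GM across the $m'$ paths, using the simplex constraint $\sum_{p'}q_{j,p'}=m'-1$ directly. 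Your route is a bit cleaner: it avoids the separate pigeonhole step, and the equality case of AM--GM immediately forces $\sigma_j=U_P$ for every $j$, giving uniqueness in one stroke. The paper's route, by contrast, makes more explicit how the equal-products constraint transfers the AM--GM bound from one path to all paths.
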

\begin{proof}
By the AMGM inequality
\begin{align*}
    \sqrt[n]{\prod_j(1-\sigma_j(p))}\le\frac{n-\sum_j\sigma_j(p)}{n}
\end{align*}
By the second constraint we have
\begin{align*}
    \forall p'\in P~~~\sqrt[n]{\prod_j(1-\sigma_j(p))}\le\frac{n-\sum_j\sigma_j(p')}{n}
\end{align*}
and by the first constraint
\begin{align*}
    \sum_{p'\in P}\sum_j\sigma_j(p')=n~\Rightarrow~\exists p'~~~\sum_j\sigma_j(p')\ge\frac{n}{|P|}
\end{align*}
Therefore for that $p'$,
\begin{align*}
    \sqrt[n]{\prod_j(1-\sigma_j(p))}\le1-\frac{1}{|P|}\Rightarrow\prod_j(1-\sigma_j(p))\le\p{1-\frac{1}{|P|}}^n
\end{align*}
so $\p{1-\frac{1}{|P|}}^n$ is an upper bound on the maximal value of $\prod_j(1-\sigma_j(p))$, and $\sigma=U_P^n$ attains this bound.
\qed\end{proof}

\begin{proof}[Proof of Lemma \ref{lmm:beta lower bound}]
Let $E',P'$ be the sets of edges, $s-t$ paths of $G'$. Take a maximum flow $f$ of $G'$ with $|f|=|P_f|=m'$ and denote $P_f=\s{p_1,...,p_{m'}}$. For every $i$ pick an edge $e_i\in p_i$ such that $\s{e_1,...,e_{m'}}$ forms a mincut of $G'$ (which is possible because $f$ saturates some mincut). Then $\beta(\sigma)\ge\max_i\sigma^{E'}(e_i)$. $f$ is maximal and $\s{e_1,...,e_{m'}}$ intersects with every path in $P'$. So we argue that $\max_i\sigma^{E'}(e_i)$ is minimal when $\bigcup_j\supp\sigma_j\subseteq P_f$. This is because if $\sigma(p)>0$ for $p\notin P_f$, then for $e_i$ such that $e_i\in p$ we can move the weight of $\sigma$ on $p$ to $p_i$, and get the corresponding $\mu$. It satisfies $\mu^{E'}(e_i)\eqdef\sigma^{E'}(e_i)$, and $\mu^{E'}(e_{i'})=\sigma^{E'}(e_{i'})$ for $e_{i'}\notin p$, and $\mu^{E'}(e_{i'})<\sigma^{E'}(e_{i'})$ for $e_{i'}\in p$.
Therefore at the minimum, $\sigma^{E'}(e_i)=\sigma(p_i)=1-\prod_j(1-\sigma_j(p_i))$.

Now, suppose that $\min_i\sigma(p_i)<\max_i\sigma(p_i)$. Then by continuity we can slightly increase $\min_i\sigma(p_i)$ and slightly decrease each $\sigma(p_{i'})=\max_i\sigma(p_i)$, to get a product distribution $\sigma'$ with $\max_i\sigma'(p_i)<\max_i\sigma(p_i)$. Therefore, at the minimum we have $\sigma(p_i)=\sigma(p_{i'})$ for every $i,i'\le m'$. We finish the proof by applying Lemma \ref{lmm:with amgm} and with it evaluating $\beta$ at $\sigma=U_{P_f}^n$.
\qed\end{proof}

\begin{proof}[Proof of Lemma \ref{lmm:beta upper bound n to 1}]
Let $\sigma\in\Delta(P_{st})$. The inequality $\beta(\sigma^n)\le m_c\beta(\sigma)$ follows from $\beta(\sigma)\ge\frac{1}{m_c}$ (Lemma \ref{lmm:beta lower bound}) and $\beta(\sigma^n)\le1$.
By Bernoulli's inequality,
\begin{align*}
    \beta(\sigma^n)=\max_e\p{1-(1-\sigma^E(e))^n}\le n\max_e\sigma^E(e)=n\beta(\sigma)
\end{align*}
so $\beta(\sigma^n)\le\min(m_c,n)\beta(\sigma)$.
Now we prove that $\beta(\sigma)\le\beta(\sigma^n)$. Suppose that $\beta(\sigma)=\sigma^E(e)$. As $\beta(\sigma^n)\eqdef\max_{e'}\p{1-(1-\sigma^E(e')^n}$, by monotonicity $\beta(\sigma^n)=1-(1-\sigma^E(e))^n\ge\sigma^E(e)=\beta(\sigma)$.
\qed\end{proof}

\begin{proof}[Proof of Lemma \ref{lmm:r upper bound n to 1}]
Take $\nu\in\argmin_{\sigma\in\Delta(P_{st})}F_1(\sigma,\alpha)$. Then
$$r(G,n,k,\alpha)=\frac{\min_{\sigma\in(\Delta(P_{st}))^n}F_n(\sigma,\alpha)}{F_1(\nu,\alpha)}\le\frac{F_n(\nu^n,\alpha)}{F_1(\nu,\alpha)}\le x$$
\qed\end{proof}

\begin{proof}[Proof of Lemma \ref{lmm:convhull}]
Let $P_i=\s{p_1,...,p_i}$. Let $X=(x_1,...,x_m)\in S$. Define $y_m=mx_m$ and $y_i=i(x_i-x_{i+1})$ for every $1\le i<m$. Then
$y_i\in[0,1]$, $\sum y_i=\sum x_i=1$, and treating $X,U_{P_i}$ as vectors in $\R^m$ we get $\sum y_iU_{P_i}=X$. Hence $X\in\convhull\p{\s{U_{P_i}}_{i\in[m]}}\eqdef\convhull\p{\mathcal{U}_1}$. The other inclusion follows by definition.

Now we prove the second equality in the statement. Because all distributions in its right-hand side (rhs) are iid,
\begin{align*}
    \s{X^n\mid X\in\convhull(\mathcal{U}_1)}&=\s{\p{\sum_ix_iU_{P_i}}_{j\in[n]}\mid x_i\in[0,1],~~\sum_ix_i=1}\\
    &=\s{\sum_ix_iU_{P_i}^n\mid x_i\in[0,1],~~\sum_ix_i=1}\\
    &\eqdef\convhull(\mathcal{U}_n)
\end{align*}
where we treated $U_{P_i}^n$ (and $X$) as a vector in $\R^{m\times n}$.
\qed\end{proof}

\section{Connections of Previous Results to Our Model}\label{sec:known and slider}
\paragraph{Converting payoffs to costs} \citet{basilico2017team} studied games of team agents vs. an adversary, where the team agents aim to {\it maximize their payoff}. For a payoff function $u$, let $\Tilde{r}(G,n,k,u)=\frac{\max_{\sigma\in\Delta((P_{st})^n)}\min_\tau u(\sigma,\tau)}{\max_{\sigma\in(\Delta(P_{st}))^n}\min_\tau u(\sigma,\tau)}$. From \cite{von1997team} it follows that $\Tilde{r}$ is indeed the ratio of the cost at TME to the cost at TMECor. \citet{basilico2017team} showed that for a $G$ that contains only $s,t$ and $m$ disjoint paths of length 1 between them, the payoff function $u(a,d)=1\s{\bigcup a\cap\bigcup d=\emptyset}$ and $k=m-1$, the PoU is $\Tilde{r}(G,n,k,u)=m^{n-1}$.
On the other hand, they showed that for every such zero-sum game, this PoU is at most $m^{n-1}$. By allowing arbitrary payoffs, this formulation captures every normal-form game that is considered in~\cite{basilico2017team}. The same results hold when we assume that the team agents minimize a cost function. This is because one can transform a payoff function to a cost function while preserving the PoU (up to an additional assumption):
\begin{lemma}\label{lmm:c to u}
Let $G,n,k$ and a payoff function $u$ such that
$$\max_{\sigma\in(\Delta(P_{st}))^n}\min_{d\in E^k}u(\sigma,d)>0$$
Then there is a cost function $C$ such that $r(G,n,k,C)=\Tilde{r}(G,n,k,u)$.
\end{lemma}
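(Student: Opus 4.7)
The plan is to invert the payoff function by an affine map chosen so that the uncorrelated and correlated min-max cost values come out to exactly the correlated and uncorrelated max-min payoff values, respectively, so that the ratio is flipped in exactly the right way. Write
\begin{align*}
V_u := \max_{\sigma \in (\Delta P_{st})^n} \min_{d \in E^k} u(\sigma, d), \qquad V_c := \max_{\sigma \in \Delta((P_{st})^n)} \min_{d \in E^k} u(\sigma, d)
\end{align*}
for the uncorrelated and correlated values of the payoff game $u$. The hypothesis gives $V_u > 0$; and since $(\Delta P_{st})^n \subseteq \Delta((P_{st})^n)$, enlarging the agents' strategy set cannot hurt the maximizer, so $V_c \ge V_u > 0$ and $\Tilde r(G,n,k,u) = V_c/V_u$ is a well-defined number in $[1,\infty)$.

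I would then set $K := V_u + V_c$ and define $C(a,d) := K - u(a,d)$, extending to mixed strategies by linearity of expectation. Since $K$ is constant, $\max_d C(\sigma, d) = K - \min_d u(\sigma, d)$, so for any agent strategy class $\Sigma$,
\begin{align*}
\min_{\sigma \in \Sigma} \max_d C(\sigma, d) = K - \max_{\sigma \in \Sigma} \min_d u(\sigma, d).
\end{align*}
Applied to $\Sigma = (\Delta P_{st})^n$ this is $K - V_u = V_c$, and applied to $\Sigma = \Delta((P_{st})^n)$ this is $K - V_c = V_u$. Taking the ratio gives $r(G,n,k,C) = V_c/V_u = \Tilde r(G,n,k,u)$, which is the desired equality.

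The main obstacle is the range of $C$: the cost-function class of the main body is $[0,\infty]$-valued, whereas $K - u(a,d)$ can be negative whenever $\max_{a,d} u(a,d) > V_u + V_c$. The cleanest resolution, and the one I expect the appendix to take, is to observe that the PoU formula depends only on the two min-max values, both of which are equal to $V_c$ or $V_u$ and thus strictly positive by hypothesis, so the PoU ratio is unambiguously defined for $\R$-valued $C$ with positive min-max values and the construction goes through. If instead one wants a nonnegative $C$, naive fixes are problematic: additive shifts change the ratio $V_c/V_u$, scalings preserve the nonnegativity defect, and pointwise truncation $\min(u, K)$ can shift mixed-strategy expectations and thereby alter $V_u$ and $V_c$. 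This is where the positivity hypothesis $V_u > 0$ would need to earn its keep, presumably by ensuring that the equilibrium witnesses for $u$ place essentially no mass on outcomes where $u$ exceeds $K$. I view confirming the admissibility of signed $C$ (or the analogous preprocessing of $u$) as the primary subtlety of the proof; the algebraic identity itself is immediate.
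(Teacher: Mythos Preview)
Your proposal is correct and takes essentially the same approach as the paper: the paper also sets $T:=T'+T''$ (your $K=V_c+V_u$) and $C(a,d):=T-u(a,d)$, then carries out exactly your computation $\frac{T-T''}{T-T'}=\frac{T'}{T''}$. The paper's proof does not address the nonnegativity concern you raise at all---it simply defines $C$ as $T-u$ and proceeds---so your discussion of the codomain is more careful than the original, but the underlying construction and algebra are identical.
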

\begin{proof}
Define
$$T'=\max_{\sigma\in\Delta((P_{st})^n)}\min_{d\in E^k}u(\sigma,d),~T''=\max_{\sigma\in(\Delta(P_{st}))^n}\min_{d\in E^k}u(\sigma,d)$$
Let $T=T'+T''$, and define the cost function $C(a,d)=T-u(a,d)$.
By linearity, $C(\sigma,\tau)=\E_{(a,d)\sim(\sigma,\tau)}(T-u(a,d))=T-u(\sigma,\tau)$, so
\begin{align*}
    r(G,n,k,C)&=\frac{\min_{\sigma\in(\Delta(P_{st}))^n}\max_{d\in E^k}(T-u(\sigma,d))}{\min_{\sigma\in\Delta((P_{st})^n)}\max_{d\in E^k}(T-u(\sigma,d))}\\
    &=\frac{T-\max_{\sigma\in(\Delta(P_{st}))^n}\min_{d\in E^k}u(\sigma,d)}{T-\max_{\sigma\in\Delta((P_{st})^n)}\min_{d\in E^k}u(\sigma,d)}
    =\frac{T-T''}{T-T'}=\frac{T'}{T''}=\Tilde{r}(G,n,k,u)
\end{align*}
\qed\end{proof}

When converting the payoff function of \citet{basilico2017team}, we get the cost function $C(a,d)=\frac{1}{m}+\frac{1}{m^n}-1\s{\bigcup a\cap\bigcup d=\emptyset}$. A similar proof shows how to convert a cost function to a payoff function while preserving the PoU, under an analogous assumption on the cost function.

\section{Omitted Proofs in Section \ref{sec:sum}}\label{appendix:sum}
\begin{proof}[Proof of Lemma \ref{lmm:sum cost F form}]
\begin{align*}
    C(\sigma,d,\alpha)&=\E_{a\sim\sigma}\p{\abs{\bigcup a}}+\alpha\sum_{a\in (P_{st})^n}\sigma(a)\sum_{j\le k}1\s{d_j\in\bigcup a}\\
    &=\E_{a\sim\sigma}\p{\abs{\bigcup a}}+\alpha\sum_{j\le k}\sum_{a\in (P_{st})^n}\sigma(a)1\s{d_j\in\bigcup a}\\
    &\eqdef\E_{a\sim\sigma}\p{\abs{\bigcup a}}+\alpha\sum_{j\le k}\sigma^E(d_j)
\end{align*}
Now we see that $\max_dC(\sigma,d,\alpha)$ is obtained for $d$ such that $d_j\in\argmax_e\sigma^E(e)$ for every $j$. For such a $d$, the above becomes
\begin{align*}
    \E_{a\sim\sigma}\p{\abs{\bigcup a}}+\alpha\sum_{j\le k}\beta(\sigma)=\E_{a\sim\sigma}\p{\abs{\bigcup a}}+k\alpha\beta(\sigma)
\end{align*}
Now suppose $P:=\bigcup_j\supp\sigma_j$ is a set of disjoint paths. Then for every $e$ edge, $\sigma^E(e)=\sum_{p\in P_{st,e}}\sigma(p)$ as a sum of disjoint events. So
\begin{align*}
    \E_{a\sim\sigma}\p{\abs{\bigcup a}}&=\sum_{e\in E}\sigma^E(e)=\sum_{e\in E}\sum_{p\in P_{st,e}}\sigma(p)=\sum_{e\in E}\sum_{p\in P_{st}}1_{e\in p}\sigma(p)\\
    &=\sum_p\sum_{e\in p}\sigma(p)=\sum_p\sigma(p)|p|
\end{align*}
\qed\end{proof}

\begin{proof}[Proof of Theorem \ref{thm:general cost uniform is approx}]
Let $f\in MF$ with $|f|=m_c$. Then by definition, $F_n(U_{P_f}^n,\alpha)\le n\abs\pmax+\frac{\emn{m_c}{n}}{m_c}k\alpha$. On the other hand, $F_n(\sigma,\alpha)\ge \abs\pmin+k\alpha\beta(\sigma)$. By Lemma \ref{lmm:beta lower bound} we get the lower bound $\abs\pmin+\frac{\emn{m_c}{n}}{m_c}k\alpha$. Therefore, for the inequality to hold it is sufficient to have
\begin{align*}
    n\abs\pmax+\frac{\emn{m_c}{n}}{m_c}k\alpha\le(1+\e)\p{b+\frac{\emn{m_c}{n}}{m_c}k\alpha}
\end{align*}
and this holds for our choice of $\alpha$.
\qed\end{proof}

\begin{proof}[Proof of Theorem \ref{thm:general costs r bounds}]
We begin with proving Item 1. By Lemma \ref{lmm:r upper bound n to 1}, $r(G,n,k,\alpha)\le n$.
We have $F_1(\sigma,\alpha)=L(\sigma)+k\alpha\beta(\sigma)\ge b+\frac{k\alpha}{m_c}$, 
and
$$\min_\sigma F_n(\sigma,\alpha)\le F_n(1_{\pmin}^n,\alpha)=F_1(1_{\pmin},\alpha)=\abs\pmin+k\alpha$$
So in total
\begin{align*}
    r(G,n,k,\alpha)\le\min\p{\frac{\abs\pmin+k\alpha}{\abs\pmin+\frac{k\alpha}{m_c}},n}\le\min(m_c,n)
\end{align*}

Now we prove Item 2. $F_n(\sigma,\alpha)\le F_n(U_{P_f}^n,\alpha)\le n\abs\pmax+\frac{\emn{m_c}{n}}{m_c}k\alpha$.
In total we have for the chosen $\alpha$ that
\begin{align*}
    r(G,n,k,\alpha)\le\frac{B+\frac{\emn{m_c}{n}}{m_c}k\alpha}{b+\frac{k\alpha}{m
    _c}}\le(1+\e)\emn{m_c}{n}
\end{align*}

Now Item 3. By definition $F_1(\sigma,\alpha)\le F_1(U_{P_f},\alpha)\le n\abs\pmax+\frac{k\alpha}{m_c}$ for $f\in MF$ with $|f|=m_c$, and $F_n(\sigma,\alpha)\ge \abs\pmin+k\alpha\beta(\sigma)\ge\abs\pmin+\frac{\emn{m_c}{n}}{m_c}k\alpha$. In total we have for $\alpha$ as in the statement that
\begin{align*}
    r(G,n,k,\alpha)\ge\frac{\abs\pmin+\frac{\emn{m_c}{n}}{m_c}k\alpha}{n\abs\pmax+\frac{k\alpha}{m
    _c}}\ge(1-\e)\emn{m_c}{n}
\end{align*}

Item 4 follows from Items 2 and 3 because $\e\to0$ as $\alpha\to\infty$, completing the proof.
\qed\end{proof}

\begin{proof}[Proof of Theorem \ref{thm:r_U characterization}]
The idea is that $r_U$ is a ratio of two minimums of sets of affine functions. Define:
\begin{align*}
    \varphi(\alpha,n)=\min_{U^n\in\mathcal{U}_n}F_n(U^n,\alpha)=\min_{i\in[m_c]}\p{\frac{\emn{i}{n}}{i}\p{k\alpha+S_i}}=:\min_{i\in[m_c]}h_i(\alpha,n)
\end{align*}
Note that $h_i$ is an affine function in $\alpha$, and that $r_U(G,n,k,\alpha)=\frac{\varphi(\alpha,n)}{\varphi(\alpha,1)}$. Therefore $r_U$ is smooth a.s. and continuous.
We show that $\frac{\varphi(\alpha,n)}{\varphi(\alpha,1)}$ is strictly increasing in $\alpha$. Fix $\alpha_0$. We ignore the countable set of points where more than one $h_i$ attains $\varphi(\alpha,n)$ for a fixed $\alpha$, because they are all isolated and the PoU is a continuous function in $\alpha$. Suppose $\varphi(\alpha_0,1)=h_i(\alpha_0,1)$ and let $j>i$. Then $h_i(\alpha_0,1)<h_j(\alpha_0,1)$.
For every $n$, the coefficient of $\alpha$ in $h_i(\alpha,n)$ is larger than that in $h_j(\alpha,n)$, so after the second argument of $h_i,h_j$ is fixed to 1, the unique intersection point $\alpha_1$ of the functions $h_i(\alpha,1),h_j(\alpha,1)$ satisfies $\alpha_1>\alpha_0$. Fix the second argument to $n>1$ now, and let $\alpha_n$ be the unique intersection point of $h_i(\alpha,n),h_j(\alpha,n)$. We show that $\alpha_n\ge\alpha_1$, which implies by $h_i,h_j$ being affine in $\alpha$ and the above that $h_i(\alpha_0,n)<h_j(\alpha_0,n)$.

Let $E_i=\emn{i}{n}$ and $E_j=\emn{j}{n}$. Then $\alpha_n$ satisfies
\begin{align*}
    \frac{E_j}{j}(k\alpha_n+S_j)&=\frac{E_i}{i}(k\alpha_n+S_i)\iff\\
    k\alpha_n\p{\frac{E_i}{i}-\frac{E_j}{j}}&=S_j\frac{E_j}{j}-S_j\frac{E_i}{i}\iff\\
    \alpha_n&=\frac{1}{k}\frac{S_j\frac{E_j}{j}-S_j\frac{E_i}{i}}{\frac{E_i}{i}-\frac{E_j}{j}}=\frac{1}{k}\frac{iS_jE_j-jS_iE_i}{jE_i-iE_j}\\
    \Rightarrow~\alpha_1&=\frac{1}{k}\frac{iS_j-jS_i}{j-i}
\end{align*}
and
\begin{align*}
    \alpha_n&\ge\alpha_1\iff\\
    \frac{iS_jE_j-jS_iE_i}{jE_i-iE_j}&\ge\frac{iS_j-jS_i}{j-i}\iff\\
    (j-i)(iS_jE_j-jS_iE_i)&\ge(jE_i-iE_j)(iS_j-jS_i)\iff\\
    ij(S_jE_j+S_iE_i)-i^2S_jE_j-j^2S_iE_i&\ge ij(S_jE_i+S_iE_j)-i^2S_jE_j-j^2S_iE_i\iff\\
    S_jE_j+S_iE_i&\ge S_jE_i+S_iE_j\iff\\
    S_j(E_j-E_i)&\ge S_i(E_j-E_i)
\end{align*}
and this holds because $S_j>S_i$ and $E_j>E_i$. Hence, the coefficient of $\alpha_0$ in $\varphi(\alpha_0,1)$ is $\frac{k}{i}$, and in $\varphi(\alpha_0,n)$ it is at least $\frac{kE_i}{i}$. Since $E_i>1$, in total $\frac{d}{d\alpha}\varphi(\alpha,n)\mid_{\alpha=\alpha_0}>\frac{d}{d\alpha}\varphi(\alpha,1)\mid_{\alpha=\alpha_0}$, so by continuity and smoothness a.s., the ratio $\frac{\varphi(\alpha,n)}{\varphi(\alpha,1)}$ is strictly increasing for every $\alpha>0$.

Therefore for sufficiently large $\alpha$, $\varphi(\alpha,n)$ equals the affine function with the smallest leading coefficient: there exists $A(G,n,k)$ such that
\begin{align*}
    \forall\alpha\ge A(G,n,k)~\forall n'\le n~~~\varphi(\alpha,n')=\frac{\emn{m_c}{n'}}{m_c}\p{k\alpha+S_{m_c}}
\end{align*}
so
\begin{align*}
    \forall\alpha\ge A(G,n,k)~~~r_U(G,n,k,\alpha)=\frac{\varphi(\alpha,n)}{\varphi(\alpha,1)}=\emn{m_c}{n}
\end{align*}
Now let $\alpha<A(G,n,k)$. Observe that for some $i\in[m_c]$
\begin{align*}
    \emn{m_c}{n}\varphi(\alpha,1)&=\emn{m_c}{n}\frac{\emn{i}{1}}{i}\p{k\alpha+S_i}=\frac{\emn{m_c}{n}}{i}\p{k\alpha+S_i}\\
    &\stackrel{i,m_c\ge1}{\ge}\frac{\emn{i}{n}}{i}\p{k\alpha+S_i}\ge\varphi(\alpha,n)
\end{align*}
therefore $r_U(G,n,k,\alpha)\le\emn{m_c}{n}$, completing the proof.
\qed\end{proof}

\begin{proof}[Proof of Corollary \ref{coro:computing r_U}]
Note that finding each $P^*_i$ requires solving a mincost flow problem: the flow network is $G$, the edge costs are all 1, and the demands are $i$ for $s$, $-i$ for $t$ and 0 for all other $v\in V$. Therefore algorithms such as those of \citet{chen2022maximum} and \citet{orlin1988faster} yield the above running times.
\qed\end{proof}

\begin{proof}[Proof of Theorem \ref{thm:alpha bound}]
We reuse the notations from Theorem \ref{thm:r_U characterization}. By the proof of the latter, $\alpha\ge A(G,n,k)$ when $\varphi(\alpha,n)=h_{m_c}(\alpha,n)$.
This holds iff for every $i<m_c$,
\begin{align*}
    \alpha&\ge\frac{1}{k}\frac{iS_{m_c}E_{m_c}-m_cS_iE_i}{m_cE_i-iE_{m_c}}=\frac{1}{k}\frac{S_{m_c}\p{1-\p{1-\frac{1}{m_c}}^n}-S_i\p{1-\p{1-\frac{1}{i}}^n}}{\p{1-\frac{1}{m_c}}^n-\p{1-\frac{1}{i}}^n}
\end{align*}
Rhs is lower-bounded by $\frac{1}{k}\frac{S_{m_c}\p{1-\p{1-\frac{1}{m_c}}^n}-S_{m_c-1}}{\p{1-\frac{1}{m_c}}^n}$.
Using the well-known equality $\p{1-\frac{1}{m_c}}^n=e^{-\Theta(n/m_c)}$ we get
\begin{align*}
    &=\frac{1}{k}\frac{S_{m_c}\p{1-e^{-\Theta(n/m_c)}}-S_{m_c-1}}{e^{-\Theta(n/m_c)}}\\
    &=\frac{1}{k}\p{S_{m_c}\p{e^{\Theta(n/m_c)}-1}-S_{m_c-1}e^{\Theta(n/m_c)}}\\
    &=\frac{e^{\Theta(n/m_c)}\p{S_{m_c}-S_{m_c-1}}-S_{m_c}}{k}
\end{align*}
\qed\end{proof}

\begin{proof}[Proof of Lemma \ref{lmm:assume decreasing sum}]
Let $\sigma$ and $|p_i|<|p_{i'}|$. For the distribution $\sigma'$ that results from $\sigma$ by swapping the weights on $p_i,p_i$ we have $F_n(\sigma,\alpha)-F_n(\sigma',\alpha)=\sigma(p_i)(|p_i|-|p_{i'}|)+\sigma(p_{i'})(|p_{i'}|-|p_i|)=(|p_{i'}|-|p_i|)(\sigma(p_{i'})-\sigma(p_i))$ by Lemma \ref{lmm:sum cost F form}. This is $\le0$ iff $\sigma(p_i)\ge\sigma(p_{i'})$. When $|p_i|=|p_{i'}|$, the difference above is 0 so we reorder the weights of $\sigma$ on $p_i,p_{i'}$ to make the property in the lemma statement hold.
\qed\end{proof}

\begin{proof}[Proof of Lemma \ref{lmm:assume iid sum}]
Let $\sigma$. Take the distribution $X=\p{\frac{\sum_{j\in[n]}\sigma_{ij}}{n}}_{i\in[m]}$, and $\mu=X^n$. Then for every path $p_i$,
\begin{align*}
    \sigma(p_i)\eqdef1-\prod_j(1-\sigma_{ij})&\ge1-\p{1-\frac{\sum_{j\in[n]}\sigma_{ij}}{n}}^n=\mu(p_i)\\
    \iff\prod_j(1-\sigma_{ij})&\le\p{1-\frac{\sum_{j\in[n]}\sigma_{ij}}{n}}^n\\
    \iff\sqrt[n]{\prod_j(1-\sigma_{ij})}&\le1-\frac{\sum_{j\in[n]}\sigma_{ij}}{n}
\end{align*}
The last inequality follows from the AMGM inequality (with equality iff $\{\sigma_{ij}\}_j$ are all equal). Therefore by monotonicity with respect to the probabilities $\sigma(p_i)$, $F_n(\sigma,\alpha)\ge F_n(\mu,\alpha)$.
\qed\end{proof}

\begin{proof}[Proof of Theorem \ref{thm:min F disjoint paths}]
For every $X\in\convhull(\mathcal{U}_1)$, by concavity of $F^1_n$ there holds $F^1_n(X)\ge\min_{X\in\mathcal{U}_1}F^1_n\p{X}$. Then, by the lemmas above,
$$\min_{X\in\mathcal{U}_1}F^1_n\p{X}=\min_{\sigma\in\mathcal{U}_n}F_n(\sigma,\alpha)=\min_{\sigma\in(\Delta(P_{st}))^n}F_n(\sigma,\alpha)$$
$r=r_U$ follows by definition.
\qed\end{proof}

\section{Omitted Details in Section \ref{sec:max}}\label{sec:max details}

\subsection{Illustrating example}\label{subapp:max eg}
\begin{example}\label{ex:max cost not min at u}
Consider the graph in Figure \ref{fig2}:
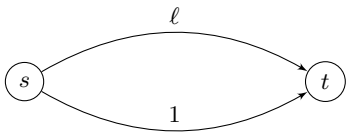
\begin{figure}[H]
\centering
\begin{tikzpicture}
    \tikzset{vertex/.style = {shape=circle,draw,minimum size=1em}}
    \tikzset{edge/.style = {->,> = latex'}}
    \node[vertex] (s) at (0,0) {$s$};
    \node[vertex] (t) at (4,0) {$t$};
    \draw[edge] [bend left] (s) to node[above]{$\ell$} (t);
    \draw[edge] [bend right] (s) to node[above]{$1$} (t);
\end{tikzpicture}
\caption{Illustrating example of Theorem \ref{thm:max uniform is almost never optimal}}
\label{fig2}
\end{figure}
where $1<\ell\in\N$ are such that the 2 edges represent paths of the corresponding lengths. Let $n>k=1$ and $\alpha\ge\ell-1$. Lemma \ref{lmm:assume iid max} will show that we can assume for minimization that the agents' strategies are iid. Then we can identify their iid strategies with $x\in[0,1]$, where $x$ is the probability that a agent chooses the short (bottom) path. Similarly, because the paths are disjoint, we identify the interceptor's strategy with $y\in[0,1]$. Now we can write in short $C(x,y,\alpha)$ for the corresponding cost. We show that for every such $\alpha$, $x^*:=\argmin_x\max_yC(x,y,\alpha)\in\left(2^{-1},2^{-1/n}\right]$, such that $x^*=2^{-1/n}$ for $\alpha=\ell-1$, and $x^*$ strictly decreases as $\alpha$ increases. In particular, $\min_{\sigma\in(\Delta(P_{st}))^n}\max_\tau C(\sigma,\tau,\alpha)=:C(x^*,y^*,\alpha)<\min_{\sigma\in\mathcal{U}_n}\max_\tau C(\sigma,\tau,\alpha)$.

For $x\in[0,1]$, we have
\begin{align*}
    f_1(x,\alpha)=C(x,1,\alpha)&=(1-x)^n\ell+(1-(1-x)^n)(1+\alpha)\\
    f_2(x,\alpha)=C(x,0,\alpha)&=x^n+(1-x^n)(\ell+\alpha)
\end{align*}
$f_1$ is increasing in $x$, and $f_2$ is decreasing in $x$, so they intersect at at most one $x$. Also, $\max_yC(x,y,\alpha)=\max(f_1(x,\alpha),f_2(x,\alpha))$. So $\min_x\max(f_1(x,\alpha),f_2(x,\alpha))$ is obtained when $f_1(x,\alpha)=f_2(x,\alpha)$. At $x=\frac{1}{2}$ we have
\begin{align*}
    f_1\p{\frac{1}{2},\alpha}=2^{-n}\ell+(1-2^{-n})(1+\alpha)<2^{-n}+(1-2^{-n})(\ell+\alpha)=f_2\p{\frac{1}{2},\alpha}
\end{align*}
iff $(\ell-1)(1-2^{-n+1})>0$, which is the case here. Hence $f_1,f_2$ intersect after $x=\frac{1}{2}$. At $x=1$ we get
\begin{align*}
    f_1(1,\alpha)=1+\alpha>1=f_2(1,\alpha)
\end{align*}
so $f_1,f_2$ intersect in $\p{\frac{1}{2},1}$, therefore $\min_x\max(f_1(x,\alpha),f_2(x,\alpha))\in\p{\frac{1}{2},1}$. The exact range $\left(2^{-1},2^{-1/n}\right]$ is obtained from solving $f_1(x,\alpha)=f_2(x,\alpha)$ for $\alpha\ge\ell-1$. This finishes the proof.
\end{example}

\subsection{More definitions}
We need more definitions to prove the results of Section \ref{sec:max}. For $\alpha$ and $d\in E^k$, let $c_{d,\alpha}:E\to\R$ by defined by $c_{d,\alpha}(e)=1+\alpha\abs{\s{i\in[k]\mid d_i=e}}$ i.e., the cost to unlock edge $e$ given $\alpha,d$.
When $\alpha$ is clear from context, we will write $c_d$. We abuse notation and for $p\in P_{st}$ write $c_d(p)=\sum_{e\in p}c_d(p)$, and $c_d(a)=(c_d(a_1),...,c_d(a_n))$ for $a\in (P_{st})^n$. Let $c_0=c_{d,0}$ denote the default cost function, where there are no adversaries. Let $P_{st}^d=\{p^d_i\}_{i\in[m]}$ denote the paths $P_{st}$ where the paths indices are increasing with respect to $c_d$. Let $I_d:[m]\to[m]$ be the corresponding mapping of indices from $\{p^d_i\}_{i\in[m]}$ to $\{p_i\}_{i\in[m]}$.
Let $k^d_i$ be the number of interceptors who picked edges on path $p^d_i$, and define $k^d_0=0$. By definition $c_d(p^d_i)=|p_{I_d(i)}|+k^d_i\alpha$.\footnote{Here we won't consider powers of $k$ as an integer.} If $k=1$ we simplify notation by writing $c_i,k^i_{i'},I_i,P_{st}^i=\{p^i_{i'}\}_{i'\in[m]}$.

\subsection{Expressions for MAX cost}

Here we derive some expressions for $f_d(\sigma,\alpha)$ that will help in future proofs. For $d\in E^k$, let $c_d,k^d_i,p^d_i$ as defined in Section \ref{sec:max}. Using the equality $\E(X)=\sum_x\Pr(X\ge x)$ for a random variable $X$, we derive
\begin{align}\label{eq:fd max cost}
    f_d(\sigma,\alpha)=&\sum_{i\in[m]}(c_d(p^d_i)-c_d(p^d_{i-1}))\Pr_{a\sim\sigma}\p{\norm{c_d(a)}_\infty\ge c_d(p^d_i)}\\
\notag
    =&\sum_{i\in[m]}(c_d(p^d_i)-c_d(p^d_{i-1}))\p{1-\prod_{j\in[n]}\p{\sum_{i'\le i-1}\sigma_j(p^d_{i'})}}\\
\notag
    =&\sum_{i\in[m]}c_d(p^d_i)\p{\Pr_{a\sim\sigma}\p{\norm{c_d(a)}_\infty\ge c_d(p^d_i)}-\Pr_{a\sim\sigma}\p{\norm{c_d(a)}_\infty>c_d(p^d_i)}}\\
\notag
    =&\sum_{i\in[m]}c_d(p^d_i)\p{\prod_{j\in[n]}\p{\sum_{i'\le i}\sigma_j(p^d_{i'})}-\prod_{j\in[n]}\p{\sum_{i'\le i-1}\sigma_j(p^d_{i'})}}
\end{align}

We write Equation \ref{eq:fd max cost} more explicitly when $k=1$. For every $i<m$:
\begin{align}\label{eq:fi max cost disjoint i<m}
    f_i(\sigma,\alpha)=&\sum_{i'\in[m]}(c_i(p^i_{i'})-c_i(p^i_{i'-1}))\Pr_{a\sim\sigma}\p{\norm{c_i(a)}_\infty\ge c_i(p^i_{i'})}\\
\notag
    =&\sum_{i'<i}(|p_{i'}|-|p_{i'-1}|)\p{1-\prod_{j\in[n]}\p{\sum_{i''<i'}\sigma_j(p_{i''})}}~+\\
\notag
    &(|p_{i+1}|-|p_{i-1}|)\p{1-\prod_{j\in[n]}\p{\sum_{i''<i}\sigma_j(p_{i''})}}~+\\
\notag
    &\sum_{i'>i+1}(|p_{i'}|-|p_{i'-1}|)\p{1-\prod_{j\in[n]}\p{\p{\sum_{i''<i'}\sigma_j(p_{i''})}-\sigma_j(p_i)}}~+\\
\notag
    &(|p_i|+\alpha-|p_m|)\p{1-\prod_{j\in[n]}\p{1-\sigma_j(p_i)}}
\end{align}
For $i=m$:
\begin{align}\label{eq:fi max cost disjoint i=m}
    f_m(\sigma,\alpha)=&\sum_{i'\in[m]}(c_m(p^m_{i'})-c_m(p^m_{i'-1}))\Pr_{a\sim\sigma}\p{\norm{c_m(a)}_\infty\ge c_m(p^m_{i'})}\\
\notag
    =&\sum_{i'<m}(|p_{i'}|-|p_{i'-1}|)\p{1-\prod_{j\in[n]}\p{\sum_{i''<i'}\sigma_j(p_{i''})}}~+\\
\notag
    &(|p_m|+\alpha-|p_{m-1}|)\p{1-\prod_{j\in[n]}\p{1-\sigma_j(p_m)}}
\end{align}
So $F_n(\sigma,\alpha)=\max_{i\in[m]}f_i(\sigma,\alpha)$.

\subsection{Omitted proofs}\label{subapp:max proofs}
\begin{proof}[Proof of Lemma \ref{lmm:max Fn bounds}]
For the lower bound, take $\sigma\in(\Delta(P_{st}))^n$. The interceptors can respond by having all pick an edge $e\in\argmax_{e'}\sigma^E(e')$ (the same $e$). 
Since $k\alpha>\abs\pmax-\abs\pmin$, with probability $\beta(\sigma)$ over $a\sim\sigma$ the cost $C(a,e^k,\alpha)$ will have the form $|p|+k\alpha$ for $p\ni e$. So
the cost is $C(\sigma,e^k,\alpha)\ge\E_{a\sim\sigma}(\norm{c_d(a)}_\infty-1_{e\in a}\cdot k\alpha)+k\alpha\beta(\sigma)$. Note that $\norm{c_d(a)}_\infty-1_{e\in a}\cdot k\alpha\in[\abs\pmin,\abs\pmax]$ for every $a$, because this is the length of the longest path in $a$ according to $c_d$, with subtracting the additional $k\alpha$ term, if $e$ is contained in that path.
Then, the cost to the agents is $C(\sigma,e^k,\alpha)\ge\abs\pmin+k\alpha\beta(\sigma)$. Therefore $F_n(\sigma,\alpha)\ge C(\sigma,e^k,\alpha)\ge\abs\pmin+k\alpha\beta(\sigma)$.

For the upper bound: as MAX can be written as a norm, we can use the triangle inequality:
\begin{align*}
    C(a,d,\alpha)&=\norm{c_d(a)}_\infty\le\norm{c_0(a)}_\infty+\norm{c_d(a)-c_0(a)}_\infty\\
    &\le\norm{c_0(a)}_\infty+\alpha\sum_{j\le k}1\s{d_j\in\bigcup a}=(*)
\end{align*}
The norm in $(*)$ is bounded by $\abs\pmax$. In the proof of Lemma \ref{lmm:sum cost F form} we show that the sum in $(*)$ is upper bounded by $k\beta(\sigma)$. Therefore,
$(*)\le\abs\pmax+k\alpha\beta(\sigma)$.
So $F_n(\sigma,\alpha)\le\abs\pmax+k\alpha\beta(\sigma)$.
\qed\end{proof}

\begin{proof}[Proof of Lemma \ref{lmm:assume iid max}]
Let $X(p^d_i)=\p{\frac{\sum_{j\in[n]}\sigma_{j}(p^d_i)}{n}}_{i\in[m]}$, and $\mu=X^n$. Then for every $i$, by the AMGM inequality:
\begin{align*}
    \prod_{j\in[n]}\p{\sum_{i'\le i-1}\sigma_j(p^d_{i'})}\le\p{\frac{1}{n}\sum_{j\in[n]}\sum_{i'\le i-1}\sigma_j(p^d_{i'})}^n=\p{\sum_{i'\le i-1}X(p^d_{i'})}^n
\end{align*}
Therefore by Equation \ref{eq:fd max cost}, for every $d\in E^k$:
\begin{align*}
    f_d(\sigma,\alpha)\ge\sum_{i\in[m]}(c_d(p^d_i)-c_d(p^d_{i-1}))\p{1-\p{\sum_{i'\le i-1}X(p^d_{i'})}^n}\eqdef f_d(\mu,\alpha)
\end{align*}
So $F_n(\sigma,\alpha)\ge F_n(\mu,\alpha)$, proving the lemma.
\qed\end{proof}

\begin{proof}[Proof of Lemma \ref{lmm:max Fn(U)=f_m^k(U)}]
Let $\sigma=U_{P_{st}}^n$, and let $d\in E^k$. Consider the resulting cost $c_d$
and the paths $P_{st}^d$.
Applying Equation \ref{eq:fd max cost}:
\begin{align}\label{eq:lmm 6.4 proof}
\notag
    f_d\p{U_{P_{st}}^n,\alpha}&=\sum_{i=1}^m(c_d(p^d_i)-c_d(p^d_{i-1}))\p{1-\p{\frac{i-1}{m}}^n}\\
\notag
    &=\sum_{i=1}^m(|p_{I_d(i)}|-|p_{I_d(i-1)}|+(k^d_i-k^d_{i-1})\alpha)\p{1-\p{\frac{i-1}{m}}^n}\\
    &=\alpha\sum_{i=1}^mk^d_i\frac{i^n-(i-1)^n}{m^n}+\sum_{i=1}^m|p_{I_d(i)}|\frac{i^n-(i-1)^n}{m^n}
\end{align}
We prove the lemma by showing that (\ref{eq:lmm 6.4 proof}) attains its strict maximum over $d\in E^k$ when $d=(m,...,m)$. (\ref{eq:lmm 6.4 proof}) is an affine function in $\alpha$. $m,n>1$ so $\frac{i^n-(i-1)^n}{m^n}$ is strictly increasing in $i$. Therefore
\begin{align*}
    \alpha\sum_{i=1}^mk^d_i\frac{i^n-(i-1)^n}{m^n}\le\alpha k\frac{m^n-(m-1)^n}{m^n}=\frac{\emn{m}{n}}{m}k\alpha
\end{align*}
This maximum with respect to $d\in E^k$ is obtained only if $k^d_i=k\cdot1_{i=m}$.
For $d=(m,...,m)$, we indeed get $k^d_i=k\cdot1_{i=m}$: because $\alpha>\frac{|p_m|-|p_1|}{k}$, so for that $d$ the unique longest path will be $p^d_m$ with length $|p_m|+k\alpha$. On the other hand, if $d\ne(m,...,m)$, then $k^d_m<k$ because the interceptors did not all pick edges from the same path. Therefore, the coefficient of $\alpha$ in (\ref{eq:lmm 6.4 proof}) is strictly maximal iff $d=(m,...,m)$. As for $\sum_{i=1}^m|p_{I_d(i)}|\frac{i^n-(i-1)^n}{m^n}$, because $|p_1|\le...\le|p_m|$, we argue that this term is maximal when $I_d(i)=i$, up to reordering of paths of identical lengths: Suppose that $I_d(i)\ne i$ for some $i$. Then $|p_{I_d(j)}|>|p_{I_d(j+1)}|$ for some $j$ (after reordering paths of identical lengths). Then, defining $I'$ by swapping $I_d(j),I_d(j+1)$ yields
\begin{align*}
    &\sum_{i=1}^m|p_{I'(i)}|\frac{i^n-(i-1)^n}{m^n}-\sum_{i=1}^m|p_{I_d(i)}|\frac{i^n-(i-1)^n}{m^n}\\
    =&(|p_{I'(j)}|-|p_{I_d(j)}|)\frac{j^n-(j-1)^n}{m^n}+(|p_{I'(j+1)}|-|p_{I_d(j+1)}|)\frac{(j+1)^n-j^n}{m^n}\\
    =&(|p_{I_d(j+1)}|-|p_{I_d(j)}|)\frac{j^n-(j-1)^n}{m^n}+(|p_{I_d(j)}|-|p_{I_d(j+1)}|)\frac{(j+1)^n-j^n}{m^n}\\
    =&(|p_{I_d(j)}|-|p_{I_d(j+1)}|)\p{\frac{(j+1)^n-j^n}{m^n}-\frac{j^n-(j-1)^n}{m^n}}>0
\end{align*}
This proves the maximality of $I_d$ that satisfies $I_d(i)=i$ for all $i$.

So, since $I_{(m,...,m)}(i)\eqdef i$ for all $i$, we have that (\ref{eq:lmm 6.4 proof}), i.e., $f_d\p{U_{P_{st}}^n,\alpha}$, is strictly maximal for $d=(m,...,m)$, i.e., $F_n\p{U_{P_{st}}^n,\alpha}\eqdef f_{(m,...,m)}\p{U_{P_{st}}^n,\alpha}>\max_{d\ne(m,...,m)}f_d\p{U_{P_{st}}^n,\alpha}$.
\qed\end{proof}

\begin{proof}[Proof of Lemma \ref{lmm:max uniforms in equal lengths}]
Consider $G_i$ for $i\le m_1$. Then to apply Lemma \ref{lmm:max Fn(U)=f_m^k(U)} to it the $\alpha$ inequality constraint becomes $\alpha>0$ which holds by definition of $\alpha$. So
\begin{align*}
    F_n(U_{P_i}^n,\alpha)&=f_{(i,...,i)}(U_{P_i}^n,\alpha)\\
    &=\alpha k\sum_{i'=1}^i1_{i'=i}\frac{(i')^n-(i'-1)^n}{i^n}+
    \sum_{i'=1}^i|p_{i'}|\frac{(i')^n-(i'-1)^n}{i^n}\\
    &=|p_1|+\frac{\emn{i}{n}}{i}k\alpha
\end{align*}
The first equality is due to Lemma \ref{lmm:max Fn(U)=f_m^k(U)}. The second is by Equation \ref{eq:fd max cost}, and the third is because the paths of $G_{m_1}$ have identical lengths. The second part of the lemma follows by that $\frac{\emn{i}{n}}{i}$ is decreasing in $i$.
\qed\end{proof}

\begin{proof}[Proof of Lemma \ref{lmm:bound for that}]
Consider $\sigma=U_{P_{m_1+1}}^n$ and $G_{m_1+1}$.
We show that the bound on $\alpha$ in the lemma statement implies that $\alpha>\frac{|p_{m_1+1}|-|p_1|}{k}$, namely that $\frac{E'_{m_1+1}}{k(E'_{m_1}-E'_{m_1+1})}\ge\frac{1}{k}$.
Canceling out $\frac{1}{k}$ from both sides:
\begin{align*}
    \frac{E'_{m_1+1}}{E'_{m_1}-E'_{m_1+1}}\ge1&\iff2E'_{m_1+1}\ge E'_{m_1}\\
    &\iff2\p{1-\frac{1}{m_1+1}}^n-\p{1-\frac{1}{m_1}}^n\le1
\end{align*}
When $n=2$, the latter inequality simplifies to $2\p{1-\frac{1}{m_1+1}}^2-\p{1-\frac{1}{m_1}}^2\le\frac12<1$. We show that $2\p{1-\frac{1}{m_1+1}}^n-\p{1-\frac{1}{m_1}}^n$ decreases as $n$ increases, to conclude that it is at most 1 for every $n>1$.
\begin{align*}
    &2\p{1-\frac{1}{m_1+1}}^{n+1}-\p{1-\frac{1}{m_1}}^{n+1}-2\p{1-\frac{1}{m_1+1}}^n+\p{1-\frac{1}{m_1}}^n\\
    =&\frac{1}{m_1}\p{1-\frac{1}{m_1}}^n-\frac{2}{m_1+1}\p{1-\frac{1}{m_1+1}}^n\le0\\
    \iff~&\frac{m_1+1}{m_1}\frac{\p{1-\frac{1}{m_1}}^n}{\p{1-\frac{1}{m_1+1}}^n}\le2\\
    \iff~&\frac{m_1+1}{m_1}\p{\frac{m_1^2-1}{m_1^2}}^n\le2
\end{align*}
The latter holds because $\frac{m_1+1}{m_1}\le2$ and $\p{\frac{m_1^2-1}{m_1^2}}^n\le1$.

Therefore, we can use Lemma \ref{lmm:max Fn(U)=f_m^k(U)} to obtain
\begin{align*}
    F_n(U_{P_{m_1+1}}^n,\alpha)&=\alpha\sum_{i=1}^{m_1+1}k\cdot1_{i=m_1+1}\frac{i^n-(i-1)^n}{(m_1+1)^n}+\sum_{i=1}^{m_1+1}|p_i|\frac{i^n-(i-1)^n}{(m_1+1)^n}\\
    &=|p_1|+(|p_{m_1+1}|-|p_1|)E'_{m_1+1}+E'_{m_1+1}k\alpha\\
    &<|p_1|+E'_{m_1}k\alpha
\end{align*}
The second equality uses $\frac{(m_1+1)^n-m_1^n}{(m_1+1)^n}=E'_{m_1+1}$ and $\sum_{i=1}^{m_1+1}\frac{i^n-(i-1)^n}{(m_1+1)^n}=1$, and the inequality uses the assumption on $\alpha$. The lemma now follows by definition.
\qed\end{proof}

\begin{proof}[Proof of Theorem \ref{thm:max uniform is almost never optimal}]
By Lemma \ref{lmm:max Fn(U)=f_m^k(U)}, 
\begin{align*}
    F_n(U_{P_{st}}^n,\alpha)=f_{(m,...,m)}(U_{P_{st}}^n,\alpha)>\max_{d\ne(m,...,m)}f_d(U_{P_{st}}^n,\alpha)
\end{align*}
For a small $\e>0$, let $\nu$ such that $\nu(p_1)=\frac{1}{m}+\e,\nu(p_2)=...=\nu(p_{m-1})=\frac{1}{m},\nu(p_m)=\frac{1}{m}-\e$. Observe that for every $d$, $f_d$ is continuous, and $|E^k|<\infty$. Hence, when $\e$ is sufficiently small, we still have $F_n(\nu^n,\alpha)= f_{(m,...,m)}\p{\nu^n,\alpha}>\max_{d\ne(m,...,m)}f_d\p{\nu^n,\alpha}$.

$f_{(m,...,m)}$ is the cost to the agents when the interceptors pick $k$ edges of $p_m$. $\nu$ differs from $U_{P_{st}}$ by shifting $\e$ weight from picking $p_m$ to picking $p_1$, and $|p_1|<|p_m|$ by our assumption. So, when we go from $U_{P_{st}}^n$ to $\nu^n$, we get $f_{(m,...,m)}(U_{P_{st}}^n,\alpha)>f_{(m,...,m)}(\nu^n,\alpha)$, hence $F_n(U_{P_{st}}^n,\alpha)>F_n(\nu^n,\alpha)$, namely $U_{P_{st}}^n$ is not optimal for the agents.

Let $m_1<i<m$, and consider $G_i,U_{P_i}^n$.
$|p_i|\le|p_m|$, so $\alpha>\frac{|p_i|-|p_1|}{k}$. Now there are 2 cases.
In the first case, $\min_{\sigma\in(\Delta(P_i))^n}F_n(\sigma,\alpha)=F_n(U_{P_{m_1}}^n,\alpha)=|p_1|+\frac{\emn{m_1}{n}}{m_1}k\alpha$. Then, in particular
\begin{align*}
    \min_{\sigma\in\mathcal{U}_n\cap(\Delta(P_i))^n}F_n(\sigma,\alpha)\ge|p_1|+\frac{\emn{m_1}{n}}{m_1}k\alpha>\min_{\sigma\in(\Delta(P_{st}))^n}F_n(\sigma,\alpha)
\end{align*}
by the theorem statement.
In the second case, $\min_{\sigma\in(\Delta(P_i))^n}F_n(\sigma,\alpha)<|p_1|+\frac{\emn{m_1}{n}}{m_1}k\alpha$ holds. Then, we apply the argument above to the graph $G_i$ to get $F_n(U_{P_i}^n,\alpha)>\min_{\sigma\in(\Delta(P_{st}))^n}F_n(\sigma,\alpha)$.
In both cases, we conclude that $U_{P_i}^n$ is not optimal for the agents.

Now consider $G_{m_1}$. Then by Lemmas \ref{lmm:max uniforms in equal lengths} and \ref{lmm:bound for that}, we again get that $U_{P_i}^n$ is not optimal for the agents, now for all $i\le m_1$.

In total, $\min_{\sigma\in(\Delta(P_{st}))^n}F_n(\sigma,\alpha)<\min_{\sigma\in\mathcal{U}_n}F_n(\sigma,\alpha)$ by the definition of $\mathcal{U}_n$.
\qed\end{proof}

\begin{proof}[Proof of Theorem \ref{thm:max uniform equalengths is optimal}]
Using Lemma \ref{lmm:assume iid max}, we can wlog consider only $\sigma=X^n$. For $d\in E^k$, by Equation \ref{eq:fd max cost}:
\begin{align*}
    f_d(\sigma,\alpha)&=\sum_{i\in[m]}c_d(p^d_i)\p{\p{\sum_{i'\le i}X(p^d_{i'})}^n-\p{\sum_{i'\le i-1}X(p^d_{i'})}^n}=(*)
\end{align*}
Let $y_i=\sum_{i'\le i}X(p^d_{i'})$, let $\delta(X)=\max_{i\in[m]}(y_i^n-y_{i-1}^n)$ and let $i^*\in\argmax_i(y_i^n-y_{i-1}^n)$. Then
\begin{align*}
    (*)&=\alpha\sum_{i\in[m]}k^d_i\p{y_i^n-y_{i-1}^n}+\sum_{i\in[m]}|p_{I_d(i)}|\p{y_i^n-y_{i-1}^n}\\
    &=\alpha\sum_{i\in[m]}k^d_i\p{y_i^n-y_{i-1}^n}+\sum_{i\in[m]}|p_1|\p{y_i^n-y_{i-1}^n}
\end{align*}
as $|p_1|=...=|p_m|$. So the above is maximal when $k^d_i=k\cdot1_{i=i^*}$, and the cost to the agents is
\begin{align*}
    F_n(\sigma,\alpha)=f_{(i^*,...,i^*)}(\sigma,\alpha)=|p_1|+\delta(X)k\alpha
\end{align*}
If $X=U_{P_{st}}$, then $\delta(X)=\frac{\emn{m}{n}}{m}$, $i^*=m$ and we get
$$F_n(U_{P_{st}}^n,\alpha)=f_{(m,...,m)}(U_{P_{st}}^n,\alpha)=|p_1|+\frac{\emn{m}{n}}{m}k\alpha$$
Now suppose otherwise. The paths have identical lengths, so wlog we reorder them such that $X(p^d_m)>\frac{1}{m}$. Then $\delta(X)\ge y_m^n-y_{m-1}^n\eqdef1-(1-X(p^d_m))^n>\frac{\emn{m}{n}}{m}$. Therefore $F_n(\sigma,\alpha)>|p_1|+\frac{\emn{m}{n}}{m}k\alpha$. Hence $\min_\sigma F_n(\sigma,\alpha)=F_n(U_{P_{st}}^n,\alpha)=|p_1|+\frac{\emn{m}{n}}{m}k\alpha$, completing the proof.
\qed\end{proof}

\begin{proof}[Proof of Lemma \ref{lmm:assume decreasing max}]
Let $\sigma$, and wlog it's iid, namely $\sigma=X^n$ for $X\in\Delta(P_{st})$.
Let $i<m$ such that $|p_i|<|p_{i+1}|$ and $\sigma(p_i)<\sigma(p_{i+1})$.
Let $X'$ result from $X$ by swapping $X_i,X_{i+1}$, and let $\sigma'=(X')^n$. Then $\sigma'$ also results from $\sigma$ by swapping $\sigma(p_i),\sigma(p_{i+1})$.
We split into two cases.

\paragraph{Case $i<m-1$:} Let $i'\in[m]$. First, suppose $i'\ne i$. If $i'\ne i+1$ too then $f_{i'}(\sigma')<f_{i'}(\sigma)$. This is because $\sigma'$ results from $\sigma$ by swapping the probabilities of $p_i,p_{i+1}$ being picked, and $\sigma(p_i)<\sigma(p_{i+1})$, and $c_{i'}(p_i)=|p_i|<|p_{i+1}|=c_{i'}(p_{i+1})$. If $i'=i+1$ then this inequality holds as well, because $c_{i+1}(p_i)=|p_i|<|p_{i+1}|+\alpha=c_{i+1}(p_{i+1})$, so the argument above holds here too. Therefore $i'=i$ is left. We show that $f_i(\sigma')\le f_{i+1}(\sigma)$. Using Equation \ref{eq:fi max cost disjoint i<m}:
\begin{align*}
    &f_{i+1}(\sigma)-f_i(\sigma')=(|p_i|-|p_{i-1}|)\p{1-\p{\sum_{i'<i}X(p_{i'})}^n}-(|p_{i+1}|-|p_{i-1}|)\p{1-\p{\sum_{i'<i}X(p_{i'})}^n}~+\\
    &(|p_{i+2}|-|p_i|)\p{1-\p{\sum_{i'<i+1}X(p_{i'})}^n}-(|p_{i+2}|-|p_{i+1}|)\p{1-\p{\sum_{i'<i+1}X(p_{i'})}^n}~+\\
    &(|p_{i+1}|+\alpha-|p_m|)\p{1-\p{1-X(p_{i+1})}^n}-(|p_i|+\alpha-|p_m|)\p{1-\p{1-X(p_{i+1})}^n}
\end{align*}
Let $x=\sum_{i'<i}X(p_{i'}),y=X(p_i),z=X(p_{i+1})$. Then we get
\begin{align*}
    f_{i+1}(\sigma)-f_i(\sigma')=&(|p_i|-|p_{i-1}|)\p{1-x^n}-(|p_{i+1}|-|p_{i-1}|)\p{1-x^n}~+\\
    &(|p_{i+2}|-|p_i|)\p{1-\p{x+y}^n}-(|p_{i+2}|-|p_{i+1}|)\p{1-\p{x+y}^n}~+\\
    &(|p_{i+1}|+\alpha-|p_m|)\p{1-\p{1-z}^n}-(|p_i|+\alpha-|p_m|)\p{1-\p{1-z}^n}\\
    =&(|p_{i+1}|-|p_i|)\p{1-(1-z)^n-(x+y)^n+x^n}
\end{align*}
$|p_{i+1}|>|p_i|$. Also, there holds $y<z$ and $x+y+z\le1$, so $x+y\le1$ and $x\le1-z$. Then
\begin{align*}
    1-(1-z)^n=z\sum_{i=0}^n(1-z)^i\ge y\sum_{i=0}^nx^i(x+y)^{n-i}=(x+y)^n-x^n
\end{align*}
Hence $f_{i+1}(\sigma)-f_i(\sigma')\ge0$. Therefore by definition $F_n(\sigma',\alpha)\le F_n(\sigma,\alpha)$.

\paragraph{Case $i=m-1$:} Note that the same proof as in the other case proves that $f_{i'}(\sigma')<f_{i'}(\sigma)$ for $i'\ne m-1$. Case $i'=m-1$ is left. In the same way as in the previous case, using Equations \ref{eq:fi max cost disjoint i<m} and \ref{eq:fi max cost disjoint i=m}:
\begin{align*}
    f_m(\sigma)&-f_{m-1}(\sigma')\\
    &=(|p_{m-1}|-|p_{m-2}|)\p{1-\p{\sum_{i'<m-1}X(p_{i'})}^n}-
    (|p_m|-|p_{m-2}|)\p{1-\p{\sum_{i'<m-1}X(p_{i'})}^n}~+\\
    &(|p_m|+\alpha-|p_{m-1}|)\p{1-(1-X(p_m))^n}-
    (|p_{m-1}|+\alpha-|p_m|)\p{1-(1-X(p_m))^n}\\
    &=-(|p_m|-|p_{m-1}|)\p{1-\p{\sum_{i'<m-1}X(p_{i'})}^n}+
    2(|p_m|-|p_{m-1}|)\p{1-(1-X(p_m))^n}\\
    &=(|p_m|-|p_{m-1}|)\p{2\p{1-(1-X(p_m))^n}-\p{1-\p{\sum_{i'<m-1}X(p_{i'})}^n}}\\
    &=(|p_m|-|p_{m-1}|)\p{1+\p{\sum_{i'<m-1}X(p_{i'})}^n-2(1-X(p_m))^n}
\end{align*}
Let $x=\sum_{i'<m-1}X(p_{i'}),y=X(p_{m-1}),z=X(p_m)$. Because $x+y+z=1$, this becomes
\begin{align*}
    (|p_m|-|p_{m-1}|)\p{1-(1-z)^n-(x+y)^n+x^n}
\end{align*}
and the proof of the previous case completes the proof of this case, establishing the lemma.
\qed\end{proof}

\end{document}